\newtheorem{theorem}{Theorem}[section]
\newtheorem{lemma}[theorem]{Lemma}
\theoremstyle{definition}
\newtheorem{definition}[theorem]{Definition}
\newtheorem{example}[theorem]{Example}
\theoremstyle{remark}
\newtheorem{remark}[theorem]{Remark}
\numberwithin{equation}{section}
\newtheorem{corollary}{Corollary}
\newcommand{\beq}{\begin{equation}}
	\newcommand{\eeq}{\end{equation}}
\newcommand{\beqnn}{\begin{equation*}}
	\newcommand{\eeqnn}{\end{equation*}}
\newcommand{\beqy}{\begin{eqnarray}}
	\newcommand{\eeqy}{\end{eqnarray}}
\newcommand{\beqynn}{\begin{eqnarray*}}
	\newcommand{\eeqynn}{\end{eqnarray*}}
\newcommand{\bit}{\begin{itemize}}
	\newcommand{\eit}{\end{itemize}}
\newcommand{\ben}{\begin{enumerate}}
	\newcommand{\een}{\end{enumerate}}
\newcommand{\bex}{\begin{example}}
	\newcommand{\eex}{\end{example}}
\newcommand{\norm}[1]{\left\lVert#1\right\rVert}
\newcommand{\abs}[1]{\left| #1 \right|}
\newcommand{\A}{\boldsymbol{A}}
\newcommand{\F}{\boldsymbol{F}}
\newcommand{\G}{\boldsymbol{G}}
\newcommand{\I}{\boldsymbol{I}}
\newcommand{\J}{\boldsymbol{J}}
\newcommand{\M}{\boldsymbol{M}}
\newcommand{\N}{\boldsymbol{N}}
\newcommand{\R}{\mathbb{R}}
\newcommand{\Y}{{\boldsymbol{Y}}}
\newcommand{\h}{\boldsymbol{h}}
\newcommand{\p}{\boldsymbol{p}}
\newcommand{\q}{\boldsymbol{q}}
\newcommand{\TT}{\top}
\renewcommand{\u}{\boldsymbol{u}}
\renewcommand{\v}{\boldsymbol{v}}
\newcommand{\y}{{\boldsymbol{y}}}
\newcommand{\x}{\bm{x}}
\newcommand{\z}{\bm{z}}
\newcommand{\1}{{\boldsymbol{1}}}
\newcommand{\mbbR}{{\mathbb{R}}}
\newcommand{\mbbE}{{\mathbb{E}}}
\newcommand{\mbbP}{{\mathbb{P}}}
\newcommand{\mbbS}{{\mathbb{S}}}
\newcommand{\mcalS}{\mathcal{S}}
\newcommand{\mcalT}{\mathcal{T}}
\newcommand{\mcalX}{\mathcal{X}}
\newcommand{\mcalN}{\mathcal{N}}
\newcommand{\mcalR}{\mathcal{R}}
\numberwithin{equation}{section}
\begin{document}
\title[Sparse Signal Recovery From Quadratic Systems]{Sparse Signal Recovery From Quadratic Systems with Full-Rank Matrices}

%    Information for first author
\author{Jinming~Wen}
%    Address of record for the research reported here
\address{College of Information Science and Technology, Jinan University, Guangzhou, China}
%    Current address

%\curraddr{Department of Mathematics and Statistics,
%Case Western Reserve University, Cleveland, Ohio 43403}

\email{jinming.wen@mail.mcgill.ca}
%    \thanks will become a 1st page footnote.
\thanks{The first author was supported by NSFC Grant Nos. 12271215, 12326378 and 11871248.}

%    Information for second author
\author{Yi~Hu}
\address{College of Information Science and Technology, Jinan University, Guangzhou, China}
\email{huyijnu@stu2023.jnu.edu.cn}
%\thanks{Support information for the second author.}

\author{Meng~Huang}
%    Address of record for the research reported here
\address{School of Mathematical Sciences, Beihang University, Beijing, 100191, China.}
%    Current address

%\curraddr{Department of Mathematics and Statistics,
	%Case Western Reserve University, Cleveland, Ohio 43403}

\email{menghuang@buaa.edu.cn}
%    \thanks will become a 1st page footnote.
\thanks{The third author was supported by NSFC Grant No. 12201022.
}
\thanks{The third author is the corresponding author.
}
%    General info
\subjclass[2020]{Primary 65F10, 15A29, 94A12}

\date{January 1, 2001 and, in revised form, June 22, 2001.}

%\dedicatory{This paper is dedicated to our advisors.}

\keywords{Gauss-Newton method, sparse signals, the number of measurements, quadratic system.}

\begin{abstract}
In signal processing and data recovery, reconstructing a signal from quadratic measurements poses a significant challenge, particularly in high-dimensional settings where measurements $m$ is far less than the signal dimension $n$ (i.e., $m \ll n$). This paper addresses this problem by exploiting signal sparsity. Using tools from algebraic geometry, we derive theoretical recovery guarantees for sparse quadratic systems, showing  that  $m\ge 2s$ (real case) and $m\ge 4s-2$ (complex case) generic measurements suffice to  uniquely recover all $s$-sparse signals. Under a Gaussian measurement model, we propose a novel two-stage Sparse Gauss-Newton (SGN) algorithm. The first stage employs a support-restricted spectral initialization, yielding an accurate initial estimate with $m=O(s^2\log{n})$ measurements.  The second stage refines this estimate via an iterative hard-thresholding Gauss-Newton method, achieving quadratic convergence to the true signal within finitely many iterations when $m\ge O(s\log{n})$. Compared to existing second-order methods, our algorithm achieves near-optimal sampling complexity for the refinement stage without requiring resampling. Numerical experiments indicate that SGN significantly outperforms state-of-the-art algorithms in both accuracy and computational efficiency. In particular, (1) when sparsity level $s$ is high, compared with existing algorithms, SGN can achieve the same success rate with fewer measurements. (2) SGN converges with only about $1/10$ iterations of the best existing algorithm and reach lower relative error.  
\end{abstract}
\maketitle
\thispagestyle{plain}

%%\section*{This is an unnumbered first-level section head}
%%This is an example of an unnumbered first-level heading.

%% The correct journal style for \specialsection is all uppercase; a known bug
%% in amsart.cls prevents this, so input must be uppercase until it is fixed.
%\specialsection*{This is a Special Section Head}
%%\specialsection*{THIS IS A SPECIAL SECTION HEAD}
%%This is an example of a special section head%
%%%%%%%%%%%%%%%%%%%%%%%%%%%%%%%%%%%%%%%%%%%%%%%%%%%%%%%%%%%%%%%%%%%%%%%%
%\footnote{Here is an example of a footnote. Notice that this footnote
%text is running on so that it can stand as an example of how a footnote
%with separate paragraphs should be written.
%\par
%And here is the beginning of the second paragraph.}%
%%%%%%%%%%%%%%%%%%%%%%%%%%%%%%%%%%%%%%%%%%%%%%%%%%%%%%%%%%%%%%%%%%%%%%%%
.

\section{Introduction}
\subsection{Problem setup}
We consider the problem of recovering an unknown signal $\x \in \mathbb H^n$ (where $\mathbb H=\R$ or $\mathbb H=\mathbb C$) from a system of $m$ quadratic equations
\begin{equation} \label{eq:prob}
	y_i = \x^\TT\A_i\x, i = 1,...,m,
\end{equation}
where $\A_i \in \mathbb H^{n\times n}$  are the measurement matrices, $y_i \in \mathbb H$ are observations.  This quadratic inverse problem arises naturally in numerous scientific applications where physical measurements are inherently quadratic. In computational imaging, sub-wavelength microscopy with partially incoherent light requires solving (\ref{eq:prob}) to recover fine-scale features from intensity-only measurements \cite{szameit2012sparsity,shechtman2011sparsity}. Similarly, quantum state tomography \cite{gross2010quantum,basu2000uniqueness,zehni20203d} relies on this formulation, where quantum correlations are captured as quadratic forms (\ref{eq:prob}) of the quantum state vector $\x$. The same mathematical framework appears in structural biology, particularly in cryo-electron microscopy \cite{bendory2024sample, bendory2023autocorrelation}, which  leverages quadratic measurements (\ref{eq:prob}) to reconstruct 3D molecular structures from noisy 2D projections.

Existing algorithms for this problem primarily focus on the low-dimensional re-gime where $m>n$, typically requiring $O(n)$ measurements \cite{chen2022error, huang2019solving, huang2020solving}. While effective for small-scale problems, these methods encounter computational and storage limitations in high-dimensional settings where both $m$ and $n$ are large. A key breakthrough came with the realization that many real-world signals possess inherent sparsity. Leveraging this structural prior has led to dramatic improvements in both sample complexity and computational efficiency.
For instance, Shechtman et al. \cite{shechtman2011sparsity} demonstrated that sparsity priors enable exact recovery of sub-wavelength optical images from significantly fewer measurements. Similarly, Bendory et al. \cite{bendory2023autocorrelation} proved that molecular structures can be uniquely identified from their second-order autocorrelations, dramatically reducing measurement requirements compared to non-sparse cases.

In this paper, we focus on sparse quadratic systems and aim to address two fundamental questions: 1) How many measurements are required to ensure exact recovery of all  $s$-sparse signals $\x$ from \eqref{eq:prob}? Here, $s$-sparse means $\x$ has at most $s$ non-zero elements. 2) Can we design a computationally efficient algorithm with provable quadratic convergence guarantee under Gaussian measurement models?

\subsection{Related work and contributions}
When the measurement matrices are rank-1 and positive semidefinite, the recovery problem in \eqref{eq:prob} reduces to the well-studied phase retrieval problem. Extensive research over the past two decades has produced numerous provably algorithms for this problem, including both convex approaches \cite{phaselift,Phaseliftn,Waldspurger2015} and non-convex methods \cite{AltMin,candes2015phase,TWF,turstregion,macong,TAF,RWF,Duchi,tan2019phase,huangsiam,ZYL}, all of which guarantee global convergence under $O(n)$ Gaussian random measurements. The sparse variant of this problem, where the signals $\x$ is $s$-sparse, has attracted particular attention in high-dimensional settings. Theoretical studies have established that $2s$ measurements suffice for unique recovery in the real case, while $m>4s-2$ measurements are needed in the complex case \cite{akccakaya2015sparse, wang2014phase}.  Practical algorithms such as thresholding Wirtinger flow, Sparse Truncated Amplitude flow (SPARTA), and Compressive Phase Retrieval with Alternating Minimization (CoPRAM) \cite{cai2016optimal, jagatap2019sample,akccakaya2015sparse} have been developed, achieving successful recovery with $O(s^2\log{n})$ Gaussian random measurements. While this sample complexity still shows a gap compared to theoretical limits, it represents significant computational savings compared to non-sparse cases.
For a comprehensive overview of recent theoretical, algorithmic, and application developments in phase retrieval, readers are referred to survey papers \cite{jaganathan2016phase,shechtman2015phase}.

This paper focuses on recovering $s$-sparse $\x$ from \eqref{eq:prob} with full-rank matrices $\A_i, i=1,\ldots,m$.
In low-dimensional settings ($m>n$) for non-sparse $\x$, uniqueness guarantees are well understood:  $2n-1$ and  $4n-4$ generic Hermitian matrices $\A_i, i=1,\ldots,m$, suffice for exact recovery in the real case and complex case, respectively \cite{wang2019generalized}.  However,  in the high dimensional scenario ($m<n$) where the signals are sparse, how many sampling matrices are needed to guarantee the unique recovery remains highly incomplete. Algorithmically, Huang et al. generalized the Wirtinger Flow (WF) algorithm \cite{candes2015phase} to full-rank measurements, achieving linear convergence for non-sparse recovery with $O(n)$ Gaussian measurements \cite{huang2019solving, huang2020solving}.  For sparse signals,  
Chen et al. adapted Thresholded Wirtinger Flow (TWF) \cite{cai2016optimal} to this setting, preserving linear convergence under
$O(s^2\log{n})$ Gaussian measurements.  Notably, all existing sparse recovery algorithms face a fundamental  $O(s^2 \log n)$ sample complexity bottleneck due to spectral initialization limitations.

Our contributions are summarized as:
\begin{itemize}
	\item[(i)] Leveraging tools from algebraic geometry, we establish that $m\ge 2s$ generic measurement matrices $\A_i \in \mathbb R^{n\times n}, i=1,\ldots,m$, ensure the unique recovery of $s$-sparse signals $\x \in \R^n$ from quadratic measurements \eqref{eq:prob}.  Similarly, in the complex case, $m\ge 4s-2$ generic Hermitian  matrices $\A_i \in \mathcal H^{n\times n}, i=1,\ldots,m$  suffice. \footnote{In the complex case, we restrict our analysis to Hermitian matrices due to their greater practical relevance compared to non-Hermitian ones}
	\item[(ii)] We propose the Sparse Gauss-Newton (SGN) method, an efficient second-order algorithm for solving sparse quadratic systems under Gaussian measurements. The algorithm proceeds in two key stages: (1) spectral initialization requiring $O(s^2 \log n)$ measurements to obtain a good initial estimate near the true signal, followed by (2) Gauss-Newton refinement that achieves quadratic convergence within $O(s \log n)$ measurements. The SGN offers the advantage of optimal sampling complexity during refinement without needing resampling, whereas existing methods can only achieve $O(s^2\log n)$ sample complexity without resampling. 
	Moreover, SGN demonstrates favorable performance in terms of accuracy and convergence speed.  
\end{itemize}

\subsection{Notation}
Throughout the paper, we use lowercase letters to denote vectors, and uppercase letters to denote matrices.
We use $C, c, C_1, c_1, C_2, c_2,\ldots$ to denote positive constants whose values vary with the context. 
The symbol $\abs{\cdot}$ represents either the absolute value of a scalar or the cardinality of a set. 
The notation $\norm{\cdot}$ denotes the $\ell_2$-norm of a vector or the spectral norm of a matrix, and $\norm{\cdot}_0$ represents the number of nonzero elements of a vector.
For a vector $\x$, $\text{supp}(\x)$ represents its support set. Subscripts are used to index specific elements or columns: for example, $\x_\mcalS$ refers to the vector formed by retaining the elements of $\x$ indexed by $\mcalS$ and setting the remaining elements to zero, while $\A_\mcalS$ refers to the matrix formed by retaining the columns of $\A$ indexed by $\mcalS$ and setting the remaining columns to zero. 
The notation $\A_i$ represents the $i$-th measurement matrix, and $a^i_{kl}$ denotes the element located at the $(k, l)$ position of $\A_i$.  
The notation $\A_i \sim \mcalN^{n \times n}(0,1)$ denotes that $\A_i$ is an $n \times n$ matrix, whose elements are independently and identically distributed standard normal random variables.
For a symmetric matrix $\M \in \mathbb R^{n \times n}$, we use $\lambda_{\min}$ and $\lambda_{\max}$ to denote its minimum and maximum eigenvalues, respectively.
Let $\mathbb S^{n-1}=\{\z \in \mathbb R^n : \norm{\z} = 1\}$ be the standard Euclidean sphere. For a random variable $X$, we define 
$
\norm{X}_{\psi_2}=\inf\{t>0: \mbbE \exp{(X^2/t^2)} \le 2\}
$
as the sub-gaussian norm of $X$, and 
$
\norm{X}_{\psi_1}=\inf\{t>0: \mbbE \exp{(\abs{X}/t)} \le 2\}
$
as the sub-exponential norm of $X$. If $X$ has finite $\norm{X}_{\psi_2}$, then we say $X$ is sub-gaussian. Similarly, $X$ with finite $\norm{X}_{\psi_1}$ is said to be sub-exponential.

\subsection{Organization}
The remainder of this paper is organized as follows. Sec. \ref{sec:2} establishes the fundamental sample complexity requirements for unique recovery in sparse quadratic systems. 
Sec. \ref{sec:3} introduces our proposed algorithm with complete implementation details, while Sec. \ref{sec:4} provides a rigorous convergence analysis. 
Sec. \ref{sec:5} presents comprehensive numerical experiments to illustrate the algorithm's empirical performance. We conclude with a discussion of potential extensions and future research directions in Sec. \ref{sec:6}.

\section{Sample number for sparse quadratic systems}
\label{sec:2}
In this section, we investigate the number of measurements required for recovering all $s$-sparse signals from observations \eqref{eq:prob}.  Note that $(c\x)^{\TT}\A_i(c\x)=\x^{\TT}\A_i\x$ for any $|c|=1$. Thus, one can only recover $\x$ up to a global phase. Consider the equivalence relation $\sim$ on ${\mathbb H}^n$:  $\x\sim\z$ if and only if there is a constant $c\in {\mathbb H}$ with $|c|=1$ such that $\x=c\z$. Denote $\widetilde{{\mathbb H}}^n:={\mathbb H}^n/\sim$,  and let $\widetilde{\x}:=\x/\sim$ be the equivalent class containing $\x$. For given measurement matrices $\mathcal A:=\{A_i\}_{i=1}^m$, we define the map $\mathbf{M}_{\mathcal A} : \widetilde{{\mathbb H}}^n \to \mathbb R^m$ by
\[
\mathbf{M}_{\mathcal A} (\x)=\left(\x^{\TT}\A_1\x, \ldots, \x^{\TT}\A_m\x  \right).
\]
Set 
\[
\Sigma_s ({\mathbb H}^n):=\left\{\x \in{\mathbb H}^n : \norm{\x}_0 \le s   \right\}.
\]
Before presenting the theoretical results of this paper, we first introduce some basic notations of algebraic geometry.

\subsection{Terminology from algebraic geometry}
Let $\mathbb H$ be a field (either $\mathbb H=\R$ or $\mathbb H=\mathbb C$).  An {\em affine variety} in $\mathbb H^n$ is a subset defined by the vanishing of finitely many polynomials in $\mathbb H[x_1,\cdots,x_n]$. If these polynomials are homogeneous, their vanishing defines a subset of projective space $\mathbb P(\mathbb H^n)$, which is called a {\em projective variety}. The Zariski topology on $\mathbb H^n$ is defined by taking  affine varieties as closed sets, making their complement  Zariski open sets. We say a property holds for a {\em generic point} in  $\mathbb H^n$ if there exists a non-empty Zariski open set of points satisfying it.  For any vector space $X$, we denote by $\mathbb P(X)$ the associated projective space, i.e. the set of all one-dimensional subspaces of $X$.

%We first introduce some basic concepts in algebraic geometry that are useful for this paper.
%As usual for each $\x \in X$ we use $[\x]$ to denote the induced elements in $\mathbb P(X)$.
%Similarly, for any subset $S \subset X$ we use $[S]$ or $\mathbb P(S)$ to denote its projectivization in $\mathbb P(X)$. Throughout this paper, we say $V \subset \mathbb C^n$ is a projective variety if $V$ is the locus of a collection of homogeneous polynomials in $\mathbb C[\x]$.
%Strictly speaking a projective variety lies in $\mathbb P(\mathbb C^n)$ and is the projectivization of the zero locus of a collection of homogeneous polynomials.
%But like in \cite{wang2019generalized}, when there is no confusion the phrase \textit{projective variety} in $\mathbb C^n$ means an
%algebraic variety in $\mathbb C^n$ defined by homogeneous polynomials.
%We shall use \textit{projective variety} in $\mathbb P(\mathbb C^n)$ to describe a true projective variety.

\subsection{Main results}
In this subsection, we establish the number of measurements required to recover $s$-sparse signals from quadratic measurements in both real-valued and complex-valued cases.
Our first theorem shows that $m\ge 2s$ generically chosen matrices $\A_i$ in $\mathbb R^{n\times n}$ suffice to recover all $s$-sparse signals in $\mathbb R^n$. 
\begin{theorem}
	\label{Theorem4}
	Let $m\ge 2s$ and $\mathcal A:=\{\A_i\}_{i=1}^m$ be a generic point in $\mathbb R^{m\times n^2}$. Then $\mathbf{M}_{\mathcal A}$ is injective on $\Sigma_s ({\mathbb R}^n)$.
\end{theorem}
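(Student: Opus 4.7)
The plan is to reduce injectivity to a standard incidence-variety dimension count. First, I would observe that for any $\x, \z \in \R^n$, the equation $\mathbf{M}_{\mathcal A}(\x) = \mathbf{M}_{\mathcal A}(\z)$ is equivalent to the $m$ linear conditions $\mathrm{tr}(\A_i \M) = 0$, $i = 1, \ldots, m$, on the difference matrix $\M := \x\x^\TT - \z\z^\TT$. In the real setting, $\M = 0$ if and only if $\x = \pm \z$, i.e.\ $\widetilde{\x} = \widetilde{\z}$, so the failure of injectivity is exactly the existence of some nonzero $\M$ in the cone
\[
\mathcal D_s := \{ \x\x^\TT - \z\z^\TT : \x, \z \in \Sigma_s(\R^n) \} \subseteq \mathrm{Sym}(\R^n)
\]
that is orthogonal (in the trace pairing) to every $\A_i$.

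Next, I would stratify $\mathcal D_s$ by supports: for each pair of subsets $S, T \subseteq [n]$ with $|S|, |T| \leq s$, let
\[
\mathcal D_{S,T} := \{ \x\x^\TT - \z\z^\TT : \mathrm{supp}(\x) \subseteq S,\ \mathrm{supp}(\z) \subseteq T \}.
\]
This is the image of a polynomial map from an $(|S|+|T|)$-dimensional affine space, so its Zariski closure is an algebraic cone of dimension at most $|S|+|T| \leq 2s$. For each such pair I would form the incidence variety
\[
\mathcal I_{S,T} := \{ (\mathcal A, [\M]) \in \R^{m n^2} \times \mathbb P(\overline{\mathcal D_{S,T}}) : \mathrm{tr}(\A_i \M) = 0 \text{ for all } i \},
\]
and bound its dimension by fibering over the second factor. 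For any fixed $[\M]$ with $\M \neq 0$, the $m$ equations $\mathrm{tr}(\A_i \M) = 0$ impose $m$ genuinely independent hyperplane conditions, one on each separate $\A_i$, so each fiber has codimension $m$ in $\R^{m n^2}$. Hence $\dim \mathcal I_{S,T} \leq (2s-1) + (mn^2 - m)$, and the same bound applies to the image of $\mathcal I_{S,T}$ under projection to $\R^{m n^2}$. For $m \geq 2s$ this is at most $mn^2 - 1 < mn^2$, so the ``bad'' set for this pair of supports is contained in a proper Zariski-closed subvariety. The finite union of such proper subvarieties over the finitely many pairs $(S,T)$ is still proper, and its complement is the desired non-empty Zariski-open set of generic $\mathcal A$ on which $\mathbf{M}_{\mathcal A}$ is injective on $\Sigma_s(\R^n)/\!\sim$.

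The main subtlety, and the step I expect to be the crux of the argument, is the passage to the projective parameterization $[\M]$: because $\overline{\mathcal D_{S,T}}$ is a cone, quotienting by nonzero scalars saves exactly one dimension in the count and produces the sharp bound $m \geq 2s$, rather than the $m \geq 2s+1$ that a naive affine count over $(\x,\z)$ yields. The remaining ingredients are routine: that $\mathrm{tr}(\A_i \M) = 0$ is a nontrivial linear condition in $\A_i$ whenever $\M \neq 0$; that projection of an algebraic variety cannot increase dimension; that $\x\x^\TT = \z\z^\TT$ forces $\x = \pm \z$ for real vectors; and that a finite union of proper subvarieties is proper.
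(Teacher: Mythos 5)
Your proof is correct and follows essentially the same dimension-counting strategy as the paper: stratify the potential failure set by supports, form an incidence variety, bound its dimension by fibering over the signal side, and conclude by projecting and taking a finite union over support pairs. The only difference is cosmetic: you projectivize the cone of rank-$\le 2$ symmetric differences $\M = \x\x^\TT - \z\z^\TT$ to save the crucial dimension, whereas the paper projectivizes the pair $(\x,\z)$ directly (and also, redundantly, the tuple $(\A_1,\ldots,\A_m)$, which shifts both sides of the final inequality by one and so does not change the threshold); both parameterizations yield the same sharp bound $m \ge 2s$.
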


Therefore, we establish that $m \ge 2s$ measurements are sufficient to exactly recover $\x$ based on $y_i$ and $\A_i$ for $1 \le i \le m$ (see \eqref{eq:prob}) in the real case.

Then we turn to the complex case, where, unlike the real setting, we restrict our analysis to Hermitian measurement matrices $\A_1,\ldots,\A_m \in \mathbb C^{n\times n}$, as non-Hermitian measurements are of litter practical interest. Let $\mathcal H^{n\times n}$ denote the set of Hermitian matrices over $\mathbb C^{n\times n}$. The second theorem  shows that generically chosen Hermitian matrices $\A_i$ in $\mathbb C^{n\times n}$ suffice to  guarantee  the recovery of all $s$-sparse signals in $\mathbb C^n$ if $m\ge 4s-2$.

\begin{theorem}
	\label{Theorem5}
	Let $m\ge 4s-2$. Then $\mathbf{M}_{\mathcal A}$ with a generic $\mathcal A:=\{\A_i\}_{i=1}^m \in \mathcal H^{n\times n} \times \cdots \times \mathcal H^{n\times n}$ is injective on $\Sigma_s ({\mathbb C}^n)$.
\end{theorem}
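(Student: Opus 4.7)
The plan is to reduce the injectivity question to a dimension count on a ``difference variety'' and then to invoke a standard incidence-variety argument in real algebraic geometry. First I would observe that, since $\mathbf{M}_{\mathcal A}$ must be well-defined on $\widetilde{{\mathbb C}}^n$, the symbol $\TT$ in the complex case must be read as the conjugate transpose $^H$. Then $\mathbf{M}_{\mathcal A}(\widetilde{\x}) = \mathbf{M}_{\mathcal A}(\widetilde{\z})$ is equivalent to $\langle \A_i,\, \x\x^H - \z\z^H\rangle = 0$ for every $i$, while $\x\x^H = \z\z^H$ is equivalent to $\x \sim \z$ (the rank-one Hermitian matrix $\x\x^H$ determines $\x$ up to a unit phase). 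Hence $\mathbf{M}_{\mathcal A}$ fails to be injective on $\Sigma_s({\mathbb C}^n)/\sim$ exactly when there exists a \emph{nonzero} Hermitian matrix $\M$ of the form $\x\x^H - \z\z^H$ with $\x,\z \in \Sigma_s({\mathbb C}^n)$ annihilated by every $\A_i$.

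Next I would bound the real dimension of the difference variety
\[
V \;:=\; \{\x\x^H - \z\z^H : \x,\z \in \Sigma_s({\mathbb C}^n)\}\;\subset\; \mathcal H^{n\times n}.
\]
Writing $V = \bigcup_{S_1, S_2} V_{S_1,S_2}$ as a finite union indexed by pairs of $s$-subsets $S_1, S_2 \subseteq \{1,\ldots,n\}$, each piece $V_{S_1,S_2}$ is the image of the polynomial map $\Psi_{S_1,S_2}: {\mathbb C}^{S_1} \times {\mathbb C}^{S_2} \to \mathcal H^{n\times n}$, $(\x,\z) \mapsto \x\x^H - \z\z^H$. The domain has real dimension at most $4s$, while the generic fiber contains the $2$-torus orbit $(\x,\z) \mapsto (e^{i\theta_1}\x,\, e^{i\theta_2}\z)$. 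On the open locus where the output has rank exactly $2$, a spectral/signature argument (using uniqueness up to phase of the positive and negative eigenvectors of an indefinite Hermitian matrix of signature $(1,1)$) shows that the fiber is \emph{precisely} this $T^2$-orbit. Therefore $\dim_{\R} V_{S_1,S_2} \le 4s - 2$, and consequently $\dim_{\R} V \le 4s - 2$.

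The rest is a classical incidence-variety dimension count. Since $V$ is a real cone, its real projectivization $\mathbb{P}_{\R}(V)$ has real dimension at most $4s - 3$. Form
\[
\mathcal I \;=\; \left\{ ([\M], \mathcal A) \in \mathbb{P}_{\R}(V) \times (\mathcal H^{n\times n})^m : \langle \A_i, \M \rangle = 0,\; i=1,\ldots,m \right\}.
\]
Each fiber of the projection $\mathcal I \to \mathbb{P}_{\R}(V)$ over a nonzero $[\M]$ is a product of $m$ proper real hyperplanes (one in each factor, since $\M \ne 0$), so has real dimension $m(n^2 - 1)$. It follows that $\dim_{\R} \mathcal I \le (4s - 3) + m(n^2 - 1)$, and the image of $\mathcal I$ under the projection to $(\mathcal H^{n\times n})^m$ has real dimension at most $(4s - 3) + m(n^2 - 1)$. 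This is strictly less than the ambient dimension $m n^2$ exactly when $m \ge 4s - 2$, which confines the ``bad'' parameters $\mathcal A$ to a proper real semialgebraic subvariety; every $\mathcal A$ in its complement then yields an injective $\mathbf{M}_{\mathcal A}$.

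The step I expect to be the main obstacle is not the dimension count itself but its justification in the real setting. In contrast to Theorem~\ref{Theorem4}, the parameter space $\mathcal H^{n\times n}$ is only a real vector space, not a complex algebraic variety, so the entire argument must be carried out within real algebraic (or semialgebraic) geometry, with Tarski--Seidenberg (or Chevalley applied to a suitable complexification) invoked to ensure that the image of the incidence variety under the projection is semialgebraic with the expected upper bound on real dimension. A secondary technicality is pinning down the generic fiber of $\Psi_{S_1,S_2}$: one must rule out any hidden stabilizer beyond the obvious $T^2$ phase action, which in turn reduces to the essential uniqueness of the spectral decomposition of a nondegenerate indefinite rank-$2$ Hermitian matrix.
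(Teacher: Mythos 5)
Your approach is the same incidence-variety dimension count the paper uses, and the final arithmetic agrees: you get $(4s-3) + m(n^2-1) < mn^2 \iff m \ge 4s-2$, while the paper gets $\dim_\R \mathcal G_{I,J} \le mn^2 - m + (2s-2) + (2s-1) = mn^2 - m + 4s - 3 < mn^2$ under the same condition. The packaging differs: the paper works directly with normalized pairs $(\x,\z)$, cutting the $4s$ real degrees of freedom down to $4s-3$ by requiring the first nonzero entry of $\x$ to be $1$ and the first nonzero entry of $\z$ to be real, and then bounds the codimension of the fiber over each pair. You instead pass to the rank-$\le 2$ difference matrix $\M = \x\x^H - \z\z^H$, bound $\dim_\R V \le 4s-2$ via the fiber of $\Psi_{S_1,S_2}$, and projectivize in $\mathcal H^{n\times n}$. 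Your version is arguably cleaner in that it avoids the paper's implicit finite subdivision by ``which coordinate is the first nonzero one,'' and makes the rank-$2$ structure of the obstruction explicit.

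There is, however, a genuine error in your fiber analysis, though it happens not to break the final bound. You claim that on the signature-$(1,1)$ locus the fiber of $\Psi_{S_1,S_2}$ is \emph{precisely} the $T^2$-orbit, invoking ``uniqueness up to phase of the positive and negative eigenvectors.'' This is wrong: in a decomposition $\M = \x\x^H - \z\z^H$ the vectors $\x,\z$ are \emph{not} eigenvectors of $\M$ unless $\x \perp \z$, so spectral uniqueness says nothing about them. Concretely, with $S_1 = S_2 = \{1,2\}$ both $(\x,\z) = \bigl((1,0)^\top,(0,1)^\top\bigr)$ and $(\x,\z) = \bigl((\sqrt2,1)^\top,(1,\sqrt2)^\top\bigr)$ map to $\M = \mathrm{diag}(1,-1)$ and are not related by any $T^2$ phase; the fiber over that $\M$ is $4$-dimensional, not $2$-dimensional. (The fiber reduces to the $T^2$-orbit only when $S_1 \ne S_2$, and that is forced by the support constraint, not by any spectral argument.) Fortunately your bound $\dim_\R V_{S_1,S_2} \le 4s-2$ only needs the generic fiber to have dimension \emph{at least} $2$, which the $T^2$-orbit already guarantees; so the theorem still follows, but you should delete the ``precisely'' claim and its spectral justification. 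Your concern about carrying the argument out in real/semialgebraic geometry is valid but applies equally to the paper's proof, which cites a fiber-dimension theorem without comment.
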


By means of this theorem, we obtain that in the complex case, $m \ge 4s-2$ measurements can guarantee the unique $s$-sparse solution for problem (\ref{eq:prob}). 

The detailed proofs of the above two theorems can be found in Appendix \ref{sec: realmeas} and \ref{sec: complexmeas}, respectively.

\section{Algorithms} \label{sec:3}
For convenience, we simply discuss the case where $\x$ and $\A_i$ are real-valued.  A common way to recover $s$-sparse signals $\x \in \R^n$ from the system of quadratic equations (\ref{eq:prob}) is to solve the following least squares problem:
\begin{equation}
	\min\limits_{\z \in \mbbR^n, \norm{\z}_0 \le s} f(\z) := \frac{1}{2m}\sum\limits_{i=1}^{m}(\z^{\TT}\A_i\z-y_i)^2,
	\label{least-square problem}
\end{equation}
where $y_i = \x^{\TT}\A_i\x$ for $i=1,\ldots, m$. We assume that  $\A_i \in \R^{n\times n}, i=1,\ldots, m$ are independently and identically distributed (i.i.d.) Gaussian random measurement matrices with standard normal entries.

The non-convexity of the loss function in (\ref{least-square problem}) renders direct optimization methods prone to convergence to a local minimizer. Following the methodology in \cite{wang2017sparse,candes2015phase,jagatap2019sample}, we adopt a two-stage algorithmic approach: (1) A spectral initialization stage that generates a high-quality initial estimate $\x^0$, and (2) an iterative refinement stage that progressively converges to the true solution through successive updates starting from $\x^0$.

\begin{algorithm}[H]
	\caption{Initialization}
	\label{algorithm1:Initialization}
	\renewcommand{\algorithmicrequire}{\textbf{Input:}}
	\renewcommand{\algorithmicensure}{\textbf{Output:}}
	\begin{algorithmic}[1]
		\REQUIRE Gaussian random matrix $\A_{i}, i=1,...,m$, measurement vector $\y \in \mbbR^m$, \ sparsity level $s$.
		\STATE Compute signal marginals: \ $ Y_{jj} = \frac{1}{m}\sum\limits_{i=1}^{m}y_ia_{jj}^i $ for $j = 1,\ldots ,n$.
		\STATE Set $ \widehat{\mcalS}$ to the set whose elements corresponding to the $s$-largest instances of  $Y_{jj}$.
		\STATE Let $\v$ be the  leading left singular vector of $\Y_{\hat{\mcalS}}=\frac{1}{m}\sum\limits_{i=1}^{m}y_i[\A_{i}]_{\hat{\mcalS},\hat{\mcalS}}$.
		\STATE Set $\phi = \left(\frac{1}{2m}\sum\limits_{i=1}^{m}y_i^2\right)^{1/4}$.
		\STATE Initial estimate $\x^0=\phi \v$.
		\ENSURE $\x^0$
	\end{algorithmic}
\end{algorithm}
\subsection{Initialization}
Before introducing our proposed initialization algorithm, we first provide an overview of existing spectral initialization methods.
Define 
\[
\Y=\frac{1}{m}\sum_{i=1}^{m}y_i\A_i\qquad \mbox{and}\qquad \phi = \left(\frac{1}{2m}\sum\limits_{i=1}^{m}y_i^2\right)^{1/4},
\]
where $y_i = \x^{\TT}\A_i\x, i=1,\ldots,m$ and $\x \in \R^n$ is the true signal. Since we assume $a^i_{kl}\sim \mathcal N(0,1)$ for $i = 1,\ldots,m$, $k= 1,\ldots,n$, $l = 1,\ldots,n$, it is straightforward to verify that the expectation of $\Y$ is $ {\mathbb E} \Y= \x\x^{\TT}$ and $ {\mathbb E} \phi=\norm{\x}$. Consequently, by the law of large number, when $m$ is large,  the leading left or right singular vector of $\Y$, scaled by $\phi$, serves as a good estimator of $\x$.  Building on this idea, \cite{huang2019solving} established that  $m=\Omega(n)$ samples are sufficient to ensure a reliable initial estimate.

When $\x$ is a sparse signal, one can leverage the sparsity prior to reduce the sample complexity significantly. The main idea is first estimate the support of $\x$, and then apply the spectral initialization restricted to the estimated support set. Specifically, denote $\mcalS=\mbox{supp}(\x)$. Then the expectation of diagonal entries of $\Y$ obey
\begin{equation*}
	{\mbbE[Y_{jj}]} = \begin{cases}
		x_j^2,&{\text{if}}\ j \in \mcalS \\ 
		{0,}&{\text{if}}\ j \in \mcalS^c
	\end{cases}.
\end{equation*}
Thus,  the magnitudes of the diagonal entries of $\Y$ can be used to estimate $\mcalS$. Based on the thresholding operator, \cite{chen2025solving} proposed a variant of the spectral initialization method for sparse signals. To be specific,  the estimated support set is given by
\begin{equation*}
	\widehat{\mcalS}_0 = \left\{j \in [n]: Y_{jj} > \alpha \phi^2 \sqrt{\frac{\log{n}}{m}}\right\},
\end{equation*}
where  $\alpha$ is a tuning parameter.  And then the initial guess is obtained by computing the singular vector of $\Y_{\widehat{\mcalS}_0,\widehat{\mcalS}_0}$. They showed that with $m \ge O(s^2\log{n})$ measurements, this approach yields a good initial estimate with high probability.

In this paper, we adopt the hard thresholding operator from \cite{jagatap2019sample} to estimate the support set
and introduce an alternative initialization method, as outlined in Algorithm \ref{algorithm1:Initialization}. Specifically, we first identify the indices of the largest $s$ diagonal elements of $\Y$ to obtain the estimated support set $\widehat{\mcalS}$. Next, we construct the initial estimate
$\x^0$ as the leading left singular vector of $\Y_{\widehat{\mcalS},\widehat{\mcalS}}$. Finally,  we scale $\x^0$ to have norm  $\phi$, ensuring it matches $\norm{\x}$.

\begin{algorithm}[H]
	\caption{Sparse Gauss-Newton Algorithm}
	\label{algorithm2:Iterative Algorithm}
	\renewcommand{\algorithmicrequire}{\textbf{Input:}}
	\renewcommand{\algorithmicensure}{\textbf{Output:}}
	\begin{algorithmic}[1]
		\REQUIRE Gaussian random matrix $\A_{i} \in \mbbR^{m \times n}, \ i=1,...,m$, measurement vector $\y \in \mbbR^m$, initial estimate $\x^0$, \ sparsity level $s$, and step size $\mu^k$.
		\FOR{$k=0,1,...$}
		\STATE $\u^k = \mathcal{H}_s(\x^k-\mu^k \nabla f(\x^k))$
		\STATE $\mcalS_{k+1} = {\rm supp}(\u^k)$
		\STATE  Update $\x^{k+1}$:
		\begin{eqnarray*}
			\x^{k+1}_{\mcalS_{k+1}}= \x^k_{\mcalS_{k+1}}-\p_{\mcalS_{k+1}}^k, \qquad \text{and} \qquad \x^{k+1}_{\mcalS_{k+1}^c}=0,
		\end{eqnarray*}
		where $\p_{\mcalS_{k+1}}^k$ is given in \eqref{eq:defpk}.
		\ENDFOR
		\ENSURE $\x=\x^{k+1}$
	\end{algorithmic}
\end{algorithm}

\subsection{Refinement}
Recall that the problem we aim to solve is
\begin{equation} \label{least1}
	\min\limits_{\z \in \mbbR^n, \norm{\z}_0 \le s} f(\z) := \frac{1}{2m}\sum\limits_{i=1}^{m}(\z^{\TT}\A_i\z-y_i)^2.
\end{equation}
In this paper, we incorporate the hard-thresholding operator and the Gauss-Newton method \cite{gao2017phaseless, blumensath2008iterative, foucart2011hard} to develop an efficient algorithm for recovering the target signal $\x$. Specifically, at the $k$-th iteration, an estimated support set $\mcalS_{k+1} $ is selected based on  the Gradient Hard Thresholding Pursuit method \cite{xyuan}, namely, 
\begin{equation*}
	\mcalS_{k+1} = {\rm supp}(\mathcal{H}_s(\x^k-\mu^k \nabla f(\x^k))),
\end{equation*}
where $\mathcal{H}_s$ is the hard thresholding operator that selects the $s$-largest values of a vector while setting the remaining elements to $0$.   Next, we restrict problem (\ref{least1}) to  the support set $\mcalS_{k+1}$:
\begin{equation} \label{eq:subop}
	\min_{{\rm supp}(\z) \subseteq \mcalS_{k+1}} \frac{1}{2m}\sum\limits_{i=1}^{m}(\z^{\TT}\A_i\z-y_i)^2.
\end{equation}
Finally, the  Gauss-Newton method is applied to solve \eqref{eq:subop} to obtain the next iteration $\x^{k+1}$.  Let
\[
F_{i}(\z) = \frac{1}{\sqrt{m}}(\z^{\TT}\A_i\z-y_i), i=1,\ldots,m.
\]

The first-order Taylor expansion of $F_i(\z)$ at $\x^k$ is
\begin{equation*}
	F_i(\z) \approx F_i(\x^k)+\nabla F_i(\x^k)^{\TT}(\z-\x^k).
\end{equation*}
Hence, \eqref{eq:subop}  can be approximated by
\begin{equation}
	\min_{{\rm supp}(\z) \subseteq \mcalS_{k+1}} \frac{1}{2}\norm{\F(\x^k)+\J(\x^k)(\z-\x^k)}^2,
	\label{approx least squares}
\end{equation}
where the $i$-th component of $\F(\x^k) \in \mbbR^{m}$ is $F_i(\x^k)$ and  the Jacobian matrix $\J(\x^k) \in \R^{m\times n}$ is defined as
\begin{equation} \label{eq:Jx}
	\J(\x^k)=\frac{1}{\sqrt{m}}[(\A_1+\A_1^{\TT})\x^k,...,(\A_m+\A_m^{\TT})\x^k]^{\TT}.
\end{equation}
For simplicity, we define $\textstyle \widetilde{\A}_i=\A_i+\A_i^{\TT}$.
Suppose that $\x^{k+1}$ is the solution to (\ref{approx least squares}), the first-order optimality condition yields that  $\x_{\mcalS_{k+1}^c}^{k+1} = 0$ and  $\x_{\mcalS_{k+1}}^{k+1}$ satisfies 
\begin{eqnarray*}
	&\J_{\mcalS_{k+1}}(\x^k)^{\TT}\J_{\mcalS_{k+1}}(\x^k)(\x_{\mcalS_{k+1}}^{k+1}-\x_{\mcalS_{k+1}}^k)
	\\ 
	&= \J_{\mcalS_{k+1}}(\x^k)^{\TT}\J_{\mcalS_{k+1}^c}(\x^k)\x^k_{\mcalS_{k+1}^c}-\J_{\mcalS_{k+1}}(\x^k)^{\TT}\F_{\mcalS_{k+1}}(\x^k).
\end{eqnarray*}

Observe that 
\begin{equation} \label{eq:gradf}
	\J(\x^k)^{\TT}\F(\x^k)=\nabla f(\x^k)=\frac{1}{m}\sum\limits_{i=1}^{m}((\x^k)^{\TT}\A_i\x^k-y_i)\widetilde{\A}_i\x^k.
\end{equation}
As a result, the next iteration $\x^{k+1}$ is 
\begin{equation} \label{eq:updarule}
	\x^{k+1}_{\mcalS_{k+1}}=\x^k_{\mcalS_{k+1}}- \p_{\mcalS_{k+1}}^k
\end{equation}
with 
\begin{equation} \label{eq:defpk}
	\p_{\mcalS_{k+1}}^k= \left[\J_{\mcalS_{k+1}}(\x^k)^{\TT}\J_{\mcalS_{k+1}}(\x^k)\right]^{-1}\Big(\nabla_{\mcalS_{k+1}} f(\x^k) -\J_{\mcalS_{k+1}}(\x^k)^{\TT}\J_{\mcalS_{k+1}^c}(\x^k)\x^k_{\mcalS^c_{k+1}}\Big),
\end{equation}
and
\begin{equation*}
	\x^{k+1}_{\mcalS_{k+1}^c}=0.
\end{equation*}
The specific details of the refinement are shown in Algorithm \ref{algorithm2:Iterative Algorithm}.

\section{Convergence guarantee} \label{sec:4}
%In this section, we provide the theoretical analysis of our algorithm for both noise-free and noisy cases.

In real-world applications, measurements are often corrupted by noise. Therefore, it is crucial to analyze the robustness of our algorithm in noisy settings. In noisy case, the measurements are 
\begin{equation}
	y_i = \x^{\TT}\A_i\x+ \epsilon_i, i=1,\ldots,m,
	\label{noisymeasure}
\end{equation}
where $\{\epsilon_i\}_{i=1}^{m}$ represent the noises. Throughout this paper, we assume that $\{\epsilon_i\}_{i=1}^{m}$ are independent centered  sub-exponential random variables with the maximum sub-exponential norm $\sigma$, i.e.,
\[
\sigma:=\mbox{max}_{1\le i\le m} \| \epsilon_i \|_{\psi_1}.
\] 
The following theorem demonstrates that  the initial guess  produced by Algorithm \ref{algorithm1:Initialization} is close enough to the true signal with high probability. 

\begin{theorem} \label{Theorem1}
	Let $\{\A_i\}_{i=1}^m$ be i.i.d. standard Gaussian random matrices with $\A_i \sim \mcalN^{n\times n}(0,1)$, and  $\x \in \mbbR^n$ be any fixed signal with $\norm{\x}_0 \le s$. For any $ \delta_0 \in (0,1)$, with probability at least $1-8n^{-5}-4\exp(-c\delta_0^2 m)-2\exp(-c's)$, the output of Algorithm \ref{algorithm1:Initialization} with $y_i = \x^{\TT}\A_i\x +\epsilon_i, i=1,\ldots, m$, obeys 
	\begin{equation*}
		\mbox{\rm dist}(\x^{0},\x) \le \delta_{0} \norm{\x},
	\end{equation*}
	as long as $m > C(\delta_0) (1 +\frac{\sigma}{\norm{\x}^2})^2 s^2\log{(mn)}$, where $C(\delta_0)$ is a constant depending on $\delta_0$, $c$ and $c'$ are universal positive constant.
\end{theorem}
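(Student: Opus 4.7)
The plan is to establish Theorem \ref{Theorem1} by analyzing the three outputs of Algorithm \ref{algorithm1:Initialization} separately: the estimated support $\widehat{\mcalS}$, the leading singular vector $\v$ of $\Y_{\widehat{\mcalS}}$, and the magnitude estimate $\phi$. The overall strategy exploits $\mbbE[\Y] = \x\x^\TT$, so that $\Y_{\widehat{\mcalS}}$ is in expectation a rank-one matrix whose leading singular vector recovers $\x/\norm{\x}$ up to sign. Concentration of $\Y_{\widehat{\mcalS}}$ around its mean, an eigenvector perturbation bound, and a scalar concentration bound for $\phi$ will be combined by the triangle inequality
\[
\mathrm{dist}(\x^{0},\x) \;\le\; |\phi - \norm{\x}|\cdot\norm{\v} \;+\; \norm{\x}\cdot\mathrm{dist}(\v,\x/\norm{\x})
\]
to produce the claimed bound.

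The first step is support identification. A short moment computation gives $\mbbE[Y_{jj}] = x_j^2$ (the noise term vanishes by independence), so $Y_{jj}$ is a sum of independent sub-exponential random variables (products of sub-Gaussians, plus $\epsilon_i a_{jj}^i$). Bernstein's inequality, combined with a union bound over $j \in [n]$, yields
\[
\max_{j} |Y_{jj} - x_j^2| \;\le\; C(\norm{\x}^2 + \sigma)\sqrt{\log(n)/m}
\]
with probability at least $1 - O(n^{-5})$. Since in the worst case a nonzero $x_j^2$ can be as small as $\norm{\x}^2/s$, forcing the deviation below this threshold requires $m \gtrsim (1 + \sigma/\norm{\x}^2)^2 s^2 \log n$, which is exactly the sample complexity appearing in the theorem. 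This event ensures that if $\norm{\x}_0 = s$ then $\widehat{\mcalS} = \mcalS$, and if $\norm{\x}_0 < s$ any spurious indices in $\widehat{\mcalS}$ correspond to $x_j = 0$ and are therefore harmless.

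The second step is a spectral perturbation argument on $\Y_{\widehat{\mcalS}} = \frac{1}{m}\sum_i y_i [\A_i]_{\widehat{\mcalS},\widehat{\mcalS}}$, which, conditional on $\widehat{\mcalS}$, is a sum of $m$ independent random matrices in $\mbbR^{s \times s}$ with mean $\x_{\widehat{\mcalS}}\x_{\widehat{\mcalS}}^\TT$. I would apply the matrix Bernstein inequality for sub-exponential random matrices and then take a union bound over all $\binom{n}{s}$ possible index sets (contributing a $\log\binom{n}{s} = O(s\log n)$ factor) to handle the data-dependence of $\widehat{\mcalS}$. This delivers
\[
\bigl\|\Y_{\widehat{\mcalS}} - \x_{\widehat{\mcalS}}\x_{\widehat{\mcalS}}^\TT\bigr\| \;\le\; C(\norm{\x}^2 + \sigma)\sqrt{s\log(mn)/m}
\]
with high probability. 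Because the population matrix is rank-one with spectral gap $\norm{\x_{\widehat{\mcalS}}}^2 = \norm{\x}^2$, the Davis-Kahan / Wedin $\sin\Theta$ theorem then gives $\min_{c=\pm 1}\norm{\v - c\x_{\widehat{\mcalS}}/\norm{\x_{\widehat{\mcalS}}}} \le \delta_0/2$ under the stated sample size. In parallel, a direct Bernstein estimate for the scalar $\phi^4 = \frac{1}{2m}\sum_i y_i^2$ (whose expectation is a known constant multiple of $\norm{\x}^4$ plus a controllable noise term) produces $|\phi - \norm{\x}| \le (\delta_0/2)\norm{\x}$, accounting for the $\exp(-c\delta_0^2 m)$ term in the failure probability.

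The main obstacle will be the data-dependence of $\widehat{\mcalS}$ on the same random matrices $\A_i$ that populate $\Y_{\widehat{\mcalS}}$, which prevents direct application of matrix Bernstein. The union-bound-over-supports trick resolves this cleanly, and the resulting $\log\binom{n}{s}$ factor is absorbed into the $\log(mn)$ in the sample complexity; alternatively, one can condition on the step-one event $\{\mcalS \subseteq \widehat{\mcalS}\}$ and exploit the fact that on this event only coordinates where $\x$ has no mass are random. A secondary technical point is the sub-exponential rather than sub-Gaussian tail of $y_i a^i_{jj}$ and $y_i [\A_i]_{\widehat{\mcalS},\widehat{\mcalS}}$, which requires the sub-exponential variants of Bernstein / matrix Bernstein (hence the $(\norm{\x}^2 + \sigma)$ scale rather than $\sqrt{\norm{\x}^4 + \sigma^2}$) and is responsible for the $(1 + \sigma/\norm{\x}^2)^2$ dependence in the theorem. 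Once the three deviation bounds are in hand, calibrating the constants in the triangle inequality yields $\mathrm{dist}(\x^0, \x) \le \delta_0 \norm{\x}$.
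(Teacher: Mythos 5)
Your Step 1 contains a genuine gap that propagates into the rest of the argument. You claim that the concentration event $\max_j |Y_{jj}-x_j^2| \lesssim (\|\x\|^2+\sigma)\sqrt{\log n/m}$ ensures $\widehat{\mcalS}=\mcalS$ when $\|\x\|_0 = s$, justifying this with the assertion that ``in the worst case a nonzero $x_j^2$ can be as small as $\|\x\|^2/s$.'' This is backwards: since $\sum_{j\in\mcalS}x_j^2=\|\x\|^2$, the bound $\|\x\|^2/s$ is an \emph{upper} bound on $\min_{j\in\mcalS}x_j^2$, not a lower bound. The theorem assumes only that $\x$ is $s$-sparse; it places no restriction on $x_{\min}$, and a nonzero coordinate can be arbitrarily small. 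Consequently, no finite $m$ guarantees that all true support indices clear the noise floor, and $\widehat{\mcalS}=\mcalS$ may fail with non-negligible probability. This is not a cosmetic issue: your subsequent Wedin step bounds $\mathrm{dist}(\v, \x_{\widehat{\mcalS}}/\|\x_{\widehat{\mcalS}}\|)$ (since the population matrix conditional on $\widehat{\mcalS}$ is $\x_{\widehat{\mcalS}}\x_{\widehat{\mcalS}}^\TT$), while your triangle inequality needs $\mathrm{dist}(\v, \x/\|\x\|)$, and the claim ``spectral gap $\|\x_{\widehat{\mcalS}}\|^2=\|\x\|^2$'' again smuggles in the unjustified $\mcalS\subseteq\widehat{\mcalS}$.

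The paper avoids exact support recovery entirely. In Lemma \ref{le:normshat} it defines the set $\mcalS_+$ of coordinates whose magnitude exceeds twice the noise floor, proves only the one-sided inclusion $\mcalS_+\subset\widehat{\mcalS}$, and then bounds the \emph{residual mass} $\|\x-\x_{\widehat{\mcalS}}\|^2 \le \|\x-\x_{\mcalS_+}\|^2 \le s\cdot 2C(1+\sigma/\|\x\|^2)\|\x\|^2\sqrt{\log(mn)/m}$; this is the genuine source of the $s^2\log(mn)$ requirement, rather than a $1/s$ lower bound on $x_{\min}^2$. It then uses the triangle inequality $\mathrm{dist}(\x^0,\x)\le\mathrm{dist}(\x^0,\x_{\widehat{\mcalS}})+\|\x_{\widehat{\mcalS}}-\x\|$, bounding the first term via a uniform matrix concentration over supports plus Wedin (your Steps 2 and 3, which are otherwise on the right track), and the second term by the residual-mass estimate above. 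To repair your proposal you would need to replace the exact-support claim in Step 1 with a residual-mass lemma of this type, and reroute the triangle inequality through $\x_{\widehat{\mcalS}}$ so that Wedin is applied to the quantity it actually controls.
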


\begin{proof}
	See  Appendix \ref{sec:appA}.
\end{proof} 

Once the initial estimate $\x^0$ falls within a small neighborhood of either $\x$ or $-\x$, the following theorem shows that Algorithm \ref{algorithm2:Iterative Algorithm} achieves at least a linear convergence rate. Furthermore, in the noise-free case, after at most $O(\log{(\norm{\x}/x_{\min})})$ iterations, the algorithm attains a quadratic convergence rate. 

\begin{theorem} \label{Theorem3}
	Let $\{\A_i\}_{i=1}^m$ be i.i.d. standard Gaussian random matrices with $\A_i \sim \mcalN^{n\times n}(0,1)$, and  $\x \in \mbbR^n$ be any fixed signal with $\norm{\x}_0 \le s$.  Consider the noisy measurements (\ref{noisymeasure}). Assume that the initial guess $\x^0$ satisfying ${\rm dist}\left(\x^0,\x\right)\le \delta \norm{\x}$ for some  $\delta  \in (0,0.01)$. There exists some positive constants $c', c'', \underline\mu, \overline\mu$ such that if the step size $\mu^k \in (\frac{\underline\mu}{\norm{\x}^2}, \frac{\overline\mu}{\norm{\x}^2})$, then with probability greater than  $1-c''\exp(- c' s)$, the sequence $\{\x^k\}_{k \ge 1}$ generated by Algorithm \ref{algorithm2:Iterative Algorithm} with $y_i = \x^{\TT}\A_i\x +\epsilon_i, i=1,\ldots, m$ obeys
	\begin{equation*}
		{\rm dist}(\x^{k+1},\x) \le \rho' \cdot {\rm dist}(\x^k,\x) + \frac{C \sigma}{\norm{\x}}  \sqrt{ \frac{s \log(en/s)}m}, \ \ \forall \; k\ge 0,
	\end{equation*}  
	provided  $m > C' s\log{n}$. Here,   $\rho' \in (0,1)$ and $C, C'$ are positive constants. Furthermore, when $\sigma = 0$, the sequence $\{\x^k\}_{k \ge 1}$ generated by Algorithm \ref{algorithm2:Iterative Algorithm} obeys 
	\begin{equation*}
		{\rm dist}(\x^{k+1},\x) \le 
		\begin{cases}
			\nu \cdot {\rm dist}(\x^k,\x),  & \forall \ k\ge 0,  \\
			\nu' \cdot {\rm dist}^2(\x^k,\x), & \forall \  k \ge  c_1\log{\left(\frac{\norm{\x}}{x_{\min}}\right)}+c_2. 
		\end{cases}
	\end{equation*}
	Here, $x_{\min}$ is the smallest nonzero entry in magnitude of $\x$, and $c_1$, $c_2$, $\nu \in (0,1)$, $\nu'$ are positive constants. 
\end{theorem}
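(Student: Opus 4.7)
The plan is to control the error $\e^k := \x^k - \x$, assuming WLOG (via the sign ambiguity) that $\x^0$ is closer to $+\x$ than to $-\x$. Combining the explicit update formula \eqref{eq:defpk} with the identity $\nabla_{\mcalS_{k+1}} f(\x^k)=\J_{\mcalS_{k+1}}(\x^k)^\TT \F(\x^k)$ from \eqref{eq:gradf}, together with the \emph{exact} second-order expansion $\F(\x^k)=\F(\x)+\J(\x^k)(\x^k-\x)-\mathbf{R}(\x,\x^k)$, where $R_i(\x,\x^k):=\tfrac{1}{2\sqrt m}(\x^k-\x)^\TT\widetilde\A_i(\x^k-\x)$ and $\F(\x)=-\boldsymbol\epsilon/\sqrt m$, direct algebra gives
\begin{equation*}
\x^{k+1}_{\mcalS_{k+1}}-\x_{\mcalS_{k+1}} = \bigl[\J_{\mcalS_{k+1}}(\x^k)^\TT\J_{\mcalS_{k+1}}(\x^k)\bigr]^{-1}\J_{\mcalS_{k+1}}(\x^k)^\TT\Bigl(\tfrac{\boldsymbol\epsilon}{\sqrt m}+\mathbf{R}(\x,\x^k)+\J_{\mcalS_{k+1}^c}(\x^k)\x_{\mcalS_{k+1}^c}\Bigr),
\end{equation*}
together with $\x^{k+1}_{\mcalS_{k+1}^c}=0$. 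This isolates three error sources: the noise $\boldsymbol\epsilon$, the quadratic Taylor remainder $\mathbf{R}$, and a support-mismatch term that vanishes whenever $\mathrm{supp}(\x)\subseteq \mcalS_{k+1}$.

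Next, I would prove three uniform concentration estimates, valid on an event of probability at least $1-c''\exp(-c's)$ once $m\gtrsim s\log n$: for every support $\mcalT\subseteq[n]$ with $|\mcalT|\le 2s$ and every $\z$ with $\|\z-\x\|\le \tfrac{1}{100}\|\x\|$, (i) $c_1\|\z\|^2\I_\mcalT\preceq \J_\mcalT(\z)^\TT\J_\mcalT(\z)\preceq c_2\|\z\|^2\I_\mcalT$; (ii) $\|\J_\mcalT(\z)^\TT\mathbf{R}(\x,\x^k)\| \le C\|\z\|\,\|\x^k-\x\|^2$ whenever $\x^k-\x$ is $2s$-sparse; and (iii) $\|\J_\mcalT(\z)^\TT\boldsymbol\epsilon/\sqrt m\|\le C\sigma\|\z\|\sqrt{s\log(en/s)/m}$. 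Each follows from Hanson--Wright or Bernstein concentration for quadratic forms in Gaussians, combined with an $\epsilon$-net of size $(C/\eta)^{2s}$ on $2s$-sparse unit directions and a union bound over $\binom{n}{2s}\le(en/s)^{2s}$ supports, which produces precisely the $s\log(en/s)$ exponent. The support-mismatch contribution is in turn controlled via a gradient-hard-thresholding inequality in the spirit of \cite{xyuan}, yielding $\|\x_{\mcalS_{k+1}^c}\| \le \tau \|\e^k\|$ with $\tau$ made small by choosing $\mu^k$ inside $(\underline\mu/\|\x\|^2,\overline\mu/\|\x\|^2)$.

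Combining the recursion with (i)--(iii) and the support-mismatch bound yields $\|\e^{k+1}\| \le \rho\|\e^k\| + C\sigma\|\x\|^{-1}\sqrt{s\log(en/s)/m}$ for some $\rho\in(0,1)$, establishing the first inequality. For the noise-free case, iterating the linear contraction drives $\|\e^k\|$ below $x_{\min}/2$ after at most $O(\log(\|\x\|/x_{\min}))$ steps. From that point on, every coordinate of $\x^k$ on $\mathrm{supp}(\x)$ exceeds in magnitude every off-support coordinate (each perturbed by $O(\|\e^k\|) < x_{\min}/2$ in the gradient step), so the hard-thresholding operator must select $\mcalS_{k+1}=\mathrm{supp}(\x)$ exactly. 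The support-mismatch term then vanishes, $\F(\x)=0$, and the recursion collapses to $\e^{k+1}=[\J_\mcalS(\x^k)^\TT\J_\mcalS(\x^k)]^{-1}\J_\mcalS(\x^k)^\TT\mathbf{R}(\x,\x^k)$; invoking (i) and (ii) yields the quadratic bound $\|\e^{k+1}\| \le \nu' \|\e^k\|^2$.

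The principal obstacle is that the iterate $\x^k$ depends randomly on the measurement matrices, so (i)--(iii) cannot be applied pointwise: they must be proved \emph{uniformly} over a deterministic set containing every realized $\x^k$. Since contraction keeps $\x^k$ inside a $\delta$-ball around $\x$, an $\epsilon$-net of this ball of size $(C/\epsilon)^{2s}$ per fixed support, together with Lipschitz continuity of $\J(\z)$ in $\z$, provides such uniform control while preserving the $O(s\log n)$ sample complexity. A secondary subtlety is the careful tuning of $\mu^k$ in Phase 1 to ensure $\rho<1$ despite imperfect support identification, which mirrors the CoPRAM analysis in \cite{jagatap2019sample}.
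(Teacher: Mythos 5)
Your proposal follows essentially the same route as the paper's proof: the exact Gauss--Newton error decomposition you write out (inverse normal-equations matrix applied to noise, quadratic Taylor remainder $\mathbf{R}$, and the support-mismatch term $\J_{\mcalS_{k+1}^c}(\x^k)\x_{\mcalS_{k+1}^c}$) is identical to what the paper obtains via the Newton--Leibniz integral form, which for the quadratic map $\F$ reduces to the same $\frac12\J(\h^k)\h^k=\mathbf{R}(\x,\x^k)$; and the uniform RIP-type bounds over $4s$-sparse supports via $\epsilon$-nets plus Bernstein (paper's Lemma~\ref{lemma7}/\ref{lemma8}), the tight hard-thresholding contraction to bound $\|\x_{\mcalS^c_{k+1}}\|$ (paper's Lemma~\ref{lemma9} using \cite{shenjie} in place of your \cite{xyuan}), and the noise bound $\|\frac1m\sum_i\epsilon_i[\A_i]_{\mcalS,\mcalS}\|\lesssim\sigma\sqrt{s\log(en/s)/m}$ (paper's Lemma~\ref{le:epAi}) are all the same ingredients. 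The one place to tighten is your support-identification step for the quadratic phase: rather than arguing about per-coordinate perturbations of the gradient step, the paper simply observes that once $\|\x^k-\x\|<x_{\min}$ any index $i\in\mathrm{supp}(\x)\setminus\mathrm{supp}(\x^k)$ would force $\|\x^k-\x\|\ge|x_i|\ge x_{\min}$, a contradiction, so $\mathrm{supp}(\x)\subseteq\mcalS_{k+1}$ and the mismatch term vanishes without any entrywise gradient concentration.
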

\begin{proof}
	The proof of this theorem is deferred to Section \ref{sec:Thnoisy}.
\end{proof}
\begin{remark}
	The error bound $O\left( \frac{\sigma}{\norm{\x}}  \sqrt{ \frac{s \log(en/s)}m}\right)$ established in Theorem \ref{Theorem3}  is better than the results given in \cite{cai2023fast, Dai}, and  is minimax optimal \cite{Rigollet,Ye}.
\end{remark}

\begin{remark}
	Notably, when properly initialized, our refinement algorithm (Algorithm \ref{algorithm2:Iterative Algorithm}) achieves a quadratic convergence rate with $O(s\log{n})$ measurements with no need of sample splitting. This marks a significant improvement over existing second-order sparse phase retrieval methods, which either depend heavily on artificial resampling techniques \cite{Dai} or require a substantially higher measurement complexity of $O(s^2\log{n})$ \cite{cai2023fast,cai2016optimal}. 
\end{remark}

When $\sigma = 0$, combining Theorem \ref{Theorem1} and Theorem \ref{Theorem3} together, we can achieve an $\epsilon$-accurate solution within $O(\log(\log{(1/\epsilon)})+\log{(\norm{\x}/x_{\min})})$ iterations, as stated below.
\begin{corollary}
	Let $\{\A_i\}_{i=1}^m$ be i.i.d. standard Gaussian random matrices with $\A_i \sim \mcalN^{n\times n}(0,1)$, and  $\x \in \mbbR^n$ be any fixed signal with $\norm{\x}_0 \le s$. Assume that the initial guess $\x^0$ is generated by Algorithm \ref{algorithm1:Initialization}.  There exists some positive constants $C, C_1, C_2, C_3, \epsilon \in (0,1), \underline\mu, \overline\mu$ such that if the step size $\mu^k \in (\frac{\underline\mu}{\norm{\x}^2}, \frac{\overline\mu}{\norm{\x}^2})$, then with probability at least  $1-C_1\exp(- C_2 m) - C_3n^{-5}$, the output $\x^K$ of Algorithm \ref{algorithm2:Iterative Algorithm} with $y_i = \x^{\TT}\A_i\x, i=1,\ldots, m$ obeys 
	\begin{equation*}
		{\rm dist}(\x^K,\x)<\epsilon\norm{\x},
	\end{equation*}  
	provided $m\ge Cs^2 \log(mn)$. Here,  $K$ is an integer obeying $K \le O(\log(\log{(1/\epsilon)})+\log{(\norm{\x}/x_{\min})})$.
\end{corollary}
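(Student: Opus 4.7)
The plan is to combine Theorem \ref{Theorem1} and Theorem \ref{Theorem3} (in the noise-free regime $\sigma=0$) via a union bound, and then carefully count iterations by chaining a linear-convergence phase with a quadratic-convergence phase. First, I would fix $\delta_0 \in (0, 0.01)$ (for instance $\delta_0 = 0.005$) and invoke Theorem \ref{Theorem1} with this $\delta_0$ and $\sigma = 0$. This gives that, with $m \ge C(\delta_0) s^2 \log(mn)$, the initialization from Algorithm \ref{algorithm1:Initialization} satisfies ${\rm dist}(\x^0, \x) \le \delta_0 \norm{\x}$ with probability at least $1 - 8n^{-5} - 4\exp(-c\delta_0^2 m) - 2\exp(-c' s)$. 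Since $\delta_0 < 0.01$, this immediately meets the hypothesis of Theorem \ref{Theorem3} (noise-free part), which, after a further union bound over the probability event $1 - c''\exp(-c' s)$, yields both the linear bound ${\rm dist}(\x^{k+1},\x) \le \nu\cdot {\rm dist}(\x^k,\x)$ for all $k\ge 0$ and the quadratic bound ${\rm dist}(\x^{k+1},\x) \le \nu'\cdot {\rm dist}^2(\x^k,\x)$ once $k \ge K_1 := c_1 \log(\norm{\x}/x_{\min}) + c_2$, since the requirement $m \gtrsim s\log n$ for Theorem \ref{Theorem3} is absorbed by the stronger condition $m \gtrsim s^2 \log(mn)$.

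Next I would count iterations. During the first $K_1$ linear iterations the distance contracts at rate $\nu<1$, so
\[
{\rm dist}(\x^{K_1},\x) \;\le\; \nu^{K_1}\cdot {\rm dist}(\x^0,\x) \;\le\; \nu^{K_1} \delta_0 \norm{\x} \;\le\; \delta_0 \norm{\x}.
\]
In particular, by absorbing constants into $c_2$ if necessary, I can ensure that $\nu' \cdot {\rm dist}(\x^{K_1}, \x) \le q$ for some fixed $q \in (0,1)$. From iteration $K_1$ onwards, quadratic convergence kicks in: writing $r_k := \nu'\cdot {\rm dist}(\x^{K_1+k},\x)$, the quadratic bound gives $r_{k+1} \le r_k^2$, so $r_k \le q^{2^k}$. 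Hence ${\rm dist}(\x^{K_1+k},\x) \le (1/\nu')\, q^{2^k}\norm{\x}/\norm{\x}$; choosing $k = K_2 := \lceil \log_2 \log_{1/q}(1/(\epsilon \nu'\norm{\x}))\rceil = O(\log \log(1/\epsilon))$ suffices to push the right-hand side below $\epsilon\norm{\x}$. Setting $K := K_1 + K_2 = O(\log(\norm{\x}/x_{\min}) + \log\log(1/\epsilon))$ then yields ${\rm dist}(\x^K,\x) < \epsilon \norm{\x}$.

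Finally, I would collect the probability bounds. A union bound over the initialization event and the iteration event gives success probability at least $1 - 8n^{-5} - 4\exp(-c\delta_0^2 m) - 2\exp(-c's) - c''\exp(-c's)$, which, after consolidating constants and using $s \le n$, can be written in the stated form $1 - C_1 \exp(-C_2 m) - C_3 n^{-5}$. The iteration count $K$ and measurement condition $m \ge C s^2 \log(mn)$ then match the corollary statement.

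The main obstacle is the transition between the linear and quadratic phases: one must verify that at iteration $K_1$ the distance is small enough that the quadratic recursion $r_{k+1} \le r_k^2$ is genuinely contractive, i.e., that $\nu' \cdot {\rm dist}(\x^{K_1},\x) < 1$. This is the reason the corollary needs the ``switching time'' $K_1 = c_1 \log(\norm{\x}/x_{\min}) + c_2$ rather than any fixed constant, and why the $\log(\norm{\x}/x_{\min})$ term appears in the final iteration count. Everything else is a straightforward bookkeeping exercise once Theorem \ref{Theorem1} and Theorem \ref{Theorem3} are in hand.
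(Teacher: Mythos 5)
Your overall approach — chaining the initialization guarantee (Theorem \ref{Theorem1}) with the two-phase contraction of Theorem \ref{Theorem3}, counting $K_1 = c_1\log(\norm{\x}/x_{\min})+c_2$ linear iterations followed by $K_2 = O(\log\log(1/\epsilon))$ quadratic iterations, and taking a union bound — is exactly what the corollary intends; the paper states the corollary without a detailed proof and leaves precisely this bookkeeping to the reader, so your write-up fills the gap correctly in spirit. The iteration count is handled well, and your observation that $\nu' \cdot \mathrm{dist}(\x^{K_1},\x) < 1$ is the crux of the hand-off between phases is on point (in fact it holds already at $k=0$ because $\nu'\norm{\x}$ is a dimensionless constant and $\mathrm{dist}(\x^0,\x)\le 0.01\norm{\x}$; what $K_1$ actually buys is the correct-support lock-in, not numerical contractivity).

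However, the probability consolidation step contains a genuine error. You claim the $\exp(-c's)$ terms from Theorems \ref{Theorem1} and \ref{Theorem3} can be folded into $C_1\exp(-C_2 m) + C_3 n^{-5}$ ``using $s\le n$,'' but this inequality goes the wrong way: $s\le n$ gives $\exp(-c's)\ge\exp(-c'n)$, so it does not let you dominate $\exp(-c's)$ by a quantity that vanishes as $m,n\to\infty$ when $s$ stays bounded. The correct justification is different: the $\exp(-c's)$ contributions in the probability statements of Theorem \ref{Theorem1} (via Lemma \ref{le:5}) and Theorem \ref{Theorem3} come exclusively from Lemma \ref{le:epAi}, which controls the noise term $\frac1m\sum_i\epsilon_i[\A_i]_{\mcalS,\mcalS}$. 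When $\sigma=0$ those quantities vanish identically, Lemma \ref{le:epAi} is not invoked, and the associated failure probability disappears; what remains are the terms $8n^{-5}$, $4\exp(-c\delta_0^2 m)$, and $2\exp(-\gamma_1 m)$ from Lemma \ref{lemma2}, Lemma \ref{le:normphi}, and Lemma \ref{lemma7} (and the $\exp(-c_3\delta^2 m)$ from the claim \eqref{eq:cla1}), all of which are of the stated form $C_1\exp(-C_2m)+C_3n^{-5}$. You should make that argument explicit rather than appeal to $s\le n$. A minor typo: the display ``$\mathrm{dist}(\x^{K_1+k},\x)\le(1/\nu')\,q^{2^k}\norm{\x}/\norm{\x}$'' carries a spurious $\norm{\x}/\norm{\x}$ factor and should read $\mathrm{dist}(\x^{K_1+k},\x)\le(1/\nu')q^{2^k}$.
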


\section{Numerical experiments} \label{sec:5}

In this section, we present the results of several numerical experiments to evaluate the empirical performance of the algorithm proposed in this paper. All experiments were conducted on a desktop computer running MATLAB 2020a, equipped with an Intel CPU at 3.00 GHz and 32 GB of RAM.

\begin{figure}[!t]
	\centering
	\includegraphics[width=3.5in]{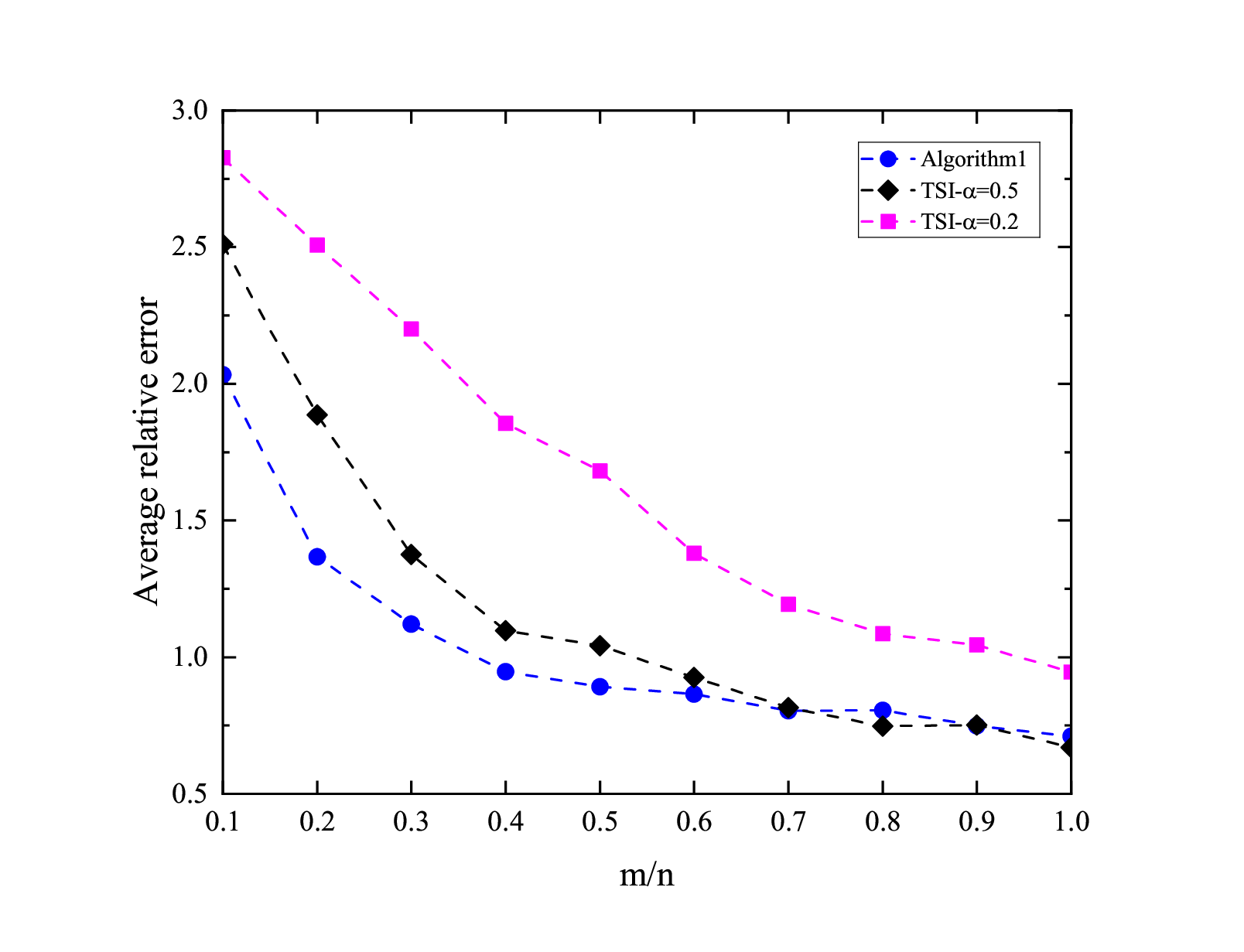}
	\caption{Initialization experiments: Average relative error between $\x^0$ and $\x$ for $n=500$, $s=5$ and $m/n$ varying within the range $[0.1,1]$.}
	\label{fig_1}
\end{figure}

Throughout the experiments, we generate each entry of the measurement matrices $\A_i \in \mbbR^{n \times n}, i = 1,... ,m$, independently from the standard normal distribution $\mathcal{N}(0,1)$. The target signal $\x \in \mbbR^n$ is constructed as an 
$s$-sparse vector whose non-zero entries are drawn i.i.d. from $\mathcal{N}(0,1)$. We consider both noise-free measurements 
$y_i = \x^{\TT}\A_i\x$ and noisy measurements $y_i = \x^{\TT}\A_i\x+\epsilon_i$, where $\epsilon_i$ represents additive noise and follows centered sub-exponential distribution.
To quantify reconstruction accuracy, we define the relative error  between the estimated signal $\hat{\x}$ and the ground truth $\x$ as:
\begin{equation*}
	{\rm Relative \ error} := \frac{{\rm dist}(\hat{\x},\x)}{\norm{\x}} = \frac{\min\{\norm{\hat{\x}-\x},\norm{\hat{\x}+\x}\}}{\norm{\x}}.
\end{equation*}

{\em Experiment 1: Comparative evaluation of initialization methods.}

In this experiment, we conduct a systematic comparison between our proposed initialization method and the Thresholded Spectral Initialization (TSI) approach from \cite{chen2025solving}.  We fix the signal dimension $n=500$ and sparsity level $s=5$. The measurement ratio $m/n$ is varied from 0.1 to 1 with step size being 0.1.  For each $m$, we conduct 100 independent trials and compute average relative errors.   While implementing TSI with its recommended parameter $\alpha = 0.5$ \cite{chen2025solving}, we additionally examine $\alpha = 0.2$ to assess parameter sensitivity. As demonstrated in Fig. \ref{fig_1}, our method consistently achieves superior accuracy, with reconstruction errors significantly smaller than those of TSI variants across all measurement regimes.
Notably, our proposed initialization algorithm has an average relative error that is approximately $20\%$ lower than that of the TSI algorithm with the optimal parameter settings as $m/n \le 0.3$. When $m/n$ exceeds $0.3$, the proposed algorithm's performance begins to degrade but remains favorable. In particular, at a sampling ratio of $m/n=0.5$, our algorithm achieves an average relative error of 0.8, while TSI-$\alpha=0.5$ and TSI-$\alpha = 0.2$ are $1.0$ and $1.6$, respectively.

{\em Experiment 2: Phase transition.}
\begin{figure*}[t]
	\centering
	\subfloat[]{\includegraphics[width=0.33\textwidth]{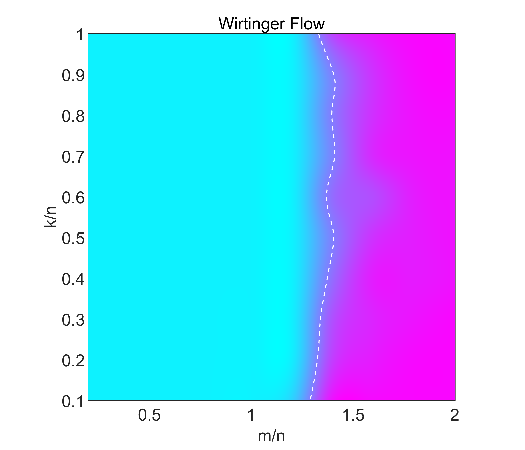}%
		\label{fig_first_case}}
	\hfil
	\subfloat[]{\includegraphics[width=0.33\textwidth]{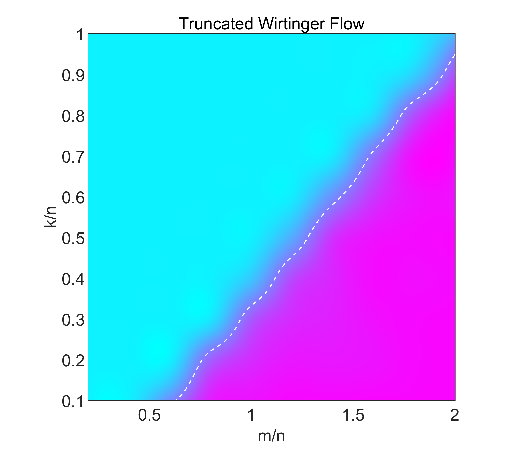}%
		\label{fig_second_case}}
	\subfloat[]{\includegraphics[width=0.33\textwidth]{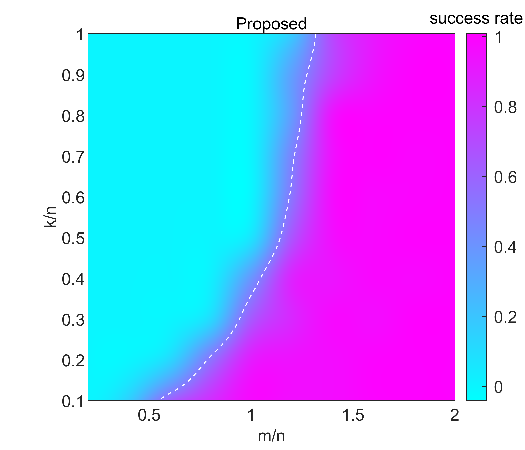}%
		\label{fig_third_case}}
	\caption{The phase transitions of WF, TWF, and proposed algorithm. }
	\label{fig_2}
\end{figure*}

In this experiment, we illustrate the phase transition of Algorithm 2 and compare its performance with WF \cite{huang2019solving} and  TWF \cite{chen2025solving}.  The signal dimension is fixed to $n=100$, while the sparsity level varies from $0.1n$ to $n$ with step size being $0.1n$. The number of measurements $m$ ranges from $0.2n$ to $2n$ with step size $0.2n$.  All settings for WF and TWF follow those specified in the respective authors' papers.
The maximum number of iterations for all algorithms is set to 2000.
We consider a trial to be successful when the relative error $\mbox{dist}(\hat{\x},\x)/\norm{\x} < 10^{-3}$.
For each combination of $m$ and $s$, we perform 100 independent trials to calculate the success rate. 
All algorithms are initialized with the same initial guess $\x^0$ generated by Algorithm 1. 
The phase transition of WF, TWF, and Algorithm 2 are shown in Fig. \ref{fig_2}. Comparing Fig. \ref{fig_2}(a) with Fig. \ref{fig_2}(b) and Fig. \ref{fig_2}(c), we observe that algorithms leveraging sparse priors significantly reduce sample complexity.
More specifically, when $m/n<1.4$, we can see that the success rate achieved by the WF algorithm under all levels of sparsity is less than $50\%$. However, the TWF algorithm and the algorithm proposed in this paper that utilizes the sparse prior when the sparsity is relatively low $(k/n \le 0.4)$, only require a sampling ratio of $m/n>0.5$ to achieve a success rate of over $50\%$.
Furthermore, when the sparsity level is relative high $(k/n>0.4)$, the proposed algorithm requires fewer sampling ratio to achieve the same success rate.
It means that our proposed algorithm outperforms WF and TWF in terms of sampling complexity, demonstrating a substantial improvement.

{\em Experiment 3 : Convergence behavior.}

We compare the convergence behavior of Algorithm 2 with TWF algorithm.
We plot the mean relative error versus iteration count in Fig. \ref{fig3}. Here we set the signal dimension $n$, measurements $m$ and sparsity $s$ to be $(200,200,40)$. The maximum number of iterations of the algorithms is all set to 1000. We conduct 100 independent trials and calculate the average relative error. Fig. \ref{fig3} shows that after the same number of iterations, our proposed iterative algorithm can achieve more accurate recovery.
\begin{figure}[!t]
	\centering
	\includegraphics[width=3.5in]{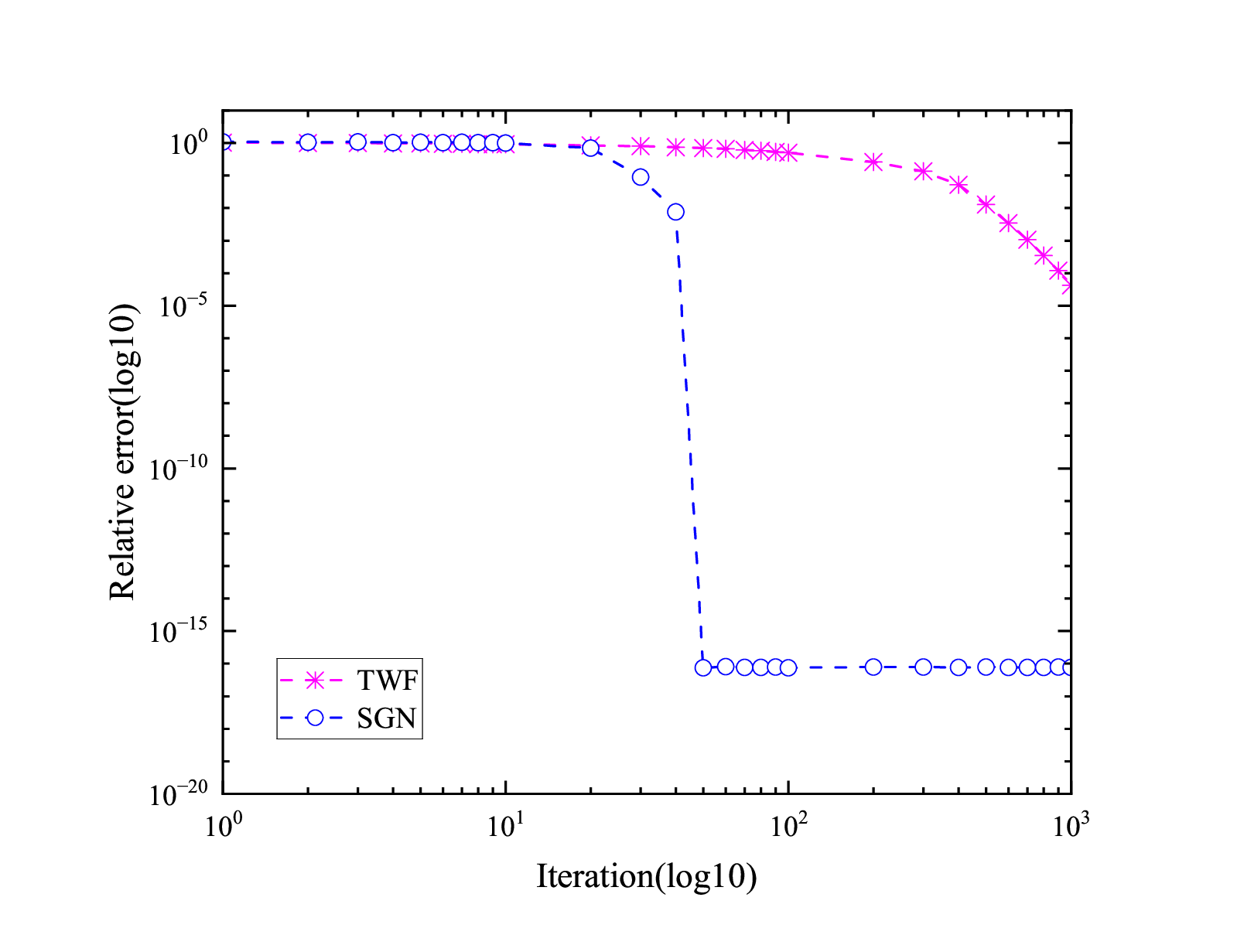}
	\caption{Convergence behavior: Plot of relative errors versus number of iterations for the proposed algorithm (SGN) and TWF.}
	\label{fig3}
\end{figure}
\begin{figure}[!t]
	\centering
	\begin{minipage}[b]{\linewidth}
		\subfloat[$m=\lfloor 10s\log{n}\rfloor$]{
			\includegraphics[width=0.48\linewidth]{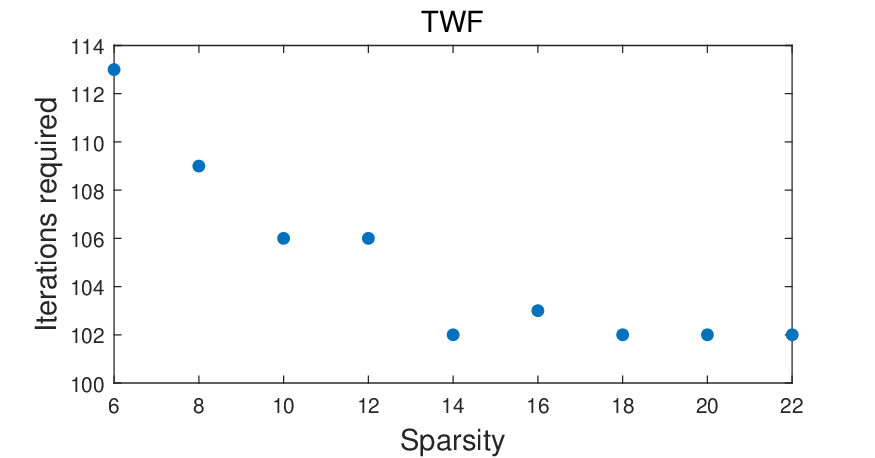}
		}
		\subfloat[$m=\lfloor 10s\log{n}\rfloor$]{
			\includegraphics[width=0.48\linewidth]{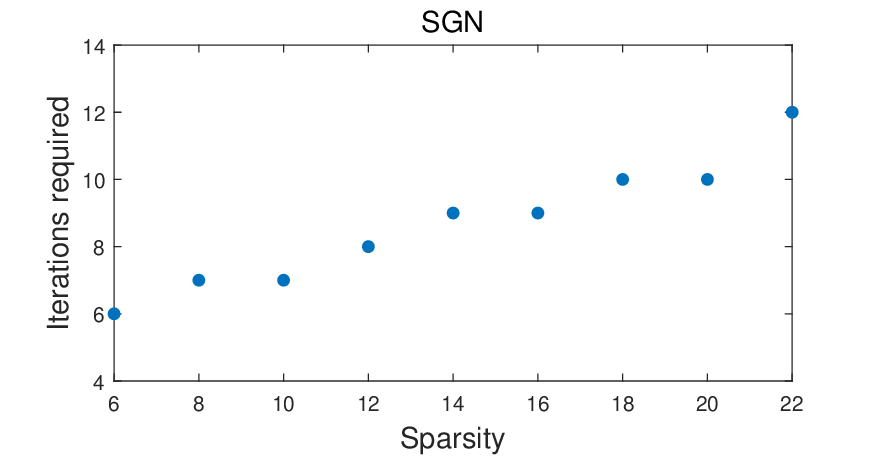}
		}
	\end{minipage}
	
	\caption{The average number of iterations required when successful recovery is achieved.}
	\label{fig4}
\end{figure}

In Fig. \ref{fig4}, we plot the average number of iterations required for the algorithm to reach successful recovery (i.e. ${\rm dist}(\x^k,\x)/\norm{\x} \le 10^{-3}$) at different sparsity levels. Specifically, we fix the signal dimension $n=100$, and set sample size $m=\lfloor 10s\log{n}\rfloor$, where $s$ is the sparsity. The sparsity $s$ varies from 6 to 22 with step size 2. 
For each sparsity, we conduct 100 independent trials. We  record the average number of iterations required for successful recovery. As can be seen from Fig. \ref{fig3} and Fig. \ref{fig4}, our proposed algorithm only requires $1/10$ iterations of the existing algorithm to achieve convergence and reach the lower relative error.

\begin{figure}[H]
	\centering
	\includegraphics[width=3.5in]{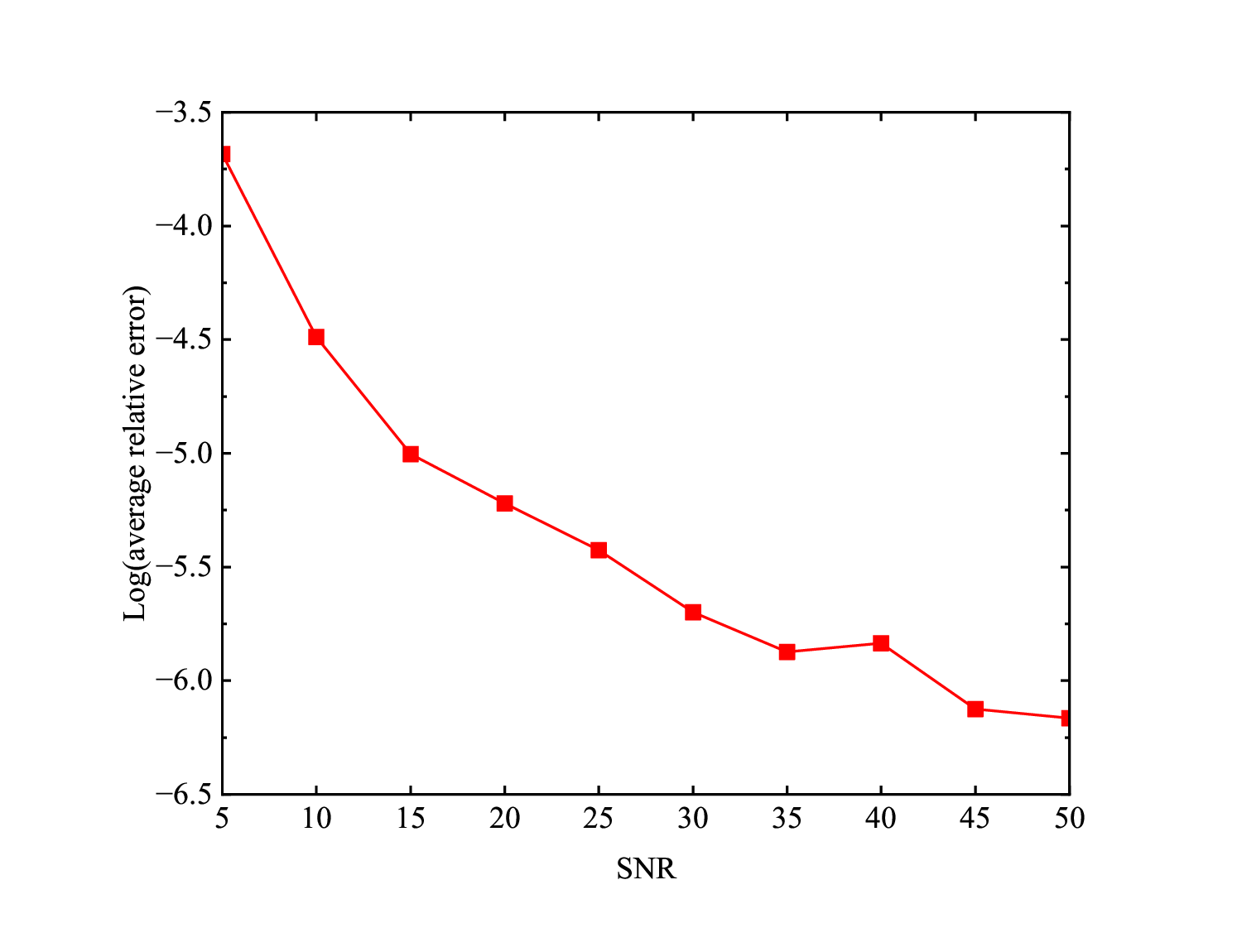}
	\caption{Robustness of SGN to additional Gaussian noise}
	\label{fig_10}
\end{figure}

{\em Experiment 4: Robustness to noise.}

In this experiment, we investigate the robustness of the proposed SGN algorithm to additive Gaussian noise. 
The experiment result is shown in Fig. \ref{fig_10}. The measurements are generated by $y_i = \x^{\TT}\A_i\x+\epsilon_i$, where $\epsilon_i$ follows independent Gaussian distribution $\mathcal{N}(0,\sigma^2)$. 
We set the signal dimension $n=100$, the number of measurements $m=200$, and the sparsity $s=5$. The noise strength is quantified by the signal-to-noise ratio (SNR), defined as $\norm{\x}/\sigma^2$. 
The SNR is varied from 5 to 50 in steps of 5.
For each SNR, we perform 50 independent experiments, and compute the average relative error. It is evident from Fig. \ref{fig_10} that the average relative error shows a continuous decreasing trend as the SNR increases.
When the SNR is relative small, the algorithm can still achieve a log relative error of at most -3.5, which indicates that the proposed algorithm exhibits excellent robustness to Gaussian noise.

\section{Conclusion} \label{sec:6}
This paper develops a novel two-stage algorithm for solving sparse quadratic systems  $y_i=\x^{\TT}\A_i\x$ under the Gaussian measurements. Theoretically, we prove that the proposed algorithm achieves a quadratic convergence rate after a finite number of iterations, with a sampling complexity of $m=O(s^2\log{n})$. In the presence of noise, our algorithm remains robust and attains the optimal error bound. Numerical experiments indicate that the proposed algorithm outperforms state-of-the-art methods in terms of convergence speed and sampling efficiency. While the initialization stage currently limits sample efficiency, our findings highlight the potential for further improvements through more advanced initialization schemes, offering a promising direction for future research in high-dimensional quadratic inverse problems.

\section{Appendix: Proofs of main results} \label{sec:diszkl} \label{sec:7}
\subsection{ Proof of Theorem \ref{Theorem4}}
\label{sec: realmeas}
\begin{proof}
	Note that the set of all $\{\A_i\}_{i=1}^m \in \mathbb R^{m\times n^2}$ has real dimension $mn^2$. From the definition of generic, to prove the result, we only need to show that the set of $\{\A_i\}_{i=1}^m$ such that $\mathbf{M}_{\mathcal A}$ is not injective on $\Sigma_s ({\mathbb R}^n)$ lies in a finite union of subsets with dimension strictly less than $mn^2$.

	Note that $\mathbf{M}_{\mathcal A}(\x)=\mathbf{M}_{\mathcal A}(\z)$ if and only if $\mathbf{M}_{\mathcal A}(a\x)=\mathbf{M}_{\mathcal A}(a\z)$  for any $a \in \R$. Therefore, for any subset of indices $I, J\subset [n]$ with $|I|\le s$, $|J|\le s$, we define  $\mathcal G_{I,J}$ be the set of all $(\A_1,\ldots,\A_m, \x,\z) \in \mathbb P(\R^{n \times n} \times \cdots \times \R^{n \times n}) \times \mathbb P(\R^n \times \R^n)$ such that $\x\neq \pm \z$ and
	\[
	\mbox{supp}(\x) \subset I, \quad \mbox{supp}(\z) \subset J,  \quad \mathbf{M}_{\mathcal A}(\x)=\mathbf{M}_{\mathcal A}(\z).
	\]
	Then $\mathcal G_{I,J}$ is a well-defined real projective variety because the defining equations are homogeneous polynomials.
	We next consider the dimension of this projective variety $\mathcal G_{I,J}$. To this end, let $\pi_1$ be projections on the first $m$ coordinates of $\mathcal G_{I,J}$ and $\pi_2$ be projections on the last two coordinates of $\mathcal G_{I,J}$, namely, 
	\[
	\pi_1(\A_1,\ldots,\A_m, \x,\z)=(\A_1,\ldots,\A_m) \qquad \mbox{and} \qquad \pi_2(\A_1,\ldots,\A_m, \x,\z)=(\x,\z).
	\]
	For any $\x_0 \neq \pm \z_0$ and any fixed $j$, the equation $\x_0^{\TT}\A_j\x_0 =\z_0^{\TT}\A_j\z_0$ defines a nonzero homogeneous polynomial, which constraints the set of $\A_j$ to lie on a real projective variety of codimension $1$. Note that there are $m$  such constrains.
	Hence, for any $\x_0 \neq \z_0$ in $\R^n$ with $\mbox{supp}(\x_0) \subset I$, $ \mbox{supp}(\z_0) \subset J$, the preimage $\pi_2^{-1}(\x_0,\z_0)$ has dimension $mn^2-m-1$. It follows from \cite[Cor. 11.13]{Fiber} that 
	\[
	\mbox{dim}(\mathcal G_{I,J})\le \mbox{dim}(\pi_2^{-1}(\x_0,\z_0))+|I|+|J|-1\le mn^2-m-1+2s-1.
	\]
	If $m\ge 2s$, then
	\[
	\mbox{dim}(\pi_1(\mathcal G_{I,J})) \le \mbox{dim}(\mathcal G_{I,J}) \le mn^2-2<mn^2-1=\mbox{dim}(\mathbb P(\R^{n \times n} \times \cdots \times \R^{n \times n})).
	\]
	Note that $\pi_1(\mathcal G_{I,J})$ contains precisely those $(\A_1,\ldots,\A_m) \in \R^{n \times n} \times \cdots \times \R^{n \times n}$ such that $\mathbf{M}_{\mathcal A}$ is not injective for signals $\x,\z$ with $\mbox{supp}(\x) \subset I$, $\mbox{supp}(\z) \subset J$. Thus, a generic $(\A_1,\ldots,\A_m) \in \R^{n \times n} \times \cdots \times \R^{n \times n}$ guarantees the injective for all $\x,\z$ with $\mbox{supp}(\x) \subset I$, $\mbox{supp}(\z) \subset J$. 
	
	Finally, there are only finitely many indices subsets $I,J$. The theorem is then proved.
\end{proof}

\subsection{ Proof of Theorem \ref{Theorem5}}
\label{sec: complexmeas}
\begin{proof}
	The proof is similar to that in the real case. The set of all $\{\A_i\}_{i=1}^m \in \mathcal H^{n\times n} \times \cdots \times \mathcal H^{n\times n}$ has real dimension $\mbox{dim}_{\R}(\mathcal H^{n\times n} \times \cdots \times \mathcal H^{n\times n})=mn^2$. Next, we  show that the set of $\{\A_i\}_{i=1}^m$ such that $\mathbf{M}_{\mathcal A}$ is not injective on $\Sigma_s ({\mathbb C}^n)$ lies in a finite union of subsets with dimension strictly less than $mn^2$.

	Note that $\mathbf{M}_{\mathcal A}(\x)=\mathbf{M}_{\mathcal A}(\z)$ if and only if $\mathbf{M}_{\mathcal A}(a\x)=\mathbf{M}_{\mathcal A}(a\omega \z)$  for any $a \in \R$ and $\omega\in \mathbb C$ with $|\omega|=1$. Therefore, for any $\x,\z \in \mathbb C^n$, if $\mathbf{M}_{\mathcal A}(\x)=\mathbf{M}_{\mathcal A}(\z)$, then we can find $\x_0, \z_0 \in \mathbb C^n$ such that the following holds:
	\begin{itemize}
		\item[(i)] $\mathbf{M}_{\mathcal A}(\x_0)=\mathbf{M}_{\mathcal A}(\z_0)$;
		\item[(ii)] The first nonzero entry of $\x_0$ is $1$;
		\item[(iii)] The first nonzero entry of $\z_0$ is real. 
	\end{itemize}
	For convenience, we denote $X$ the  subset of $\mathbb C^n$ consisting of elements $\x \in \mathbb C^n$ whose first nonzero entry is $1$, and  denote $Z$ the subset of $\mathbb C^n$ consisting of elements $\z\in \mathbb C^n$ whose first nonzero entry is real.  Moreover, let $\mathbb C_{I}^n$ denote the set of vectors $\x \in \mathbb C^n$ such that $\mbox{supp}(\x) \subset I$.
	Now,  for any subset of indices $I, J\subset [n]$ with $|I|\le s$, $ |J|\le s$, we define  $\mathcal G_{I,J}$ be the set of all $(\A_1,\ldots,\A_m, \x,\z) \in \mathcal H^{n\times n} \times \cdots \times \mathcal H^{n\times n} \times \mathbb{C}^n \times \mathbb{C}^n$ such that $\x \neq  \omega \z$ for any $\omega\in \mathbb C$ with $|\omega|=1$ and
	\[
	\x \in X \cap \mathbb C_{I}^n, \quad \z \in Z \cap \mathbb C_{J}^n, \quad  \mathbf{M}_{\mathcal A}(\x)=\mathbf{M}_{\mathcal A}(\z).
	\]
	Let $\pi_1$ be projections on the first $m$ coordinates of $\mathcal G_{I,J}$ and $\pi_2$ be projections on the last two coordinates of $\mathcal G_{I,J}$, namely, 
	\[
	\pi_1(\A_1,\ldots,\A_m, \x,\z)=(\A_1,\ldots,\A_m) \qquad \mbox{and} \qquad \pi_2(\A_1,\ldots,\A_m, \x,\z)=(\x,\z).
	\]
	For any fixed $\x_0, \z_0 \in \mathbb C^n$ and each fixed $j$, if $\x_0 \neq  \omega \z_0$ for any $\omega\in \mathbb C$ with $|\omega|=1$, then the equation $\x_0^{\TT}\A_j\x_0 =\z_0^{\TT}\A_j\z_0$ defines a nonzero homogeneous polynomial, which constraints the set of $\A_j$ to lie on a real algebraic variety of codimension $1$. Hence, for any $\x_0 \neq  \omega \z_0$ in $\mathbb C^n$ with $ \x \in X \cap \mathbb C_{I}^n, \; \z \in Z \cap \mathbb C_{J}^n$, the preimage $\pi_2^{-1}(\x_0,\z_0)$ has real dimension 
	\[
	\mbox{dim}_{\R}(\pi_2^{-1}(\x_0,\z_0))=mn^2-m.
	\]
	It follows that 
	\begin{eqnarray*}
		\mbox{dim}_{\R}(\mathcal G_{I,J}) & \le &  \mbox{dim}_{\R}(\pi_2^{-1}(\x_0,\z_0))+\mbox{dim}_{\R}(X\cap \mathbb C_{I}^n)+\mbox{dim}_{\R}(Y \cup \mathbb C_{I}^n) \\
		&= &  mn^2-m+2s-2+2s-1.
	\end{eqnarray*}
	Therefore, if $m\ge 4s-2$, then
	\[
	\mbox{dim}_{\R}(\pi_1(\mathcal G_{I,J})) \le \mbox{dim}_{\R}(\mathcal G_{I,J}) \le mn^2-1<mn^2=\mbox{dim}_{\R}(\mathcal H^{n\times n} \times \cdots \times \mathcal H^{n\times n}).
	\]
	Since $\pi_1(\mathcal G_{I,J})$ contains precisely those $(\A_1,\ldots,\A_m) \in \mathcal H^{n\times n} \times \cdots \times \mathcal H^{n\times n}$ such that $\mathbf{M}_{\mathcal A}$ is not injective for signals $\x,\z$ with $\mbox{supp}(\x) \subset I$, $\mbox{supp}(\z) \subset J$.  Thus, a generic $(\A_1,\ldots,\A_m) \in \mathcal H^{n\times n} \times \cdots \times \mathcal H^{n\times n}$ guarantees the injective of $\mathbf{M}_{\mathcal A}$ for all $\x,\z$ with $\mbox{supp}(\x) \subset I$, $\mbox{supp}(\z) \subset J$. 
	
	Finally, there are only finitely many indices subsets $I,J$ with $|I|\le s, |J|\le s$. The theorem is  proved.
\end{proof}

\subsection{ Proof of Theorem \ref{Theorem1}} \label{sec:appA}
Before proving Theorem \ref{Theorem1}, we first introduce some auxiliary lemmas and notations.

\begin{definition}
	\cite[$\epsilon$-net]{vershynin2018high}
	Let $(X,d)$ be a metric space. $S$ is a subset of $X$ and let $\epsilon >0$. If $\mcalN \subseteq S$ satisfies
	\begin{equation*}
		\forall x \in S, \quad \exists \  x_0 \in \mcalN, \quad s.t. \quad d(x,x_0) \le \epsilon,
	\end{equation*}
	then we call $\mcalN$ an $\epsilon$-net of $S$. Here, $d(x,x_0)$ denotes the distance between $x$ and $x_0$.
\end{definition}

\begin{lemma}
	\cite[Product of sub-gaussians is sub-exponential]{vershynin2018high} Let $X$ and $Y$ be sub-gaussian random variables. Then $XY$ is sub-exponential. Moreover, 
	\begin{equation*}
		\norm{XY}_{\psi_1} \le \norm{X}_{\psi_2}\norm{Y}_{\psi_2}.
	\end{equation*}
\end{lemma}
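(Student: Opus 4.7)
The plan is to reduce the bound $\norm{XY}_{\psi_1} \le \norm{X}_{\psi_2}\norm{Y}_{\psi_2}$ to a single application of Young's inequality combined with the Cauchy--Schwarz inequality; the sub-exponentiality of $XY$ then follows immediately from finiteness of the resulting $\psi_1$-norm. First I would dispose of trivial cases: if $\norm{X}_{\psi_2} = 0$ then $X = 0$ almost surely, so $XY = 0$ a.s.\ and $\norm{XY}_{\psi_1} = 0$; if either $\psi_2$-norm is $+\infty$, the claimed inequality is vacuous. Hence I may assume $a := \norm{X}_{\psi_2}$ and $b := \norm{Y}_{\psi_2}$ are both finite and positive. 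A short monotone-convergence argument shows that $t \mapsto \mbbE\exp(X^2/t^2)$ is nonincreasing and continuous on $(0,\infty)$, so the infimum in the definition of $\norm{X}_{\psi_2}$ is attained; this yields $\mbbE\exp(X^2/a^2) \le 2$ and similarly $\mbbE\exp(Y^2/b^2) \le 2$, which are the only distributional facts I will use.

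The main step is the elementary AM--GM (Young's) inequality $|uv| \le (u^2 + v^2)/2$, applied pointwise with $u = X/a$ and $v = Y/b$. This rearranges to
\begin{equation*}
\frac{|XY|}{ab} \le \frac{X^2}{2a^2} + \frac{Y^2}{2b^2}.
\end{equation*}
Exponentiating both sides and then invoking Cauchy--Schwarz to separate the resulting product of exponentials, I would obtain
\begin{equation*}
\mbbE\exp\!\left(\frac{|XY|}{ab}\right) \le \mbbE\!\left[\exp\!\left(\frac{X^2}{2a^2}\right)\exp\!\left(\frac{Y^2}{2b^2}\right)\right] \le \sqrt{\mbbE\exp(X^2/a^2)\,\mbbE\exp(Y^2/b^2)} \le \sqrt{2\cdot 2} = 2.
\end{equation*}
By the definition of $\norm{\cdot}_{\psi_1}$ given in the paper, this inequality is exactly $\norm{XY}_{\psi_1} \le ab = \norm{X}_{\psi_2}\norm{Y}_{\psi_2}$, establishing the quantitative bound. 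Finiteness of $\norm{XY}_{\psi_1}$ is then automatic, so $XY$ is sub-exponential, which is the first conclusion.

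There is no substantive obstacle here; the one point requiring care is matching the normalization. The symmetric split $|uv| \le u^2/2 + v^2/2$ is tuned precisely so that, after exponentiating and applying Cauchy--Schwarz, the two factors become $\mbbE\exp(X^2/a^2)$ and $\mbbE\exp(Y^2/b^2)$, each of which is at most $2$ by the very choice of $a$ and $b$. Using any asymmetric variant such as $|uv| \le u^p/p + v^q/q$ with $p \neq q$ would force a non-symmetric application of Hölder's inequality and would fail to produce the clean constant $\sqrt{2\cdot 2} = 2$ on the right-hand side, so the symmetric choice is what makes the constant in the bound optimal within this argument.
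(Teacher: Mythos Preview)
Your argument is correct and is essentially the standard proof from Vershynin's book (Lemma~2.7.7 there): Young's inequality $|XY|/(ab)\le X^2/(2a^2)+Y^2/(2b^2)$ followed by Cauchy--Schwarz on the exponentiated product. The paper itself does not supply a proof of this lemma; it is quoted verbatim as a known result from \cite{vershynin2018high}, so there is nothing to compare against beyond noting that your proof matches the cited source.
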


\begin{lemma} \cite[Hoeffding’s inequality]{vershynin2018high} \label{le:Hoeff}
	Let $X_1,...,X_N$ be independent, mean zero, sub-gaussian random variables, and $a=(a_1,\ldots,a_N) \in \mbbR^N$ Then, for every $t \ge 0$, we have
	\begin{equation*}
		\mbbP\left(\abs{\sum\limits_{i=1}^{N}a_iX_i} \ge t\right) \le 2\exp\left(-\frac{Ct^2}{K^2\norm{a}^2}\right)
	\end{equation*}
	where $K=\max\limits_i \norm{X_i}_{\psi_2}$ and $C>0$ is a universal constant.
\end{lemma}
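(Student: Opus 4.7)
The plan is the classical Chernoff/Cram\'er argument. First, I would establish the moment generating function estimate that any mean-zero sub-gaussian random variable $X$ with $\norm{X}_{\psi_2}\le K$ satisfies $\mbbE \exp(\lambda X)\le \exp(c K^2 \lambda^2)$ for every $\lambda\in\mbbR$ and some absolute constant $c>0$. The standard derivation expands $\exp(\lambda X)$ as a power series, uses $\mbbE X=0$ to kill the linear term, and bounds the higher moments via the equivalent characterization $\mbbE|X|^p \le (C_0 K\sqrt{p})^p$, which follows directly from the definition $\mbbE \exp(X^2/K^2)\le 2$ by Taylor expanding the latter. Summing the resulting series yields the claimed exponential MGF bound.

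Next, I would apply Markov's inequality to $\exp(\lambda \sum_i a_i X_i)$ for $\lambda>0$ and use independence of the $X_i$ to factor the MGF of the sum:
\[
\mbbP\left(\sum_{i=1}^{N}a_i X_i \ge t\right)\le e^{-\lambda t}\prod_{i=1}^{N}\mbbE \exp(\lambda a_i X_i).
\]
Since $\norm{a_i X_i}_{\psi_2}\le |a_i| K$, the MGF bound from the first step gives $\mbbE \exp(\lambda a_i X_i) \le \exp(c K^2 a_i^2 \lambda^2)$. Multiplying across $i$ collapses the product into $\exp(c K^2 \lambda^2 \norm{a}^2)$, so the right-hand side becomes $\exp(-\lambda t + cK^2 \lambda^2 \norm{a}^2)$.

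Finally, I would optimize the free parameter $\lambda$. Choosing $\lambda = t/(2cK^2 \norm{a}^2)$ minimizes the exponent and produces the one-sided bound $\exp(-t^2/(4cK^2\norm{a}^2))$, which is the desired estimate with $C=1/(4c)$. Applying the same argument to $-\sum_i a_i X_i$ yields the identical bound on the lower tail; combining the two via a union bound introduces the factor $2$ in front of the exponential and completes the proof.

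The only substantive step is establishing the MGF bound for a sub-gaussian variable from the $\psi_2$-norm definition; the remainder is a routine Chernoff computation. An alternative path that avoids expanding a moment series is to use the pointwise inequality $\lambda X \le \lambda^2 K^2/2 + X^2/(2K^2)$ together with $\mbbE \exp(X^2/K^2)\le 2$, which cleanly yields a quadratic exponent after a symmetrization step. Either route produces the needed constants, all of which are absorbed into the single universal constant $C$ in the statement.
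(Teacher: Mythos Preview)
Your proposal is correct and follows the standard Chernoff argument that appears in the cited reference. Note that the paper itself does not prove this lemma at all---it is simply quoted from \cite{vershynin2018high}---so there is no ``paper's proof'' to compare against; your write-up reproduces essentially the textbook derivation found there.
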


\begin{lemma} \cite[Bernstein’s inequality]{vershynin2018high} \label{le:bern}
	Let $X_1,...,X_N$ be independent, mean zero, sub-exponential random variables. Then, for every $t \ge 0$, we have
	\begin{equation*}
		\mbbP\left(\abs{\frac{1}{N}\sum\limits_{i=1}^{N}X_i} \ge t\right) \le 2\exp\left(-C \min\left(\frac{t^2}{K^2},\frac{t}{K}\right)N\right)
	\end{equation*}
	where $K=\max\limits_i \norm{X_i}_{\psi_1}$ and $C>0$ is a universal constant.
\end{lemma}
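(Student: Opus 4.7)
The plan is to derive Bernstein's inequality via the classical Cramér--Chernoff method, exploiting a moment generating function (MGF) bound valid for sub-exponential random variables in a bounded range of the dual parameter. First I would record the following equivalent characterization of the $\psi_1$ norm: if $X$ is centered with $\|X\|_{\psi_1}\le K$, then there exist universal constants $c_0,c_1>0$ such that
\begin{equation*}
\mathbb E\bigl[\exp(\lambda X)\bigr]\;\le\;\exp\bigl(c_1 K^2\lambda^2\bigr)\qquad\text{for all }|\lambda|\le c_0/K.
\end{equation*}
This is obtained by Taylor-expanding $e^{\lambda X}$, using the moment estimate $\mathbb E|X|^p\le (CKp)^p$ implied by a finite $\psi_1$ norm, and absorbing the linear term since $\mathbb E X=0$; the restriction $|\lambda|\le c_0/K$ is exactly what makes the resulting series converge.

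Next I would apply Markov's inequality to the exponentiated sum. For any $\lambda\in(0,c_0/K]$, independence gives
\begin{equation*}
\mathbb P\!\left(\tfrac{1}{N}\sum_{i=1}^N X_i\ge t\right)\;\le\;e^{-\lambda N t}\prod_{i=1}^N \mathbb E[e^{\lambda X_i}]\;\le\;\exp\!\bigl(-\lambda N t+c_1 K^2\lambda^2 N\bigr).
\end{equation*}
I would then optimize the exponent over the admissible range of $\lambda$. The unconstrained minimizer is $\lambda^\star=t/(2c_1 K^2)$. Two regimes appear: if $t\le 2c_0 c_1 K$, then $\lambda^\star$ is admissible and plugging in yields a sub-gaussian-type bound $\exp(-c_2 N t^2/K^2)$; if $t>2c_0 c_1 K$, the optimum is attained at the boundary $\lambda=c_0/K$, giving a sub-exponential-type bound $\exp(-c_3 N t/K)$. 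Both cases are captured uniformly by $\exp\bigl(-C N\min(t^2/K^2,\,t/K)\bigr)$ for a single universal constant $C>0$.

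Finally, I would symmetrize. Applying the identical argument to the independent, centered, sub-exponential variables $-X_i$ (whose $\psi_1$ norms coincide with those of $X_i$) yields the matching lower-tail bound, and a union bound produces the factor of $2$ in front of the exponential, delivering the stated two-sided inequality.

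The main obstacle is the MGF estimate for sub-exponential random variables, since obtaining the clean bound $\mathbb E[e^{\lambda X}]\le \exp(c_1 K^2\lambda^2)$ on the correct range $|\lambda|\le c_0/K$ requires carefully bounding the tail of the Taylor series via the $p$-th moment estimates and handling the constants so they match across the expansion; once this step is in hand, the Chernoff optimization and the two-regime split are routine. A secondary technical point is that the optimization must be done \emph{on} the constrained interval $\lambda\in(0,c_0/K]$, which is precisely what forces the appearance of $\min(t^2/K^2,t/K)$ rather than a single Gaussian-type bound.
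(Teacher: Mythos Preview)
Your proposal is correct and is precisely the standard Cram\'er--Chernoff argument used in the cited reference (Vershynin's book); the paper itself does not supply a proof of this lemma but simply quotes it as a known result, so there is nothing to compare against beyond noting that your outline matches the source proof.
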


\begin{lemma}\cite[Weyl's inequality, singular values]{Weyl}
	\label{le: Weyl}
	Let $\M, \N \in \mathbb R^{m \times n}$ be two matrices with $m \le n$, The singular values of $\M$ and $\N$ are denoted as $\sigma_1(\M)\ge \ldots \ge \sigma_m(\M)$ and $\sigma_1(\N)\ge \ldots \ge \sigma_m(\N)$, respectively. Then we have
	\begin{eqnarray*}
		|\sigma_i(\M+\N)-\sigma_i(\M)| \le \sigma_1(\N) \quad \mbox{for} \quad 1 \le i \le m.
	\end{eqnarray*}
	Consequently, 
	\begin{eqnarray*}
		\sigma_i(\M)-\sigma_{i+1}(\M+\N) \ge \sigma_i(\M)-\sigma_{i+1}(\M)- \sigma_1(\N) \quad \mbox{for} \quad 1 \le i \le m.
	\end{eqnarray*}

	%	\begin{eqnarray*}
		%\sigma_{i+j-1}(\M+\N) \le \sigma_i(\M) + \sigma_j(\N)	%\end{eqnarray*}
		%	for $1 \le i,j \le m$ and $i+j \le m+1$. In particular,

	\end{lemma}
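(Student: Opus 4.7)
The plan is to derive both claims from the Courant--Fischer variational characterization of singular values, which represents each $\sigma_i$ as an extremum over subspaces of $\mathbb R^n$. Specifically, for any matrix $\M \in \mathbb R^{m\times n}$ with $m\le n$,
\[
\sigma_i(\M)=\max_{\substack{S\subseteq\mathbb R^n\\ \dim S=i}}\ \min_{\substack{\v\in S\\ \|\v\|=1}}\|\M\v\|,
\]
and symmetrically $\sigma_i(\M)=\min_{\dim S=n-i+1}\max_{\v\in S,\ \|\v\|=1}\|\M\v\|$. This reduces the whole question to a pointwise estimate on $\|(\M+\N)\v\|$ for individual unit vectors $\v$, which is the main virtue of this approach.

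First I would fix an arbitrary unit vector $\v\in\mathbb R^n$ and apply the triangle inequality in $\mathbb R^m$ together with the operator-norm bound $\|\N\v\|\le \sigma_1(\N)$ to obtain
\[
\bigl|\,\|(\M+\N)\v\|-\|\M\v\|\,\bigr|\le \|\N\v\|\le \sigma_1(\N).
\]
Next, I would take the infimum over unit $\v$ in a fixed $i$-dimensional subspace $S$, and then the supremum over all such $S$. Because both of these operations are order-preserving under a uniform additive perturbation of the inner quantity, the $\max$--$\min$ formula above immediately yields
\[
\sigma_i(\M+\N)\ge \sigma_i(\M)-\sigma_1(\N).
\]
Swapping the roles of $\M$ and $\M+\N$, and using $\sigma_1\bigl((\M+\N)-\M\bigr)=\sigma_1(\N)$, gives the reverse inequality, so $|\sigma_i(\M+\N)-\sigma_i(\M)|\le \sigma_1(\N)$, which is the first conclusion.

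For the ``consequently'' statement I would simply instantiate the first inequality at index $i+1$, namely $\sigma_{i+1}(\M+\N)\le \sigma_{i+1}(\M)+\sigma_1(\N)$, and rearrange to obtain
\[
\sigma_i(\M)-\sigma_{i+1}(\M+\N)\ge \sigma_i(\M)-\sigma_{i+1}(\M)-\sigma_1(\N).
\]
There is no substantial obstacle in this argument; the only point that requires attention is selecting the correct variational form on each side (the $\max_S\min_\v$ form to lower-bound $\sigma_i(\M+\N)$ and the $\min_S\max_\v$ form to upper-bound it), and noting that the hypothesis $m\le n$ is used only to index the singular values $\sigma_1,\ldots,\sigma_m$ and plays no role in the estimate itself, so the argument extends verbatim to the case $m>n$ after swapping the roles of $m$ and $n$.
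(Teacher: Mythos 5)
The paper does not prove this lemma; it is stated with a citation to Horn and Johnson's \emph{Topics in Matrix Analysis} and used as a black box, so there is no in-paper argument to compare against. Your derivation via the Courant--Fischer characterization of singular values is correct and is the standard textbook proof of Weyl's perturbation bound: the pointwise estimate $\bigl|\,\|(\M+\N)\v\|-\|\M\v\|\,\bigr|\le\sigma_1(\N)$ passes through the $\max$--$\min$ formula because a uniform additive shift commutes with taking $\min$ over $\v$ and then $\max$ over $S$, and writing $\M=(\M+\N)+(-\N)$ with $\sigma_1(-\N)=\sigma_1(\N)$ gives the reverse direction, so the two one-sided bounds combine to the stated absolute-value inequality. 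The ``consequently'' line is exactly the upper bound at index $i+1$ rearranged, as you say. One small remark independent of your proof: as stated, the consequence for $1\le i\le m$ invokes $\sigma_{i+1}$, which is undefined at $i=m$ for an $m\times n$ matrix with $m\le n$; the derivation is valid for $i\le m-1$, and the paper only ever uses it at $i=1$, so this is an imprecision in the cited statement rather than a gap in your argument.
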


	\begin{lemma}	\cite[Lemma 4]{chen2025solving} \label{le:normphi}
		Let $\{\A_i\}_{i=1}^m$ be i.i.d. standard Gaussian random matrices with $\A_i \sim \mcalN^{n\times n}(0,1)$. 
		For any $\delta>0$, with probability at least $1-2\exp(-c\delta^2 m)$, it holds
		\[
		\abs{\frac1m \sum_{i=1}^m (\x^\TT \A_i \x)^2-\norm{\x}^4} \le \delta\norm{\x}^4.
		\]	
	\end{lemma}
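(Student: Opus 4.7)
The plan is to recognize that this is a standard concentration of an i.i.d.\ sum of squared Gaussians and apply Bernstein's inequality as stated in Lemma~\ref{le:bern}.

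First, I would compute the distribution of $Z_i := \x^\TT \A_i \x = \sum_{k,l} x_k x_l\, a_{kl}^i$. Since the entries $a_{kl}^i$ are i.i.d.\ $\mathcal N(0,1)$ and the expression above is a linear combination of them, $Z_i$ is a centered Gaussian whose variance is $\sum_{k,l}(x_k x_l)^2 = \norm{\x}^4$. In particular $Z_i/\norm{\x}^2 \sim \mathcal N(0,1)$, hence $Z_i^2/\norm{\x}^4 \sim \chi^2_1$ and $\mathbb E[Z_i^2] = \norm{\x}^4$, which identifies the correct centering: the empirical mean $\frac1m \sum_i Z_i^2$ has expectation exactly $\norm{\x}^4$.

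Next I would bound the sub-exponential norm of the centered summands $X_i := Z_i^2 - \norm{\x}^4$. Because $Z_i$ is sub-gaussian with $\norm{Z_i}_{\psi_2} \le c_0 \norm{\x}^2$ (standard scaling), the product-of-sub-gaussians lemma recalled above gives $\norm{Z_i^2}_{\psi_1} \le \norm{Z_i}_{\psi_2}^2 \le c_0^2 \norm{\x}^4$. Centering alters this by at most a constant factor, so $K := \max_i \norm{X_i}_{\psi_1} \le C \norm{\x}^4$ with $C$ a universal constant independent of $\x$ and $n$.

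Finally, applying Bernstein's inequality (Lemma~\ref{le:bern}) to the i.i.d.\ mean-zero sub-exponential variables $\{X_i\}_{i=1}^m$ with threshold $t = \delta \norm{\x}^4$ yields
\[
\mathbb P\!\left(\, \abs{\tfrac1m \sum_{i=1}^m X_i} \ge \delta \norm{\x}^4 \right) \;\le\; 2\exp\!\left(-c\, m \min\!\bigl(\delta^2,\delta\bigr)\right),
\]
which for $\delta \in (0,1]$ simplifies to $2\exp(-c\delta^2 m)$, matching the stated bound; the case $\delta > 1$ is only stronger and can be absorbed by adjusting $c$. There is no substantive obstacle here: the argument is a textbook Bernstein concentration for a chi-squared average, and the only problem-specific input is the variance identity $\mathrm{Var}(Z_i) = \norm{\x}^4$, which follows immediately from the independence of the entries of $\A_i$.
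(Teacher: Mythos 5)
The paper does not prove this lemma: it is cited verbatim from \cite[Lemma 4]{chen2025solving}, so there is no internal proof to compare your argument against. Your argument is the natural one and is essentially correct: because the $n^2$ entries of $\A_i$ are independent standard normals, $Z_i = \x^\TT\A_i\x$ is a centered Gaussian with variance $\sum_{k,l}(x_kx_l)^2 = \norm{\x}^4$, so $Z_i^2/\norm{\x}^4 \sim \chi^2_1$ and $\mathbb E[Z_i^2]=\norm{\x}^4$; the product-of-sub-gaussians bound gives $\norm{Z_i^2 - \norm{\x}^4}_{\psi_1} \le C\norm{\x}^4$; and Bernstein (Lemma~\ref{le:bern}) with $t=\delta\norm{\x}^4$ gives the concentration.

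The one place you are too casual is the final remark that "the case $\delta>1$ is only stronger and can be absorbed by adjusting $c$." For $\delta>1$, Bernstein yields $\min(\delta^2,\delta)=\delta$, i.e.\ a rate of $\exp(-cm\delta)$, and no fixed $c'>0$ satisfies $c\delta \ge c'\delta^2$ for all $\delta>1$; the two exponents genuinely disagree in the large-$\delta$ regime, and the sharp right tail of a $\chi^2$-average is indeed $\exp(-\Theta(m\delta))$ there, not $\exp(-\Theta(m\delta^2))$. So the lemma as transcribed here ("for any $\delta>0$") is slightly too strong as stated; either one restricts to $\delta\in(0,1]$, as the paper in fact does everywhere it invokes this result, or one writes the exponent as $c\,m\min(\delta^2,\delta)$. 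Within the regime the paper actually uses, your proof is complete and correct.
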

	
	Recall that $\phi = (\frac{1}{m}\sum\limits_{i=1}^{m}y_i^2)^{1/4} = (\frac1m \sum\limits_{i=1}^m (\x^\TT \A_i \x)^2)^{1/4}$ and $\norm{\x^0}=\phi$, thus Lemma \ref{le:normphi} states that $(1-\delta)^{\frac{1}{4}}\norm{\x} \le \norm{\x^0} \le (1+\delta)^{\frac{1}{4}}\norm{\x}$ with probability ate least $1-2\exp(-c\delta^2m)$.

	\begin{lemma} \label{le:epAi}
		Suppose that $\{\epsilon_i\}_{i=1}^m$ are independent sub-exponential random variables with $\max\limits_i\norm{\epsilon_i}_{\psi_1}=\sigma$.  Assume that $\A_i \sim \mcalN^{n\times n}(0,1)$. Then with probability at least $1-2\exp(-c' s)$, it holds
		\[
		\norm{\frac{1}{m}\sum\limits_{i=1}^{m} \epsilon_i  [\A_i]_{\mcalS,\mcalS} } \le C_0 \sigma \sqrt{ \frac{s \log(en/s)}m}
		\]
		for all $\mcalS\subset [n]$ with $|\mcalS| \le 3s$. Here, $c'$ and $C_0$ are universal positive constants.
	\end{lemma}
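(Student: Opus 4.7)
The plan is to combine a covering argument for the operator norm with a conditional Gaussian tail bound. Since $[\A_i]_{\mcalS,\mcalS}$ is a principal submatrix of $[\A_i]_{\mcalS',\mcalS'}$ whenever $\mcalS \subseteq \mcalS'$, and the spectral norm of a submatrix is dominated by that of the full matrix, it suffices to prove the bound uniformly over $\mcalS \subset [n]$ with $|\mcalS| = 3s$. First I would fix such an $\mcalS$, set $\M_{\mcalS} := \frac{1}{m}\sum_{i=1}^m \epsilon_i [\A_i]_{\mcalS,\mcalS}$, and pick a $1/4$-net $\mcalN$ of the unit sphere $\mathbb{S}^{3s-1}$ of cardinality $|\mcalN| \le 9^{3s}$, so that $\norm{\M_{\mcalS}} \le 2 \sup_{\u,\v \in \mcalN} |\u^\TT \M_{\mcalS}\v|$.

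Second, for a fixed triple $(\mcalS,\u,\v)$ I would introduce $g_i := \u^\TT [\A_i]_{\mcalS,\mcalS}\v = \sum_{k,l\in \mcalS} u_k v_l a^i_{kl}$. Since the entries of $\A_i$ are i.i.d.\ $\mathcal{N}(0,1)$ and $\norm{\u} = \norm{\v} = 1$, one has $g_i \sim \mathcal{N}(0,1)$, independent across $i$ and independent of $\{\epsilon_i\}_{i=1}^m$. Therefore, conditionally on $\{\epsilon_i\}$, the linear form $\u^\TT \M_{\mcalS}\v = \frac{1}{m}\sum_i \epsilon_i g_i$ is \emph{exactly} a centered Gaussian with variance $\frac{1}{m^2}\sum_i \epsilon_i^2$. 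This conditional Gaussianity is the key structural observation, and is what unlocks a Gaussian-rate tail bound rather than a weaker sub-Weibull bound.

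Third, I would control $R^2 := \sum_{i=1}^m \epsilon_i^2$ separately. Since $\norm{\epsilon_i}_{\psi_1} \le \sigma$ forces $\mbbE \epsilon_i^2 \le C\sigma^2$, a standard truncation and Hoeffding argument (cutting each $\epsilon_i$ at $C\sigma \log m$, which preserves every $\epsilon_i$ simultaneously with probability at least $1-m^{-5}$) yields an event $\Omega_0$ on which $R^2 \le C_1 m \sigma^2$ and $\mbbP(\Omega_0^c) \le \exp(-cm/\log^2 m)$. On $\Omega_0$, the Gaussian tail inequality gives, for every fixed $(\mcalS,\u,\v)$ and $t>0$,
\[
\mbbP\bigl( |\u^\TT \M_{\mcalS}\v| \ge t \,\big|\, \{\epsilon_i\}\bigr) \le 2\exp\Bigl(-\tfrac{t^2 m}{2C_1 \sigma^2}\Bigr).
\]
A union bound over the $\binom{n}{3s} \le (en/3s)^{3s}$ choices of $\mcalS$ and the $|\mcalN|^2 \le 9^{6s}$ pairs $(\u,\v)$, with $t = C_0 \sigma\sqrt{s\log(en/s)/m}$, produces an exponent $-\tfrac{C_0^2}{2C_1}\, s\log(en/s) + 3s\log(en/3s) + 6s\log 9$, which is $\le -c' s\log(en/s) \le -c' s$ once $C_0$ is chosen large enough. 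Combining with $\mbbP(\Omega_0^c)$ delivers the claim.

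The principal obstacle I anticipate is avoiding a spurious $\sqrt{\log m}$ factor. A direct attack treating $\epsilon_i g_i$ as a product of a sub-exponential and a sub-Gaussian variable (hence sub-Weibull of order $3/2$) and invoking a Bernstein-type bound injects such a factor, as does first truncating the Gaussians $g_i$ at $\sqrt{\log m}$ before applying Lemma~\ref{le:bern}. The order of conditioning is therefore essential: by conditioning on $\{\epsilon_i\}$ first the inner sum is \emph{exactly} Gaussian, so that only a single scalar tail for $\sum_i \epsilon_i^2$ is required, and that scalar estimate does not couple to the dimensions $s$ or $n$.
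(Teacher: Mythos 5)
Your proposal is correct and follows essentially the same route as the paper: a $1/4$-net over sparse unit vectors, conditioning on $\{\epsilon_i\}$ to exploit the exact Gaussianity of $\u^{\TT}\A_i\v$, a separate bound on $\sum_i\epsilon_i^2$ (the paper cites a lemma from \cite{cai2016optimal} for $\norm{\bm\epsilon}\lesssim\sigma\sqrt{m}$ where you sketch a truncation-plus-Hoeffding argument), and a union bound over nets and supports. The only difference is expository: the paper packages the conditional Gaussian step as an application of its sub-Gaussian Hoeffding inequality with coefficients $a_i=\epsilon_i$, which is exactly the conditioning you spell out.
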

	\begin{proof}
		For any fixed $\mcalS\subset [n]$ with $|\mcalS| \le 3s$,  define 
		\begin{equation*}
			\mcalX_{\mcalS} = \{\u \in \mbbS^{n-1}:{\rm supp}(\u) \subset \mcalS \},
		\end{equation*}
		where $\mathbb S^{n-1}:=\{\z \in \mathbb R^n : \norm{\z}=1\}$.
		Let $\mathcal{N}_{1/4}$ be a $1/4$-net of $\mcalX_{\mcalS} $ with cardinality $\abs{\mathcal{N}_{1/4}} \le 9^{3s}$. Then 
		\[
		\norm{\frac{1}{m}\sum\limits_{i=1}^{m} \epsilon_i  [\A_i]_{\mcalS,\mcalS} } \le 2\sup\limits_{\u,\v \in \mathcal{N}_{1/4}} \frac{1}{m}\sum\limits_{i=1}^{m} \epsilon_i  \u^\TT \A_i \v.
		\]
		Since $\u^\TT \A_i \v, i=1,\ldots, m$, are independent standard Gaussian random variables, Lemma \ref{le:Hoeff} gives
		\[
		\abs{\sum\limits_{i=1}^{m} \epsilon_i  \u^\TT \A_i \v} \le C_1 \norm{\bm \epsilon}\sqrt{s \log(en/s)} \le C_2 \sigma \sqrt{m s \log(en/s)}
		\]
		holds with probability at least $1-2\exp(-c_0 s \log (en/s))$. Here, we use \cite[Lemma A.7]{cai2016optimal}  in the last inequality, and $C_1, C_2>0, c_0>100$ are universal constants.  Combining the above estimators together and taking the union bound, we obtain that with probability at least $1-2\exp(-c'_0 s \log (mn))$, it holds
		\[
		\norm{\frac{1}{m}\sum\limits_{i=1}^{m} \epsilon_i  [\A_i]_{\mcalS,\mcalS} } \le 2 C_2 \sigma \sqrt{ \frac{s \log(en/s)}m}
		\]
		for any fixed $\mcalS\subset [n]$ with $|\mcalS| \le 3s$. Finally, taking the union bound over all  $\mcalS\subset [n]$ with $|\mcalS| \le 3s$ and noting that $\sum\limits_{k=1}^{3s} \left(\begin{array}{c} n\\k\end{array} \right) \le \left(en/s\right)^{3s}$, we arrive at the conclusion.
		
	\end{proof}

	\begin{lemma} \label{lemma2}
		Assume that $m\ge c \log{n}$ for some  constant $c>0$. 
		$
		y_i = \x^{\TT}\A_i\x + \epsilon_i, i = 1, \dots,m,
		$
		where the noise $\{\epsilon_i\}_{i=1}^m$ are independent sub-exponential random variables with \\ $\max\limits_i \norm{\epsilon_i}_{\psi_1}=\sigma$. Then with probability at least $1-4n^{-5}$,  it holds
		\begin{equation}
			\abs{Y_{jj}-x_j^2} \le C\left(1 +\frac{\sigma}{\norm{\x}^2}\right) \norm{\x}^2 \sqrt{\frac{\log{(mn)}}{m}} \quad \mbox{for} \quad j=1,\ldots,n.
			\label{eq:bound1}
		\end{equation}
		Here, $C>0$ is an absolute constant and $Y_{jj}$ is defined in Algorithm \ref{algorithm1:Initialization}.
		In particular, for all $j \in \mcalS^c$, here $\mcalS=\mbox{supp}(\x)$, with probability exceeding $1-4n^{-5}$, one has
		\begin{equation}
			Y_{jj}  \le C  \left(1 +\frac{\sigma}{\norm{\x}^2}\right) \norm{\x}^2 \sqrt{\frac{\log{(mn)}}{m}}. 
			\label{eq:bound2}
		\end{equation}
	\end{lemma}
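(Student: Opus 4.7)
The plan is to decompose $Y_{jj}-x_j^2$ into a ``clean'' chaos concentration part and a noise part, bound each by Bernstein's inequality (Lemma \ref{le:bern}), and then take a union bound over $j\in[n]$. Writing $y_i = \x^\TT\A_i\x + \epsilon_i$, one gets
\[
Y_{jj}-x_j^2 = \underbrace{\frac{1}{m}\sum_{i=1}^m (\x^\TT\A_i\x)\,a_{jj}^i - x_j^2}_{T_1(j)} + \underbrace{\frac{1}{m}\sum_{i=1}^m \epsilon_i\,a_{jj}^i}_{T_2(j)}.
\]

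To handle $T_1(j)$, I would split the quadratic form as $\x^\TT\A_i\x = x_j^2 a_{jj}^i + U_i^{(j)}$ with $U_i^{(j)} := \sum_{(k,l)\neq(j,j)} x_k x_l a_{kl}^i$, which by the independence of entries of $\A_i$ is a centered Gaussian variable independent of $a_{jj}^i$ with variance $\norm{\x}^4 - x_j^4 \le \norm{\x}^4$. Then $(\x^\TT\A_i\x)a_{jj}^i = x_j^2 (a_{jj}^i)^2 + U_i^{(j)} a_{jj}^i$, whose expectation is $x_j^2$, so
\[
T_1(j) = \frac{1}{m}\sum_{i=1}^m \bigl[x_j^2\bigl((a_{jj}^i)^2-1\bigr) + U_i^{(j)} a_{jj}^i\bigr].
\]
Each summand is centered, and by the product-of-sub-gaussians lemma from the paper, its sub-exponential norm is at most $C(x_j^2 + \norm{\x}^2) \le 2C\norm{\x}^2$. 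Applying Lemma \ref{le:bern} with $t = c_1\norm{\x}^2\sqrt{\log(mn)/m}$, which under the hypothesis $m\gtrsim\log n$ lies in the sub-Gaussian regime of Bernstein, gives $|T_1(j)| \le c_1\norm{\x}^2\sqrt{\log(mn)/m}$ with probability at least $1-2(mn)^{-6}$.

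For $T_2(j)$, I would use a conditioning argument: given $\bm{\epsilon}=(\epsilon_1,\ldots,\epsilon_m)$, the variable $T_2(j)$ is Gaussian with variance $\frac{1}{m^2}\sum_i \epsilon_i^2$. A standard Gaussian tail bound yields $|T_2(j)| \le c_2 \sqrt{(\log(mn)/m^2)\sum_i \epsilon_i^2}$ with probability at least $1-2(mn)^{-6}$ conditionally, and since $\epsilon_i^2$ is sub-exponential with norm $O(\sigma^2)$, another application of Lemma \ref{le:bern} gives $\frac{1}{m}\sum_i \epsilon_i^2 \le 2\sigma^2$ with probability $\ge 1 - 2\exp(-cm)$. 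Combining yields $|T_2(j)| \le c_3\sigma\sqrt{\log(mn)/m}$.

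Adding the two bounds and writing $\norm{\x}^2 + \sigma = (1+\sigma/\norm{\x}^2)\norm{\x}^2$ proves \eqref{eq:bound1} for each fixed $j$ with probability $\ge 1-4(mn)^{-6}$; a union bound over $j\in[n]$ yields the desired probability $1-4n^{-5}$. The bound \eqref{eq:bound2} for $j\in\mcalS^c$ follows immediately upon noting that $x_j=0$ there. The only mildly delicate step is the noise term $T_2(j)$, because the product of a sub-exponential and a sub-gaussian is not in general sub-exponential with the convenient norm bound $\|\epsilon_i\|_{\psi_1}\|a_{jj}^i\|_{\psi_2}$; the conditioning trick above bypasses this cleanly and is the main technical obstacle I anticipate.
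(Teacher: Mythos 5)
Your decomposition $Y_{jj}-x_j^2 = T_1(j) + T_2(j)$, the Bernstein bound on $T_1(j)$, and the Gaussian-conditioning trick for $T_2(j)$ all mirror the paper's proof (the paper applies the product-of-sub-gaussians lemma directly to $(\x^\TT\A_i\x)a_{jj}^i$ rather than splitting off the $x_j^2(a_{jj}^i)^2$ diagonal term, but this is cosmetic and both give $\norm{\cdot}_{\psi_1}\le C\norm{\x}^2$). Your $T_1$ analysis is correct and, if anything, a bit more explicit about where independence is used.

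However, there is a genuine gap in your treatment of $T_2(j)$, and it is precisely at the step you flagged as ``delicate.'' You claim that ``$\epsilon_i^2$ is sub-exponential with norm $O(\sigma^2)$'' and apply Lemma~\ref{le:bern} to $\frac1m\sum_i\epsilon_i^2$. This is false under the lemma's hypothesis: $\epsilon_i$ is only assumed \emph{sub-exponential} ($\psi_1$), not sub-gaussian, so $\epsilon_i^2$ lives in $\psi_{1/2}$ rather than $\psi_1$, and Bernstein's inequality as stated does not apply to it. (It is the square of a \emph{sub-gaussian} that is sub-exponential, not the square of a sub-exponential.) The conclusion $\frac1m\sum_i\epsilon_i^2\lesssim\sigma^2$ with exponentially high probability is still true, but you need a different tool — for instance a tail bound for sums of $\psi_{1/2}$ random variables, a truncation argument, or the bound $\norm{\bm\epsilon}\le C\sigma\sqrt{m}$ from \cite[Lemma~A.7]{cai2016optimal} that the paper invokes for exactly this purpose. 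Once that one step is repaired, the remainder of your argument (Gaussian tail conditional on $\bm\epsilon$, union bound over $j\in[n]$, and the observation that $x_j=0$ on $\mcalS^c$ gives \eqref{eq:bound2}) goes through and matches the paper.
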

	\begin{proof}  
		Denote the noise vector as ${\bm \epsilon} = (\epsilon_1,...,\epsilon_m)^{\TT}$.
		Note that 
		\[
		Y_{jj} = \frac{1}{m}\sum\limits_{i=1}^{m}y_ia_{jj}^i = \frac{1}{m}\sum\limits_{i=1}^{m}  (\x^{\TT}\A_i\x) a_{jj}^i+ \frac{1}{m}\sum\limits_{i=1}^{m}\epsilon_ia_{jj}^i.
		\]
		For the first term, 
		a simple calculation gives that the expectation   $\mbbE[Y_{jj}]=x_j^2$. Since $(\x^{\TT}\A_i\x)a_{jj}^i, i=1,\ldots,m$ are independent sub-exponential random variables with \begin{equation*}
			\norm{(\x^{\TT}\A_i\x) a^i_{jj}}_{\psi_1} \le \norm{(\x^{\TT}\A_i\x)}_{\psi_2}\norm{a^i_{jj}}_{\psi_2} \le c_1\norm{\x}^2
		\end{equation*}
		for some constant $c_1\ge 1$, it then follows from Lemma \ref{le:bern} that for any $t_1 \ge 0$,
		\begin{equation*}
			\mbbP\left(\abs{\frac{1}{m}\sum\limits_{i=1}^{m}(\x^{\TT}\A_i\x)a_{jj}^i-x_j^2} > t_1\right) \le 2\exp\left(-C_1\min\left(\frac{t_1^2}{c_1^2 \norm{\x}^4},\frac{t_1}{c_1 \norm{\x}^2}\right)m\right),
		\end{equation*}
		where $C_1>0$ is a universal constant. Setting $\textstyle t_1=c_1\norm{\x}^2\sqrt{\displaystyle\frac{6\log{(mn)}}{C_1m}} $, we have 
		\[
		\mbbP\left(\abs{\frac{1}{m}\sum\limits_{i=1}^{m}(\x^{\TT}\A_i\x)a_{jj}^i-x_j^2} >c_1\norm{\x}^2\sqrt{\displaystyle\frac{6\log{(mn)}}{C_1m}}   \right) \le 2 (mn)^{-6},
		\]	
		provided $m\ge 6C_1^{-1}  \log (mn)$.	 Taking the union bound over $j\in [n]$, we obtain that 
		\[
		\sup_{j\in [n]} \quad \abs{\frac{1}{m}\sum\limits_{i=1}^{m}(\x^{\TT}\A_i\x)a_{jj}^i-x_j^2}  \le C_2 \norm{\x}^2\sqrt{\displaystyle\frac{\log{(mn)}}{m}} 
		\]
		holds with probability at least $1-2(mn)^{-5}$.
		For the second term $\frac{1}{m}\sum\limits_{i=1}^{m}\epsilon_ia_{jj}^i$, conditionally on the $\epsilon_i$'s, Lemma \ref{le:Hoeff} gives that 
		\[
		\mbbP\left(\abs{\sum\limits_{i=1}^{m}\epsilon_ia_{jj}^i } > C_3\norm{\bm\epsilon}\sqrt{\log(mn)}\right) \le 2(mn)^{-6}.
		\]
		Taking the union bound over $j\in [n]$, we obtain that
		\[
		\max_{1\le j\le n} \abs{\sum\limits_{i=1}^{m}\epsilon_ia_{jj}^i } \le  C_3 \norm{\bm \epsilon}\sqrt{\log (mn)} \le C_4 \sigma \sqrt{m\log (mn)}
		\]
		holds with probability at least $1-2n^{-5}$,  where we use \cite[Lemma A.7]{cai2016optimal}  in the last inequality, and $C_3, C_4>0$ are universal constants.
		Combining  the two bounds, we obtain that with probability at least $1-4n^{-5}$, it holds
		\begin{equation*}
			\abs{{Y}_{jj}-x_j^2} \le C \left(1 +\frac{\sigma}{\norm{\x}^2}\right) \norm{\x}^2 \sqrt{\frac{\log{(mn)}}{m}} 
		\end{equation*}
		for all $j=1,\ldots,n$. Here, $C>0$ is a universal constant.
		For all $j \in \mcalS^c$, noting that $\mbbE[Y_{jj}]=0$, it gives
		\begin{equation*}
			Y_{jj}  \le C  \left(1 +\frac{\sigma}{\norm{\x}^2}\right) \norm{\x}^2 \sqrt{\frac{\log{(mn)}}{m}}  \quad \mbox{for all} \quad j\in  \mcalS^c.
		\end{equation*} 
		This complete the proof.
	\end{proof}

	Now, we are ready to prove Theorem \ref{Theorem1}.  
	First, we establish that the true signal $\x$ is well-approximated by its projection $\x_{\widehat{\mathcal{S}}}$ onto the estimated support set $\widehat{\mathcal{S}}$. Second, we demonstrate that this projection $\x_{\widehat{\mathcal{S}}}$ closely approximates our initial estimate $\x^0$. The triangle inequality then yields the desired result.

	\begin{lemma} \label{le:normshat}
		Let $\widehat{\mcalS}$ be the set given in Algorithm \ref{algorithm1:Initialization}. For any $\delta_1 > 0$,
		with probability at least $1-4n^{-5}$,  it holds
		\begin{equation*}
			\norm{\x-\x_{\hat{\mcalS}}} \le \delta_1\norm{\x}
		\end{equation*}
		when $m \ge C(\delta_1)\left(1 +\frac{\sigma}{\norm{\x}^2}\right)^2 s^2 \log (mn)$.
		Here, $C(\delta_1) >0$ is a constant depends only on  $\delta_1$.
		\label{lemma4}
	\end{lemma}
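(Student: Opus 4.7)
The plan is to combine the pointwise estimate from Lemma \ref{lemma2} with a combinatorial counting argument that controls how many support indices can possibly be missed by $\widehat{\mcalS}$.

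First, I would write
\[
\tau := C\left(1 +\tfrac{\sigma}{\norm{\x}^2}\right) \norm{\x}^2 \sqrt{\tfrac{\log{(mn)}}{m}}
\]
and invoke Lemma \ref{lemma2} on the event $\mathcal{E}$ (of probability at least $1-4n^{-5}$) on which $|Y_{jj}-x_j^2| \le \tau$ for all $j \in [n]$, and in particular $Y_{jj} \le \tau$ for every $j \in \mcalS^c$. The rest of the argument is deterministic on $\mathcal{E}$.

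Next I would set $A := \mcalS \setminus \widehat{\mcalS}$ and $B := \widehat{\mcalS} \setminus \mcalS$. Since $|\mcalS|\le s$ and $|\widehat{\mcalS}|=s$, a one-line count gives $|B|-|A| = s - |\mcalS|\ge 0$, so $|B|\ge |A|$. The key observation is: if $j \in A$ then $j \notin \widehat{\mcalS}$, so $Y_{jj} \le \min_{i\in \widehat{\mcalS}} Y_{ii} \le \min_{i\in B} Y_{ii}$, and since $B\subset \mcalS^c$, every such minimum is bounded by $\tau$. Hence $Y_{jj}\le \tau$ for every $j\in A$, and combining with $|Y_{jj}-x_j^2|\le \tau$ yields $x_j^2\le 2\tau$ for each $j\in A$.

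The final step is the aggregation:
\[
\norm{\x - \x_{\widehat{\mcalS}}}^2 = \sum_{j\in A} x_j^2 \le |A|\cdot 2\tau \le 2s\tau.
\]
Demanding $2s\tau \le \delta_1^2\norm{\x}^2$ and plugging in the definition of $\tau$ gives the requirement
\[
m \ge \frac{4C^2}{\delta_1^4}\left(1 +\tfrac{\sigma}{\norm{\x}^2}\right)^2 s^2 \log(mn),
\]
which is exactly the hypothesis with $C(\delta_1):= 4C^2/\delta_1^4$. No step looks genuinely hard: the only subtle point is the elementary observation $|B|\ge |A|$, which is what lets us lift the $\mcalS^c$-bound from Lemma \ref{lemma2} to control the missed indices in $\mcalS$.
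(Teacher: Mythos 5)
Your proof is correct and follows essentially the same route as the paper: both invoke Lemma~\ref{lemma2} to get the uniform bound $|Y_{jj}-x_j^2|\le\tau$ and $Y_{jj}\le\tau$ on $\mcalS^c$, then argue that every true-support index missed by $\widehat{\mcalS}$ must satisfy $x_j^2\le 2\tau$, and sum to get $\|\x-\x_{\widehat{\mcalS}}\|^2\le 2s\tau$. The only cosmetic difference is that the paper packages the argument via the auxiliary set $\mcalS_+=\{j\in\mcalS:x_j^2>2\tau\}$ and shows $\mcalS_+\subset\widehat{\mcalS}$, whereas you compare the missed set $A$ directly against the false positives $B$ (with the same $|B|\ge|A|$ counting implicitly at work in both).
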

	\begin{proof} 
		Define 
		\begin{equation*}
			\mcalS_+=\left\{j \in \mcalS \mid x_j^2> 2C \left(1 +\frac{\sigma}{\norm{\x}^2}\right) \norm{\x}^2 \sqrt{\frac{\log{(mn)}}{m}} \right\},
		\end{equation*}
		where the constant $C>0$ corresponds exactly to the one in \eqref{eq:bound1}. We first show that with high probability it holds $\mcalS_+ \subset \widehat{\mcalS}$. To this end, for any $ j \in \mcalS_+$, it follows from Lemma \ref{lemma2} that with probability at least $1-4n^{-5}$, one has
		\begin{equation*}
			Y_{jj} \ge x_j^2-\abs{Y_{jj}-x_j^2} \ge C\left(1 +\frac{\sigma}{\norm{\x}^2}\right) \norm{\x}^2 \sqrt{\frac{\log{(mn)}}{m}}.
		\end{equation*}
		And for any $j \in \mcalS^c$, it holds
		\[
		Y_{jj} \le C\left(1 +\frac{\sigma}{\norm{\x}^2}\right) \norm{\x}^2 \sqrt{\frac{\log{(mn)}}{m}}. 
		\]
		Since $\widehat{\mcalS}$ contains the indices of the maximum $s$ elements of $Y_{jj}$, it gives $\mcalS_+ \subset \widehat{\mcalS} \subset \mcalS$. Therefore, we have
		\begin{eqnarray*}
			\norm{\x-\x_{_{\widehat{\mcalS}}}}^2  \le  \norm{\x- \x_{\mcalS_+ }}^2 \le  s\cdot 2C\left(1 +\frac{\sigma}{\norm{\x}^2}\right)\norm{\x}^2\sqrt{\frac{\log{(mn)}}{m}}  \le  \delta_1^2\norm{\x}^2,
		\end{eqnarray*}	
		provided that $m\ge 4C^2 \delta_1^{-4}\left(1 +\frac{\sigma}{\norm{\x}^2}\right)^2 s^2 \log (mn)$. This completes the proof.
	\end{proof}

	\begin{lemma} \label{le:5}
		\label{lemma5}
		For any fixed $0<\delta<1$,  then with probability at least $1-4n^{-5}-4\exp(-c\delta^2 m)-2\exp(-c's)$, we have
		\begin{equation*}
			{\rm dist}\left(\x^0,\x_{_{\widehat{\mcalS}}}\right) := \min\left(\norm{\x^0-\x_{_{\widehat{\mcalS}}}},\norm{\x^0+\x_{_{\widehat{\mcalS}}}}\right)  \le \delta \norm{\x},
		\end{equation*} 
		provided that $m \ge C(\delta)(1+\frac{\sigma}{\norm{x}^2})^2s^2\log(mn)$. Here, $c, c'$ are universal positive constant and $C(\delta)$ is a constant depends on $\delta$.
	\end{lemma}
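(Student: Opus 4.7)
The plan is to decompose the estimation error into three sources: (i) the magnitude estimate $\phi$ deviating from $\|\x\|$, (ii) the support identification error captured by Lemma~\ref{lemma4}, and (iii) the perturbation of the leading singular vector of $\Y_{\widehat{\mcalS}}$ from that of its mean. I will combine these via a perturbation argument of Wedin/Davis--Kahan type.

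\textbf{Step 1 (Scaling and support).} Fix small constants $\delta_1,\delta_2\in(0,1)$ to be chosen as functions of $\delta$. Applying Lemma~\ref{le:normphi} with parameter $\delta_1$ yields $|\phi^4-\|\x\|^4|\le \delta_1\|\x\|^4$, hence $|\phi-\|\x\||\le c_0\delta_1\|\x\|$, with probability at least $1-2\exp(-c\delta_1^2 m)$. Lemma~\ref{lemma4} then gives the inclusion $\widehat{\mcalS}\subset\mcalS$ (implicit in its proof) and $\|\x-\x_{\widehat{\mcalS}}\|\le\delta_1\|\x\|$ with probability at least $1-4n^{-5}$, provided $m\gtrsim C(\delta_1)(1+\sigma/\|\x\|^2)^2 s^2\log(mn)$. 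Consequently $\|\x_{\widehat{\mcalS}}\|\in[(1-\delta_1)\|\x\|,\|\x\|]$.

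\textbf{Step 2 (Uniform spectral concentration).} The core step is to show
\[
\sup_{S\subset[n],\,|S|\le s}\bigl\|\Y_S-\x_S\x_S^{\TT}\bigr\|_{\mathrm{op}}\le \delta_2\|\x\|^2
\]
with high probability. Split $\Y_S=\Y_S^{(1)}+\Y_S^{(2)}$ with $\Y_S^{(1)}=\frac{1}{m}\sum_i(\x^{\TT}\A_i\x)[\A_i]_{S,S}$ and $\Y_S^{(2)}=\frac{1}{m}\sum_i\epsilon_i[\A_i]_{S,S}$. The noise piece is already handled by Lemma~\ref{le:epAi}, yielding $\|\Y_S^{(2)}\|_{\mathrm{op}}\lesssim \sigma\sqrt{s\log(en/s)/m}$ uniformly in $|S|\le s$ with probability $\ge 1-2\exp(-c's)$. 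For the signal piece, a direct calculation shows $\mathbb{E}[\Y_S^{(1)}]=\x_S\x_S^{\TT}$. I will bound $\|\Y_S^{(1)}-\x_S\x_S^{\TT}\|_{\mathrm{op}}$ by a standard $\epsilon$-net argument: the supremum over unit vectors $\u,\v$ supported on $S$ of $\u^{\TT}(\Y_S^{(1)}-\x_S\x_S^{\TT})\v$ is controlled via Bernstein's inequality (Lemma~\ref{le:bern}) applied to the sub-exponential variables $(\x^{\TT}\A_i\x)(\u^{\TT}\A_i\v)$, using that the product of two Gaussians is sub-exponential with $\psi_1$-norm $\lesssim\|\x\|^2$. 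A $1/4$-net on the $|S|$-dimensional sphere has cardinality $9^s$; after a union bound over the $\binom{n}{s}\le (en/s)^s$ choices of $S$, one obtains the desired uniform bound provided $m\gtrsim C(\delta_2)s\log(en/s)$, which is absorbed by the $s^2\log(mn)$ sample requirement.

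\textbf{Step 3 (Perturbation and conclusion).} On the good event of Steps~1--2, $\mathbb{E}[\Y_{\widehat{\mcalS}}]=\x_{\widehat{\mcalS}}\x_{\widehat{\mcalS}}^{\TT}$ is rank one with top singular value $\|\x_{\widehat{\mcalS}}\|^2\ge(1-\delta_1)^2\|\x\|^2$ and all other singular values zero. Since the perturbation $\Y_{\widehat{\mcalS}}-\x_{\widehat{\mcalS}}\x_{\widehat{\mcalS}}^{\TT}$ has operator norm at most $\delta_2\|\x\|^2$, Weyl's inequality (Lemma~\ref{le: Weyl}) gives a spectral gap of at least $(1-\delta_1)^2\|\x\|^2-2\delta_2\|\x\|^2$, and Wedin's $\sin\Theta$ theorem yields
\[
\min_{\eta\in\{\pm1\}}\bigl\|\v-\eta\,\x_{\widehat{\mcalS}}/\|\x_{\widehat{\mcalS}}\|\bigr\|\lesssim \frac{\delta_2\|\x\|^2}{(1-\delta_1)^2\|\x\|^2-2\delta_2\|\x\|^2}\lesssim \delta_2,
\]
for $\delta_1,\delta_2$ small enough. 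Then, up to a sign,
\[
\|\x^0-\x_{\widehat{\mcalS}}\|=\|\phi\v-\x_{\widehat{\mcalS}}\|\le \phi\,\bigl\|\v-\x_{\widehat{\mcalS}}/\|\x_{\widehat{\mcalS}}\|\bigr\|+\bigl|\phi-\|\x_{\widehat{\mcalS}}\|\bigr|\lesssim (\delta_1+\delta_2)\|\x\|,
\]
and choosing $\delta_1,\delta_2$ as small constant multiples of $\delta$ finishes the proof, with the listed failure probabilities coming from Steps~1--2. The main obstacle is Step~2: producing the uniform spectral concentration with the correct dependence on $(1+\sigma/\|\x\|^2)$ and with the optimal $s\log(en/s)$ complexity, which requires carefully handling the correlation between $\x^{\TT}\A_i\x$ and the bilinear test functionals $\u^{\TT}\A_i\v$ in the Bernstein step, while the $s^2$ bottleneck is inherited from the support-identification phase of Lemma~\ref{lemma4}.
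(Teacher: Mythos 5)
Your proof follows essentially the same route as the paper's: control $\|\phi-\|\x\|\|$ via Lemma~\ref{le:normphi}, control $\|\x-\x_{\widehat{\mcalS}}\|$ via Lemma~\ref{lemma4}, establish the uniform concentration $\sup_{|\mcalS'|\le s}\|\Y_{\mcalS'}-\x_{\mcalS'}\x_{\mcalS'}^{\TT}\|\le\delta_2\|\x\|^2$ by splitting into signal and noise terms and using a $1/4$-net with Bernstein's inequality plus Lemma~\ref{le:epAi}, then conclude with Wedin's $\sin\Theta$ theorem and a triangle inequality. The decomposition, the key lemmas invoked, and the sample-complexity accounting all match the paper's argument.
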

	\begin{proof} 
		A simple calculation gives
		\begin{equation*}
			\mbbE\left[\Y\right]=\mbbE\left[\frac{1}{m}\sum\limits_{i=1}^{m}y_i\A_i\right]=\x\x^{\TT}.
		\end{equation*}
		This implies that $\mbbE\left[\Y_{\widehat{\mcalS}}\right] =\x_{_{\widehat{\mcalS}}}\x_{_{\widehat{\mcalS}}}^{\TT}$.
		We claim that for any $0<\delta_2<1$, when $m\ge C_0\delta_2^{-2} s \log(en/s)$ with large positive constant $C_0$, with probability at least $1-2\exp(-c_3 \delta_2^2 m)-2\exp(-c's)$,
		\begin{equation} \label{eq:cla1}
			\norm{\Y_{\mcalS'}-\mbbE\left[\Y_{\mcalS'}\right]} \le \delta_2 \norm{\x}^2
		\end{equation}
		holds for all $\mcalS' \subset [n]$ with $|\mcalS'| \le s$.  Note that $|\widehat{\mcalS}|=s$. It implies that 
		\begin{equation} \label{eq:conYsp}
			\norm{\Y_{\widehat{\mcalS}}-\mbbE\left[\Y_{\widehat{\mcalS}}\right]} \le \delta_2 \norm{\x}^2. 
		\end{equation}
		Since $\x^0$ is the leading left singular vector of $\Y_{\widehat{\mcalS}}$, a variant of Wedin's sin$\Theta$ theorem \cite[Theorem 2.1]{Dopico} gives
		\begin{equation} \label{eq:xno1}
			\min_{\alpha=\pm 1}  \norm{\frac{\x^0}{\norm{\x^0}}- \alpha \frac{\x_{\widehat{\mcalS}}}{\norm{\x_{\widehat{\mcalS}}}}}   \le \frac{c_1 \norm{\Y_{\widehat{\mcalS}}-\mbbE\left[\Y_{\widehat{\mcalS}}\right]} }{\sigma_1(\mbbE\left[\Y_{\widehat{\mcalS}}\right])-\sigma_2(\Y_{\widehat{\mcalS}})} \le \frac{c_1 \delta_2 \norm{\x}^2}{\norm{\x}^2- \delta_2 \norm{\x}^2}=\frac{c_1 \delta_2}{1-\delta_2}.
		\end{equation}
		where we use Lemma \ref{le: Weyl} in the second inequality.
		When the variance of the noise is much smaller compared to the magnitude of the measurement, Lemma \ref{le:normphi} still remains valid.
		According to Lemma \ref{le:normphi} and Lemma \ref{le:normshat}, for any $0<\delta_1<1$,  with probability at least $1-4n^{-5}-2\exp(-c\delta_1^2 m)$ we have
		\begin{equation} \label{eq:xno2}
			(1-\delta_1) \norm{\x} \le \max\{\norm{\x^0},\norm{\x_{\widehat{\mcalS}}} \} \le (1+\delta_1)\norm{\x},
		\end{equation}
		provided that $m \ge C(\delta_1)(1+\frac{\sigma}{\norm{x}^2})^2s^2\log{(mn)}$.
		Combining \eqref{eq:xno1} and \eqref{eq:xno2}, we can easily obtain that 
		\begin{eqnarray*}
			\min_{\alpha=\pm 1}  \norm{\x^0- \alpha \x_{_{\widehat{\mcalS}}}} &\le& \min_{\alpha=\pm 1}  \norm{\frac{\x^0}{\norm{\x^0}}- \alpha \frac{\x_{\widehat{\mcalS}}}{\norm{\x_{\widehat{\mcalS}}}}} \norm{\x} +  \abs{\norm{\x^0}-\norm{\x}} + \abs{\norm{\x_{\widehat{\mcalS}}}-\norm{\x}} \\
			& \le &\frac{c_1 \delta_2}{1-\delta_2}\norm{\x}+2\delta_1\norm{\x} \le \delta \norm{\x}
		\end{eqnarray*}
		by letting $\delta_1$ and $\delta_2$ be small enough. This gives the conclusion.
		
		It remains to prove the claim \eqref{eq:cla1}. By the triangle inequality, we obtain
		\begin{equation*}
			\begin{aligned}
				&\norm{\frac{1}{m}\sum\limits_{i=1}^{m}y_i[\A_i]_{\mcalS',\mcalS'}-\x_{_{\mcalS'}}\x_{_{\mcalS'}}^{\TT}}\\ &\le \norm{\frac{1}{m}\sum\limits_{i=1}^{m}\x^{\TT}\A_i\x[\A_i]_{\mcalS',\mcalS'}-\x_{_{\mcalS'}}\x_{_{\mcalS'}}^{\TT}}+\norm{\frac{1}{m}\sum\limits_{i=1}^{m}\epsilon_i[\A_i]_{\mcalS',\mcalS'}}.
			\end{aligned}
		\end{equation*}
		For any fixed $\mcalS' \subset [n]$ with $|\mcalS'| \le s$. Let $\mathcal{N}_\epsilon$ be a $\frac{1}{4}$-net of the unit sphere $\mathbb S^{s-1}$. For the first term, we have
		\begin{equation*}
			\begin{aligned}
				&\norm{\frac{1}{m}\sum\limits_{i=1}^{m}\x^{\TT}\A_i\x[\A_i]_{\mcalS',\mcalS'}-\x_{_{\mcalS'}}\x_{_{\mcalS'}}^{\TT}} \\
				&\le 2\sup\limits_{\u,\v \in \mcalN_{\1/4}}\left|\frac{1}{m}\sum\limits_{i=1}^{m}\x^{\TT}\A_i\x\u^{\TT}[\A_i]_{\mcalS',\mcalS'}\v-(\u^{\TT}\x_{_{\mcalS'}})(\v^{\TT}\x_{_{\mcalS'}})\right|. 
			\end{aligned}
		\end{equation*}
		Since $\textstyle \x^{\TT}\A_i\x\u^{\TT}[\A_i]_{\mcalS',\mcalS'}\v, \ i=1,...,m$, are independent sub-exponential variables, Lemma \ref{le:bern} gives
		\begin{eqnarray*}
			\mbbP&\left(\left|\frac{1}{m}\sum\limits_{i=1}^{m}\x^{\TT}\A_i\x\u^{\TT}[\A_i]_{\mcalS',\mcalS'}\v-(\u^{\TT}\x_{_{\mcalS'}})(\v^{\TT}\x_{_{\mcalS'}})\right|\ge t_2\right)\\
			& \le 2\exp\left(-C_1 \min\left(\frac{t_2^2}{K_2^2},\frac{t_2}{K_2}\right)m\right),
		\end{eqnarray*}
		where $\textstyle K_2=\max\limits_{i}\norm{\x^{\TT}\A_i\x\u^{\TT}[\A_i]_{\mcalS',\mcalS'}\v-(\u^{\TT}\x_{_{\mcalS'}})(\v^{\TT}\x_{_{\mcalS'}})}_{\psi_1} $ obeys
		\begin{eqnarray*}
			K_2 &\le& c_1\max\limits_{i}\norm{\x^{\TT}\A_i\x\u^{\TT}[\A_i]_{\mcalS',\mcalS'}\v}_{\psi_{1}}\\
			&\le& c_2\norm{\x^{\TT}\A_i\x}_{\psi_{2}}\norm{\u^{\TT}[\A_i]_{\mcalS',\mcalS'}\v}_{\psi_{2}}\le c_2\norm{\x}^2.
		\end{eqnarray*}
		Setting $\textstyle t_2=\frac{\delta_3}{2} \norm{\x}^2$, taking the union bound over $\mathcal{N}_\epsilon$, we obtain that with probability at least $1-2\exp(-c_0 \delta_3^2 m)$ with a universal constant $c_0$, it holds
		\[
		\norm{\frac{1}{m}\sum\limits_{i=1}^{m}\x^{\TT}\A_i\x[\A_i]_{\mcalS',\mcalS'}-\x_{_{\mcalS'}}\x_{_{\mcalS'}}^{\TT}} \le \delta_3 \norm{\x}^2,
		\]
		provided $m\ge C_0\delta_3^{-2} s$ for some large constant $C_0>0$. Note that $\sum\limits_{k=1}^{s} \left(\begin{array}{c} n\\k\end{array} \right) \le \left(en/s\right)^{s}$. Taking the union bound over all $\mcalS'$, we obtain the conclusion that 
		\[
		\norm{\frac{1}{m}\sum\limits_{i=1}^{m}\x^{\TT}\A_i\x[\A_i]_{\mcalS',\mcalS'}-\x_{_{\mcalS'}}\x_{_{\mcalS'}}^{\TT}} \le \delta_3 \norm{\x}^2 \qquad \mbox{for all} \quad |\mcalS' |\le s
		\]
		holds with probability at least 
		\[
		1- \left(en/s\right)^{s}\cdot 2\exp(-c_0 \delta_3^2 m) \ge 1-2\exp(-c_3 \delta^2 m),
		\]
		provided $m\ge C_0\delta^{-2} s \log(en/s)$ with any fixed $0<\delta<1$. Combining with Lemma \ref{le:epAi}, we have 
		\begin{equation*}
			\norm{\frac{1}{m}\sum\limits_{i=1}^{m} \epsilon_i  [\A_i]_{\mcalS,\mcalS} } \le C_0 \sigma \sqrt{ \frac{s \log(en/s)}m} \le \delta_3\norm{\x}^2. 
		\end{equation*}
		We complete the proof.
	\end{proof}

	Now, we are ready to prove Theorem \ref{Theorem1}.
	
	\begin{proof}[Proof of Theorem \ref{Theorem1}]
		
		According to Lemma \ref{le:normshat}, with probability at least $1-4n^{-5}$, when $m \ge C(\delta_0)\left(1 +\frac{\sigma}{\norm{\x}^2}\right)^2 s^2 \log (mn)$,  we have
		\begin{equation*}
			\norm{\x-\x_{\hat{\mcalS}}} \le \frac{\delta_0}2\norm{\x}.
		\end{equation*}
		Lemma \ref{le:5} gives that  when $m \ge C(\delta_0)(1+\frac{\sigma}{\norm{\x}^2})^2s^2\log{(mn)}$, with probability at least $1-4n^{-5}-4\exp(-c\delta_0^2 m)-2\exp(-c's)$, it holds
		\begin{equation*}
			\min\left(\norm{\x^0-\x_{_{\widehat{\mcalS}}}},\norm{\x^0+\x_{_{\widehat{\mcalS}}}}\right)  \le \frac{\delta_0}2\norm{\x}.
		\end{equation*} 
		Combining the above estimators and applying the triangle inequality, we immediately obtain that,  with  probability at least $1-8n^{-5}-4\exp(-c\delta_0^2 m)-2\exp(-c's)$, 
		\begin{equation*}
			{\rm dist}\left(\x^0,\x\right)=\min\left(\norm{\x^0-\x},\norm{\x^0+\x}\right) \le \delta_0\norm{\x}
		\end{equation*}
		holds, provided $m \ge C(\delta_0)\left(1 +\frac{\sigma}{\norm{\x}^2}\right)^2 s^2 \log (mn)$. This completes the proof.	
	\end{proof}

	\subsection{Proof of Theorem \ref{Theorem3}} \label{sec:Thnoisy}
	
	Our convergence analysis follows from the strategy used in \cite{cai2023fast, Dai}, with key modifications to address the sparse quadratic setting. We begin by establishing several auxiliary lemmas.
	
	First, for any index subset $\mcalS \subseteq [n]$, we define
	\begin{equation*}
		\mcalX_{\mcalS} = \{\u \in \mbbS^{n-1}:{\rm supp}(\u) \subset \mcalS \},
	\end{equation*} 
	where $\mathbb S^{n-1}:=\{\z \in \mathbb R^n : \norm{\z} = 1\}$. The following lemma extends \cite[Lemma 12]{chen2025solving}. Whereas the original result applied only to a fixed subset $\mcalS \subseteq [n]$, our result establishes uniform guarantees that hold simultaneously for all $\mcalS \subseteq [n]$.

	%	\begin{lemma}
		%		Given any $\textstyle \p,\q \in \mbbR^n$, $\A_i \sim \mcalN^{n\times n}(0,1)$, we have the following expectations:
		%		\begin{equation}
			%		\mbbE\left[\A_i\p\q^{\TT}\A_i\right]=\q\p^{\TT},
			%	    \end{equation}
		%	    \begin{equation}
			%		\mbbE\left[\A_i\p\q^{\TT}\A_i^{\TT}\right]=(\p^{\TT}\q)\I_{n \times n}.
			%		\end{equation}
		%		And for the first term, it has one singular value of $(\p^{\TT}\q)$ and all other singular value are $0$. For the second term, all of its singular values are $(\p^{\TT}\q)$. 
		%		\label{lemma6}
		%	\end{lemma}
	%	\begin{proof} 
		%	Through a simple calculation, the first and second relations can be directly established. By the definition of singular value, the remainder of the lemma holds.
		%	\end{proof}

	\begin{lemma} \label{lemma7}
		Assume that $\A_i \sim \mcalN^{n\times n}(0,1)$. For any integer $0<s <n$, if $m\ge C(\delta)s\log{n}$, then with probability at least $\textstyle 1-2\exp(-\gamma_1m)$, it holds
		\begin{equation}
			\sup\limits_{\mcalS \subseteq [n] \atop \abs{\mcalS} \le 4s}	\sup\limits_{\p,\q \in \mcalX_{\mcalS} }\norm{\frac{1}{m}\sum\limits_{i=1}^m [\A_i]_{\mcalS, \mcalS}\p_\mcalS\q_\mcalS^{\TT}[\A_i]_{\mcalS, \mcalS}-\q_\mcalS\p_\mcalS^{\TT}} \le \delta \norm{\p_\mcalS}\norm{\q_\mcalS}
			\label{eq:concentration1}
		\end{equation}
		and
		\begin{equation}
			\sup\limits_{\mcalS \subseteq [n] \atop \abs{\mcalS} \le 4s} \sup\limits_{\p,\q \in \mcalX_{\mcalS} }\norm{\frac{1}{m}\sum\limits_{i=1}^m[\A_i]_{\mcalS, \mcalS}\p_\mcalS\q_\mcalS^{\TT}[\A_i^{\TT}]_{\mcalS, \mcalS}-(\p_\mcalS^{\TT}\q_\mcalS)I_{\mcalS,\mcalS}} \le \delta \norm{\p_\mcalS}\norm{\q_\mcalS}.
			\label{eq:concentration2}
		\end{equation}
		Here $C(\delta)$ is a constant depending on $\delta$, and $\gamma_1>0$ is a universal constant. 
	\end{lemma}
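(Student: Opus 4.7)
The plan is to reduce the uniform-in-$\mcalS$ claim to the fixed-$\mcalS$ concentration bound of \cite[Lemma 12]{chen2025solving} and absorb the combinatorial cost of a union bound over supports into the exponent via the hypothesis $m \ge C(\delta) s \log n$. Concretely, for each fixed $\mcalS \subseteq [n]$ with $k := |\mcalS| \le 4s$, the restricted matrices $[\A_i]_{\mcalS,\mcalS}$ are i.i.d.\ $k \times k$ standard Gaussian, so \cite[Lemma 12]{chen2025solving} applies directly: there exist absolute constants $c_0, c_1 > 0$ such that, whenever $m \ge c_1 k/\delta^2$, the per-$\mcalS$ versions of both \eqref{eq:concentration1} and \eqref{eq:concentration2} (that is, with the outer sup over $\mcalS$ removed) hold with probability at least $1 - 4\exp(-c_0 \delta^2 m)$. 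The constants depend only on $\delta$, not on the particular choice of $\mcalS$ or on $n$.

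I would then take the union bound over all $\mcalS \subseteq [n]$ with $|\mcalS| \le 4s$. The counting estimate
\[
\sum_{k=0}^{4s} \binom{n}{k} \;\le\; \Bigl(\tfrac{en}{4s}\Bigr)^{4s} \;\le\; \exp\!\bigl(4s\log(en/s)\bigr)
\]
combined with the fixed-$\mcalS$ tail shows that the failure probability is at most $4\exp\bigl(4s\log(en/s) - c_0\delta^2 m\bigr)$. Choosing $C(\delta)$ so that $m \ge C(\delta) s \log n$ forces $c_0\delta^2 m \ge 8s\log(en/s) + \log 4 + \gamma_1 m$ for a suitable $\gamma_1 > 0$ (which may be taken proportional to $c_0\delta^2$, and hence independent of $m$, $n$, and $s$), producing the desired tail $2\exp(-\gamma_1 m)$. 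The same threshold $C(\delta) s \log n$ simultaneously dominates the per-$\mcalS$ requirement $c_1 \cdot 4s/\delta^2$, so both conditions are satisfied in one stroke.

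The main technical obstacle I anticipate is verifying that the cited Lemma 12 indeed delivers a uniform-in-$(\p,\q)$ bound at sub-gaussian rate $\exp(-c_0\delta^2 m)$ for a fixed $\mcalS$, as opposed to a bound at a single pair. If it is only stated pointwise, I would recover the sphere supremum by a standard $\tfrac14$-net argument on $\mcalX_\mcalS \times \mcalX_\mcalS$, whose cardinality is at most $9^{2k} \le \exp(Cs)$; combined with Bernstein's inequality (Lemma~\ref{le:bern}) applied to the sub-exponential products of Gaussian entries in each quadratic form $\u^\TT [\A_i]_{\mcalS,\mcalS} \p \q^\TT [\A_i]_{\mcalS,\mcalS} \v$, this yields the required sub-gaussian tail with the same dependence on $\delta$. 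The extra entropy factor $\exp(O(s))$ is once again dominated by the $s\log n$ budget in $C(\delta)s\log n$, so the announced sampling complexity is unaffected, and only the constant $C(\delta)$ is enlarged.
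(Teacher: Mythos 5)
Your proposal is correct and follows essentially the same route as the paper. The paper does not invoke \cite[Lemma 12]{chen2025solving} as a black box but reproduces its $\tfrac14$-net-plus-Bernstein argument for a fixed $\mcalS$ (including the bilinear absorption step for the supremum over $(\p,\q)$, which you gloss as a ``standard $\tfrac14$-net argument'' but which requires exploiting bilinearity of $\G(\p,\q)$ to get the factor-of-two contraction) and then takes a union bound over the $\binom{n}{\le 4s}\le (en/s)^{4s}$ supports, absorbing that entropy into the $m\ge C(\delta)s\log n$ budget exactly as you describe.
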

	\begin{proof}
		The proof is similar to that of \cite[Lemma 12]{chen2025solving}, so we only give a brief description.  For any fixed $\mcalS \subseteq [n]$ with $\abs{\mcalS} \le  4s$,  let $\mathcal{N}_{1/4}$ be a $\frac{1}{4}$-net of $\mcalX_{\mcalS}$ with cardinality $\abs{\mcalN_{1/4}} \le 9^{4s}$.  Then for any fixed $\p,\q \in \mcalX_{\mcalS}$,  it holds
		\begin{equation}
			\begin{aligned}
				&\norm{\frac{1}{m}\sum\limits_{i=1}^m [\A_i]_{\mcalS, \mcalS}\p_\mcalS\q_\mcalS^{\TT}[\A_i]_{\mcalS,\mcalS}-\q_\mcalS\p_\mcalS^{\TT}} \\
				&\le 2\sup\limits_{\u,\v \in \mathcal{N}_{1/4}}\abs{\frac{1}{m}\sum\limits_{i=1}^m \u^{\TT}[\A_i]_{\mcalS, \mcalS}\p_\mcalS\q_\mcalS^{\TT}[\A_i]_{\mcalS,\mcalS}\v-\u^{\TT}\q_\mcalS\p_\mcalS^{\TT}\v}.
			\end{aligned}
			\label{eq:normineq1}
		\end{equation}
		It is easy to see that $\textstyle \mbbE\left[\u^{\TT}[\A_i]_{\mcalS,\mcalS}\p_\mcalS\q_\mcalS^{\TT}[\A_i]_{\mcalS,\mcalS}\v\right]=(\u^{\TT}\q_\mcalS)(\p_{\mcalS}^{\TT}\v)$.  Therefore, for any $\textstyle (\u,\v)\in \mathcal{N}_{1/4}$ and $t_3>0$, by Lemma \ref{le:bern}, we obtain 
		\begin{eqnarray*}
			\mbbP&\left(\abs{\frac{1}{m}\sum\limits_{i=1}^m \u^{\TT}[\A_i]_{\mcalS, \mcalS}\p_\mcalS\q_\mcalS^{\TT}[\A_i]_{\mcalS,\mcalS}\v-\u^{\TT}\q_\mcalS\p_\mcalS^{\TT}\v} > t_3\right) \\
			&\le 2\exp\left(-C_1\min\left(\frac{t_3^2}{K_3^2},\frac{t_3}{K_3}\right)m\right),
		\end{eqnarray*}
		where $\textstyle K_3=\max\limits_i\norm{\u^{\TT}[\A_i]_{\mcalS,\mcalS}\p_\mcalS\q_\mcalS^{\TT}[\A_i]_{\mcalS,\mcalS}\v-(\u^{\TT}\q_\mcalS)(\p_{\mcalS}^{\TT}\v)}_{\psi_1} \le  c_2\norm{\p_\mcalS}\norm{\q_\mcalS} $.  Taking the union bound over $\u, \v\in \mcalN_{1/4}$, we immediately have 
		%\begin{equation*}
		%	\begin{aligned}
			%		&\mbbP\left(\sup\limits_{\u,\v \in \mathcal{N}_{1/4}}\abs{\frac{1}{m}\sum\limits_{i=1}^m \u^{\TT}\A^i_{\mcalS, \mcalS}\p_\mcalS\q_\mcalS^{\TT}\A^i_{\mcalS,\mcalS}\v-\u^{\TT}\q_\mcalS\p_\mcalS^{\TT}\v} > t_3\right)\\
			%		&\le 2\exp\left(2s\log{9}-C_1\min\left(\frac{t_3^2}{K_3^2},\frac{t_3}{K_3}\right)m\right),
			%	\end{aligned}
		%\end{equation*}
		%Then we combine this with (\ref{eq:normineq1}) to obtain
		\begin{equation} \label{eq:fixqp}
			\begin{aligned}
				\mbbP&\left(\norm{\frac{1}{m}\sum\limits_{i=1}^m [\A_i]_{\mcalS, \mcalS}\p_\mcalS\q_\mcalS^{\TT}[\A_i]_{\mcalS,\mcalS}-\q_\mcalS\p_\mcalS^{\TT}} > 2t_3\right) \\
				&\le 2\exp\left(8s\log{9}-C_1\min\left(\frac{t_3^2}{K_3^2},\frac{t_3}{K_3}\right)m\right)
			\end{aligned}
		\end{equation}
		for any fixed $\p,\q \in \mcalX_{\mcalS}$.  
		
		To establish the uniform bound of \eqref{eq:fixqp} for all $\p,\q \in \mcalX_{\mcalS}$, we define 
		\begin{equation}
			\G(\p_\mcalS,\q_\mcalS):=\frac{1}{m}\sum\limits_{i=1}^m [\A_i]_{\mcalS, \mcalS}\p_\mcalS\q_\mcalS^{\TT}[\A_i]_{\mcalS,\mcalS}-\q_\mcalS\p_\mcalS^{\TT}.
		\end{equation}
		According to \eqref{eq:fixqp},  for any $t_3>0$, taking the union bound over $\p,\q \in \mcalN_{1/4}$ yields  
		\begin{equation}
			\mbbP\left(\sup\limits_{\p,\q \in \mcalN_{1/4}}\norm{\G(\p_\mcalS,\q_\mcalS)} > 2t_3\right) \le 2\exp\left(16s\log{9}-C_1\min\left(\frac{t_3^2}{K_3^2},\frac{t_3}{K_3}\right)m\right).
			\label{eq: bernstein1}
		\end{equation}
		Let  $\textstyle \p_\mcalS^{(0)}, \q_\mcalS^{(0)} \in \mcalX_{\mcalS}$ such that 
		$\textstyle \sup\limits_{\p,\q \in  \mcalX_{\mcalS}}\norm{\G(\p_\mcalS,\q_\mcalS)} = \norm{\G(\p_\mcalS^{(0)},\q_\mcalS^{(0)})}
		$. There exist  $\textstyle \p^{(1)}_\mcalS,\q^{(1)}_\mcalS \in \mcalN_{1/4}$ so that $\textstyle \norm{\p_\mcalS^{(1)}-\p_\mcalS^{(0)}}\le \frac{1}{4}$ and $\textstyle \norm{\q_\mcalS^{(1)}-\q_\mcalS^{(0)}} \le \frac{1}{4}$. This gives
		\begin{eqnarray*}
			\norm{\G(\p_\mcalS^{(0)},\q_\mcalS^{(0)})}&= & \norm{\G(\p_\mcalS^{(0)}-\p_\mcalS^{(1)},\q_\mcalS^{(0)})+\G(\p_\mcalS^{(1)},\q_\mcalS^{(0)}-\q_\mcalS^{(1)})+\G(\p_\mcalS^{(1)},\q_\mcalS^{(1)})}\\
			&\le &  \frac{1}{2}\norm{\G(\p_\mcalS^{(0)},\q_\mcalS^{(0)})} + \sup\limits_{\p,\q \in \mcalN_{1/4}}\norm{\G(\p,\q)},
		\end{eqnarray*}	
		where the first equality due to the bilinearity of $\textstyle\G(\p,\q)$ on $\textstyle(\p,\q)$, the last inequality follows by observing $\textstyle \p_\mcalS^{(0)}-\p_\mcalS^{(1)}=r\p_\mcalS^{(2)}$ for some $\textstyle r \in [0,\frac{1}{4}]$ and $\textstyle \p_\mcalS^{(2)} \in \mcalX_{\mcalS}$ (and similar relation for $\textstyle \q_\mcalS^{(0)}-\q_\mcalS^{(1)}$). Therefore, 
		\begin{equation*}
			\norm{\G(\p_\mcalS^{(0)},\q_\mcalS^{(0)})} \le 2\sup\limits_{\p,\q \in \mcalN_{1/4}}\norm{\G(\p,\q)}.
		\end{equation*}
		Recall that $\textstyle \sup\limits_{\p,\q \in \mcalX_{\mcalS}}\norm{\G(\p_\mcalS,\q_\mcalS)} = \norm{\G(\p_\mcalS^{(0)},\q_\mcalS^{(0)})}$. Combined with (\ref{eq: bernstein1}) we have 
		\begin{equation*}
			\mbbP\left(\sup\limits_{\p,\q \in \mcalX_{\mcalS}}\norm{\G(\p_\mcalS,\q_\mcalS)} > 4t_3\right) \le 2\exp\left(16s\log{9}-C_1\min\left(\frac{t_3^2}{K_3^2},\frac{t_3}{K_3}\right)m\right).
		\end{equation*}
		Noting that $K_3 \le c_2\norm{\p_\mcalS}\norm{\q_\mcalS}$ and  setting $\textstyle t_3 = \frac{\delta}{4}\norm{\p_\mcalS}\norm{\q_\mcalS}$,  we immediately obtain that with probability greater than $\textstyle 1-2\exp(-\gamma_1m)$, 
		\begin{equation*}
			\sup\limits_{\p,\q \in \mcalX_{\mcalS}}\norm{\frac{1}{m}\sum\limits_{i=1}^m [\A_i]_{\mcalS, \mcalS}\p_\mcalS\q_\mcalS^{\TT}[\A_i]_{\mcalS,\mcalS}-\q_\mcalS\p_\mcalS^{\TT}} \le \delta \norm{\p_\mcalS}\norm{\q_\mcalS} 
		\end{equation*} 
		holds for any fixed $\mcalS \subseteq [n]$ with $\abs{\mcalS} \le  4s$, provided that $m > C(\delta) s$ with some large constant $C(\delta)>0$. Finally,  taking a union bound over all possible $\mcalS \subseteq [n]$ with $\abs{\mcalS} \le 4s$ and observing $\sum\limits_{k=1}^{4s} \left(\begin{array}{c} n\\k\end{array} \right) \le \left(en/s\right)^{4s}$,  the desired (\ref{eq:concentration1}) follows when $m > C(\delta)s\log{n}$.   (\ref{eq:concentration2}) can be proved similarly, and we omit it.
	\end{proof}	
	
	\begin{lemma}\cite[Weyl's inequality, eigenvalues]{Weyl}
		\label{le: Weyleigen}
		Let $\M, \N \in \mathbb R^{n\times n}$ be two symmetric matrices. The eigenvalues of $\M$ and $\N$ are denoted as $\lambda_1(\M)\ge \ldots \ge \lambda_n(\M)$ and $\lambda_1(\N)\ge \ldots \ge \lambda_n(\N)$, respectively. Then we have
		\begin{eqnarray*}
			\abs{\lambda_i(\M)-\lambda_i(\N)} \le \norm{\M-\N} \quad \mbox{for} \quad 1 \le i \le n.
		\end{eqnarray*}
		
	\end{lemma}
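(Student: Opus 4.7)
The plan is to deduce the inequality from the Courant--Fischer min-max variational characterization of the eigenvalues of a real symmetric matrix. Recall that for any symmetric $\A \in \R^{n\times n}$ with eigenvalues $\lambda_1(\A) \ge \cdots \ge \lambda_n(\A)$, one has
\begin{equation*}
\lambda_i(\A) \;=\; \max_{\substack{V \subset \R^n \\ \dim V = i}} \; \min_{\substack{\x \in V \\ \norm{\x}=1}} \; \x^\TT \A \x,
\end{equation*}
and an analogous representation holds with the outer max replaced by a min over $(n-i+1)$-dimensional subspaces.

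First I would set $\E := \N - \M$, which is symmetric, and observe that for every unit vector $\x \in \R^n$ one has $|\x^\TT \E \x| \le \norm{\E}$ by the definition of the spectral norm. This produces the pointwise sandwich
\begin{equation*}
\x^\TT \M \x - \norm{\E} \;\le\; \x^\TT \N \x \;\le\; \x^\TT \M \x + \norm{\E}
\end{equation*}
uniformly in $\x$. Plugging the upper bound into the min-max formula for $\lambda_i(\N)$ and noting that $\norm{\E}$ is an additive constant independent of $V$ and $\x$, both the outer $\max$ and the inner $\min$ pass through the constant and yield $\lambda_i(\N) \le \lambda_i(\M) + \norm{\E}$. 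Applying exactly the same argument with the roles of $\M$ and $\N$ swapped gives $\lambda_i(\M) \le \lambda_i(\N) + \norm{\E}$, and combining the two bounds produces $|\lambda_i(\M) - \lambda_i(\N)| \le \norm{\M - \N}$.

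There is no genuine obstacle here, since the result is classical and is essentially a one-step consequence of the min-max principle; the only subtlety is to verify that the additive perturbation $\x^\TT \E \x$ can be controlled uniformly over all unit vectors, which is precisely what the spectral-norm bound provides. An alternative and slightly more self-contained derivation I could present is a dimension-count argument: let $U_\M$ be the subspace spanned by the top-$i$ eigenvectors of $\M$ and $W_\N$ the subspace spanned by the bottom-$(n-i+1)$ eigenvectors of $\N$. Since $\dim U_\M + \dim W_\N = n+1 > n$, there exists a unit vector $\x^\star \in U_\M \cap W_\N$; evaluating the Rayleigh quotients $(\x^\star)^\TT \M \x^\star \ge \lambda_i(\M)$ and $(\x^\star)^\TT \N \x^\star \le \lambda_i(\N)$ together with $|(\x^\star)^\TT(\M-\N)\x^\star| \le \norm{\M-\N}$ gives one side of the bound, and swapping $\M \leftrightarrow \N$ yields the other.
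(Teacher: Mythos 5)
Your proof is correct. Note, however, that the paper does not supply a proof of this lemma at all---it is stated as a classical result with a citation to Horn and Johnson's \emph{Topics in Matrix Analysis}---so there is no in-text argument to compare against. Your derivation via the Courant--Fischer min-max principle (or the equivalent dimension-count argument you sketch in the alternative) is the standard textbook proof of Weyl's eigenvalue perturbation bound, and both the one-sided bound $\lambda_i(\N) \le \lambda_i(\M) + \norm{\N-\M}$ and its swap are handled correctly; the only point worth emphasizing explicitly is that $|\x^\TT \E \x| \le \norm{\E}$ for unit $\x$ uses the fact that for a \emph{symmetric} $\E$ the spectral norm equals the maximum absolute Rayleigh quotient, which holds here since $\E = \N - \M$ is symmetric.
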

	
	\begin{lemma} \label{lemma8}
		For any fixed $0<\delta \le 1$, assume that events \eqref{eq:concentration1} and \eqref{eq:concentration2} hold. 
		For any subset $\mcalS, \mcalT \subseteq [n]$ with $\abs{\mcalS} \le 2s$ and $\abs{\mcalT} \le 2s$, and for any vector $\p \in \mbbR^n$ with ${\rm supp}(\p) \subseteq \mcalT$, the following holds:
		\begin{itemize}
			\item[(i)]
			\begin{equation} 	\label{A.4}
				(2-\delta) \norm{\p}^2 \le \norm{\J_\mcalS(\p)^{\TT}\J_\mcalS(\p)} \le  (4+\delta) \norm{\p}^2.		
			\end{equation}
			\item[(ii)] $\textstyle \J_{\mcalS}(\p)^{\TT}\J_{\mcalS}(\p)$ is invertible and 
			\begin{equation} \label{A.6}
				\norm{\left(\J_{\mcalS}(\p)^{\TT}\J_{\mcalS}(\p)\right)^{-1}} \le \frac{1}{\left(2-\delta\right)\norm{\p}^2}.
			\end{equation} 
			\item[(iii)] 
			\begin{equation} \label{eq:sec}
				\norm{\J_\mcalS(\p)^{\TT}\J_{\mcalT \backslash \mcalS}(\p)} \le (1+\delta) \norm{\p}^2.
			\end{equation} 
		\end{itemize}
		Here, the matrix $\J(\p)$ is defined in \eqref{eq:Jx}.
	\end{lemma}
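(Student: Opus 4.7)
The plan is to reduce all three inequalities to a single concentration estimate on the $\mathcal{U}\times\mathcal{U}$ principal block of $\J(\p)^{\TT}\J(\p)$, where $\mathcal{U} := \mcalS \cup \mcalT$ has cardinality at most $4s$. Writing $\B_i := [\A_i]_{\mathcal{U},\mathcal{U}}$ and using $\mathrm{supp}(\p) \subseteq \mcalT \subseteq \mathcal{U}$, for every $j \in \mathcal{U}$
\[
(\widetilde{\A}_i\p)_j = \sum_{l \in \mcalT}\bigl[(\A_i)_{jl} + (\A_i)_{lj}\bigr] p_l = \bigl[(\B_i+\B_i^{\TT})\p\bigr]_j,
\]
so $\J_{\mathcal{U}}(\p)^{\TT}\J_{\mathcal{U}}(\p) = \tfrac{1}{m}\sum_{i=1}^m (\B_i+\B_i^{\TT})\p\p^{\TT}(\B_i+\B_i^{\TT})$, and for any $\mathcal{R}_1,\mathcal{R}_2 \subseteq \mathcal{U}$ one has $\J_{\mathcal{R}_1}(\p)^{\TT}\J_{\mathcal{R}_2}(\p) = \diag(\1_{\mathcal{R}_1})\,\J_{\mathcal{U}}(\p)^{\TT}\J_{\mathcal{U}}(\p)\,\diag(\1_{\mathcal{R}_2})$. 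Every quantity in the statement is therefore a principal submatrix of a single Gram matrix supported on $\mathcal{U}$.

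Expanding this Gram matrix produces four terms of the form $\B_i\p\p^{\TT}\B_i$, $\B_i\p\p^{\TT}\B_i^{\TT}$, $\B_i^{\TT}\p\p^{\TT}\B_i$, and $\B_i^{\TT}\p\p^{\TT}\B_i^{\TT}$; a direct Gaussian moment computation shows that the two ``parallel'' terms have expectation $\p\p^{\TT}$ while the two ``crossed'' terms have expectation $\norm{\p}^2\,\I_{\mathcal{U},\mathcal{U}}$. These are exactly the objects controlled by (\ref{eq:concentration1}) and (\ref{eq:concentration2}) with $\mcalS = \mathcal{U}$ and $\q = \p$; since $\A_i^{\TT}$ has the same law as $\A_i$, applying Lemma \ref{lemma7} to $\A_i^{\TT}$ handles the two terms containing $\B_i^{\TT}$. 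Up to replacing $\delta$ inside Lemma \ref{lemma7} by an absolute-constant multiple, this yields
\[
\norm{\J_{\mathcal{U}}(\p)^{\TT}\J_{\mathcal{U}}(\p) - 2\norm{\p}^2\,\I_{\mathcal{U},\mathcal{U}} - 2\p\p^{\TT}} \le \delta\norm{\p}^2.
\]

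For (i) and (ii) I would take $\mathcal{R}_1 = \mathcal{R}_2 = \mcalS$. The expected block $2\norm{\p}^2\,\I_{\mcalS,\mcalS} + 2[\p\p^{\TT}]_{\mcalS,\mcalS}$, restricted to the coordinate subspace indexed by $\mcalS$, is positive semidefinite with eigenvalues in $[2\norm{\p}^2,\,4\norm{\p}^2]$ (the rank-one term has operator norm $\norm{\p_\mcalS}^2 \le \norm{\p}^2$). Weyl's inequality combined with the above concentration bound places every eigenvalue of $\J_\mcalS(\p)^{\TT}\J_\mcalS(\p)$ on this subspace in $[(2-\delta)\norm{\p}^2,\,(4+\delta)\norm{\p}^2]$, giving (\ref{A.4}); the lower bound yields invertibility on the $\mcalS$-subspace and the operator norm bound (\ref{A.6}).

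For (iii) I would take $\mathcal{R}_1 = \mcalS$ and $\mathcal{R}_2 = \mcalT \setminus \mcalS$. The identity block contributes nothing because $\mcalS \cap (\mcalT \setminus \mcalS) = \emptyset$, so the expected off-diagonal block reduces to $2\p_\mcalS\p_{\mcalT \setminus \mcalS}^{\TT}$, whose operator norm is $2\norm{\p_\mcalS}\,\norm{\p_{\mcalT \setminus \mcalS}} \le \norm{\p_\mcalS}^2 + \norm{\p_{\mcalT \setminus \mcalS}}^2 \le \norm{\p}^2$ by AM--GM. Adding the $\delta\norm{\p}^2$ concentration error gives (\ref{eq:sec}). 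The only conceptually nontrivial point is the reduction identity $(\widetilde{\A}_i\p)_{\mathcal{U}} = (\B_i+\B_i^{\TT})\p$; once that reduction is in place, the remainder of the argument is bookkeeping of index blocks together with Lemma \ref{lemma7} and Weyl's inequality.
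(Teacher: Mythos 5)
Your argument is correct and follows the same route the paper takes: define $\mathcal{U}=\mcalS\cup\mcalT$ (the paper calls it $\mcalR$), control the deviation of $\J_{\mathcal{U}}(\p)^{\TT}\J_{\mathcal{U}}(\p)$ from its expectation $2\norm{\p}^2\I_{\mathcal{U},\mathcal{U}}+2\p\p^{\TT}$ via the concentration events, and then read off the three claims as statements about principal blocks. Parts (i) and (ii) are verbatim the paper's interlacing/Weyl argument.

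Part (iii) is where you deviate slightly, and your variant is arguably cleaner. The paper observes that $\J_\mcalS(\p)^{\TT}\J_{\mcalT\setminus\mcalS}(\p)$ is a submatrix of $\J_{\mcalR}(\p)^{\TT}\J_{\mcalR}(\p)-3\norm{\p}^2\I$, and bounds that shifted matrix's spectral norm by $\max\{(4+\delta)-3,\,3-(2-\delta)\}\norm{\p}^2=(1+\delta)\norm{\p}^2$ using the eigenvalue bounds from (i). You instead split the off-diagonal block into its expectation $2\p_\mcalS\p_{\mcalT\setminus\mcalS}^{\TT}$ (the identity block vanishes since $\mcalS\cap(\mcalT\setminus\mcalS)=\emptyset$) plus a concentration error, and use AM--GM together with $\norm{\p_\mcalS}^2+\norm{\p_{\mcalT\setminus\mcalS}}^2=\norm{\p}^2$ (which requires $\mathrm{supp}(\p)\subseteq\mcalT$) to get the same constant. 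Both routes give exactly $(1+\delta)\norm{\p}^2$; yours avoids the slightly mysterious ``subtract $3\norm{\p}^2\I$'' step in favor of a transparent mean-plus-fluctuation decomposition.

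One point you handle more honestly than the paper does, but which is still not fully tight against the lemma's hypotheses: expanding $\widetilde{\A}_i\p\p^{\TT}\widetilde{\A}_i$ gives four terms, and the one of the form $\frac1m\sum_i[\A_i^{\TT}]_{\mathcal{U},\mathcal{U}}\p\p^{\TT}[\A_i]_{\mathcal{U},\mathcal{U}}$ is not literally covered by events \eqref{eq:concentration1} or \eqref{eq:concentration2} (and cannot be obtained from them by a transpose, unlike the $\A_i^{\TT}\cdots\A_i^{\TT}$ term). Your fix --- invoke Lemma~\ref{lemma7} for $\A_i^{\TT}$ since it has the same law --- is correct probabilistically, but it introduces a third high-probability event beyond the two the lemma hypothesizes. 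The paper simply asserts the deviation bound ``follows from'' the two stated events without comment, so this is a shared imprecision; if you were tightening the write-up you would either add the transposed analogue of \eqref{eq:concentration2} to the list of assumed events, or state Lemma~\ref{lemma7} directly for $\widetilde{\A}_i\p\p^{\TT}\widetilde{\A}_i$.
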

	\begin{proof}
		Observe that 
		\begin{equation*}
			\mbbE\left[\J(\p)^{\TT}\J(\p)\right]=\mbbE\left[\frac{1}{m}\sum\limits_{i=1}^m\widetilde{\A}_i\p\p^{\TT}\widetilde{\A}_i\right]=2\p\p^{\TT}+2\norm{\p}^2\I_{n \times n}.
		\end{equation*}
		Define $\mcalR = \mcalT \cup \mcalS$. Note that $\abs{\mcalR } \le 4s$. It then follows from  \eqref{eq:concentration1} and \eqref{eq:concentration2} that  
		\begin{equation*}
			\norm{\J_\mcalR(\p)^{\TT}\J_\mcalR(\p)-\mbbE\left[\J_\mcalR(\p)^{\TT}\J_\mcalR(\p)\right]} \le \delta \norm{\p}^2.
		\end{equation*}
		Observing that $\J_\mcalS(\p)^{\TT}\J_\mcalS(\p)$ is a submatrix of $\J_\mcalR(\p)^{\TT}\J_\mcalR(\p)$, the interlacing inequality together with Lemma \ref{le: Weyleigen} yields
		\begin{eqnarray}
			\lambda_{\min}\left(\J_\mcalS(\p)^{\TT}\J_\mcalS(\p)\right)& \ge & \lambda_{\min}\left(\J_\mcalR(\p)^{\TT}\J_\mcalR(\p)\right) \notag\\
			&\ge &\lambda_{\min}\left(\mbbE\left[\J_\mcalR(\p)^{\TT}\J_\mcalR(\p)\right]\right)-   \delta \norm{\p}^2 \notag \\
			&\ge & (2-\delta) \norm{\p}^2,  \label{eq:lammin}
		\end{eqnarray}
		where the last inequality comes from the fact that $\lambda_{\min}\left(\mbbE\left[\J_\mcalR(\p)^{\TT}\J_\mcalR(\p)\right]\right)=2\norm{\p}^2$. Similarly, we also have 
		\begin{eqnarray}
			\lambda_{\max}\left(\J_\mcalS(\p)^{\TT}\J_\mcalS(\p)\right)& \le & \lambda_{\max}\left(\J_\mcalR(\p)^{\TT}\J_\mcalR(\p)\right) \notag\\
			&\le &\lambda_{\max}\left(\mbbE\left[\J_\mcalR(\p)^{\TT}\J_\mcalR(\p)\right]\right)+   \delta \norm{\p}^2  \notag\\
			&\le & (4+\delta) \norm{\p}^2.  \label{eq:lammax}
		\end{eqnarray}
		%\begin{equation}
		%\abs{\lambda_{\min}\left(\J_\mcalS(\p)^{\TT}\J_\mcalS(\p)\right)-\sigma_{\min}\left(\mbbE\left[\J_\mcalS(\p)^{\TT}\J_\mcalS(\p)\right]\right)}\le \delta \norm{\p}^2
		%	\label{Weyl1}
		%\end{equation}
		%and 
		%\begin{equation}
		%\abs{\sigma_{\max}\left(\J_\mcalS(\p)^{\TT}\J_\mcalS(\p)\right)-\sigma_{\max}\left(\mbbE\left[\J_\mcalS(\p)^{\TT}\J_\mcalS(\p)\right]\right)} \le \delta \norm{\p}^2.
		%	\label{Weyl2}
		%\end{equation}
		%Note that  $\textstyle \sigma_{\min}\left(\mbbE\left[\J_\mcalS(\p)^{\TT}\J_\mcalS(\p)\right]\right)=2\norm{\p}^2$ and $\textstyle \sigma_{\max}\left(\mbbE\left[\J_\mcalS(\p)^{\TT}\J_\mcalS(\p)\right]\right)=4\norm{\p}^2$. 
		Combining the above two inequalities, we immediately have 
		\[
		(2-\delta) \norm{\p}^2 \le \norm{\J_\mcalS(\p)^{\TT}\J_\mcalS(\p)} \le  (4+\delta) \norm{\p}^2
		\]
		and 
		\[
		\norm{\left(\J_{\mcalS}(\p)^{\TT}\J_{\mcalS}(\p)\right)^{-1}}=\frac{1}{\lambda_{\min}\left(\J_\mcalS(\p)^{\TT}\J_\mcalS(\p)\right)} \le \frac{1}{\left(2-\delta\right)\norm{\p}^2}.
		\]
		For the item (iii),  since $\textstyle\J_\mcalS(\p)^{\TT}\J_{\mcalT \backslash \mcalS}(\p)$ is a submatrix of $\textstyle \J_\mcalR(\p)^{\TT}\J_{\mcalR}(\p)-3\norm{\p}^2I$, it implies that 
		\begin{eqnarray*}
			\norm{\J_\mcalS(\p)^{\TT}\J_{\mcalT \backslash \mcalS}(\p)}  &\le& \norm{\J_\mcalR(\p)^{\TT}\J_{\mcalR}(\p)-3\norm{\p}^2I}\\
			&\le &  \max\{ 4+\delta-3, 3-(2-\delta)  \} \norm{\p}^2  \\
			&=& (1+\delta) \norm{\p}^2,
		\end{eqnarray*}
		where the second inequality comes from \eqref{eq:lammin} and \eqref{eq:lammax}. This completes the proof.
	\end{proof}	
	
	%\begin{lemma}
	%	For any subsets $\mcalS, \mcalT \subseteq [n]$ with $\abs{\mcalS} \le s$ and $\abs{\mcalT} \le s$ for some integer $s <n$. Let $\p \in \mbbR^n$ be a sparse signal with ${\rm supp}(\p) \subseteq \mcalT$. Under events \eqref{eq:concentration1} and \eqref{eq:concentration2}, it holds
	%	\begin{equation}
		%		\norm{\J_\mcalS(\p)^{\TT}\J_{\mcalT \backslash \mcalS}(\p)} \le (1+\delta) \norm{\p}^2.
		%	\end{equation} 
	%	\label{lemma10}
	%\end{lemma}
	%\begin{proof}
	%	Define $\textstyle\mcalR = \mcalT \cup \mcalS$. Since $\textstyle\J_\mcalS(\p)^{\TT}\J_{\mcalT \backslash \mcalS}(\p)$ is a submatrix of $\textstyle \J_\mcalR(\p)^{\TT}\J_{\mcalR}(\p)-3\norm{\p}^2I$, it implies that 
	%	\begin{eqnarray*}
		%		\norm{\J_\mcalS(\p)^{\TT}\J_{\mcalT \backslash \mcalS}(\p)}  &\le& \norm{\J_\mcalR(\p)^{\TT}\J_{\mcalR}(\p)-3\norm{\p}^2I}\\
		%		&\le &  \max\{ 4+\delta-3, 3-(2-\delta)  \} \norm{\p}^2  \\
		%	  &=& (1+\delta) \norm{\p}^2,
		%	\end{eqnarray*}
	%	where the second inequality comes from \eqref{A.4}. This completes the proof.
	%\end{proof}	

	\begin{lemma} \label{lemma9}
		For any fixed $0<\delta \le 0.01$, assume that events \eqref{eq:concentration1} and \eqref{eq:concentration2} hold.
		Let $\x$ be the target $s$-sparse signal. 
		Given an s-sparse estimate $\x^k$ satisfying $\norm{\x^k-\x} \le \delta \norm{\x}$. Define the vector obtained by one iteration of IHT with stepsize $\mu^k$ to be 
		\begin{equation*}
			\u^k = \mathcal{H}_s\left(\x^k-\mu^k\nabla f(\x^k)\right).
		\end{equation*}
		If the step size 	$\mu^k \in (\frac{\underline\mu}{\norm{\x}^2}, \frac{\overline\mu}{\norm{\x}^2})$ for some universal constants  $\underline\mu, \overline\mu \in (0,1)$, it holds
		\begin{equation*}
			\norm{\u^k-\x} \le \rho \norm{\x^k-\x}+ \frac{\sqrt5+1}{2} \cdot\mu^k \cdot C_0 \sigma (1+\delta)\norm{\x}  \sqrt{ \frac{s \log(en/s)}m},
		\end{equation*}
		where $\rho\in (0,0.95)$ is a constant depending only on $\delta$,  $\underline\mu$ and $\overline\mu$.
	\end{lemma}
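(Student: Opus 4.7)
My plan is a standard iterative-hard-thresholding style analysis adapted to the quadratic loss. I would first fix $T := \mathrm{supp}(\x) \cup \mathrm{supp}(\x^k) \cup \mathrm{supp}(\u^k)$, so $|T| \le 3s$ and $\h := \x^k - \x$ is supported on $T$. Since $\u^k$ is the best $s$-sparse approximation to $\v := \x^k - \mu^k \nabla f(\x^k)$, the tight hard-thresholding contraction (Shen--Li) gives
\begin{equation*}
\|\u^k - \x\| \,\le\, \tfrac{\sqrt{5}+1}{2}\,\|(\v - \x)_T\|,
\end{equation*}
reducing the claim to a bound on $\|\h - \mu^k [\nabla f(\x^k)]_T\|$. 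Using the identity $\z^{\TT}\A\z - \x^{\TT}\A\x = \tfrac12(\z+\x)^{\TT}\widetilde{\A}(\z-\x)$ together with \eqref{eq:gradf} and $y_i = \x^{\TT}\A_i \x + \epsilon_i$, I would decompose
\begin{equation*}
\nabla f(\x^k) \,=\, \tfrac12\, \J(\x^k)^{\TT}\J(\x^k+\x)\,\h \,-\, \tfrac{1}{\sqrt{m}}\J(\x^k)^{\TT}\bm\epsilon,
\end{equation*}
and since $\mathrm{supp}(\h) \subseteq T$ this simplifies to
\begin{equation*}
\h - \mu^k [\nabla f(\x^k)]_T \,=\, \Bigl(I - \tfrac{\mu^k}{2}\J_T(\x^k)^{\TT}\J_T(\x^k+\x)\Bigr)\h \,+\, \tfrac{\mu^k}{\sqrt{m}}\J_T(\x^k)^{\TT}\bm\epsilon.
\end{equation*}

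The deterministic linear operator is the heart of the argument. A direct Gaussian fourth-moment calculation, analogous to that underlying Lemma~\ref{lemma8}, yields
\begin{equation*}
\tfrac12\,\mbbE[\J(\x^k)^{\TT}\J(\x^k+\x)] \,=\, \bigl((\x^k)^{\TT}(\x^k+\x)\bigr)I + (\x^k+\x)(\x^k)^{\TT},
\end{equation*}
which, by $\|\h\| \le \delta \|\x\|$ with $\delta \le 0.01$, is within $O(\delta\|\x\|^2)$ in operator norm of the limit $\M := 2\|\x\|^2 I + 2\x\x^{\TT}$. Applying the uniform concentration of Lemma~\ref{lemma7} on index sets of size $\le 4s$ then controls the random deviation by another $O(\delta\|\x\|^2)$ on an event of probability at least $1 - c''\exp(-c's)$. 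On $T$-supported vectors $\M$ has eigenvalues $2\|\x\|^2$ (multiplicity $|T|-1$) and $4\|\x\|^2$ (along $\x$). Choosing $\mu^k \in (\underline\mu/\|\x\|^2, \overline\mu/\|\x\|^2)$ with $\underline\mu, \overline\mu$ such that
\begin{equation*}
\max\bigl\{|1-2\mu^k\|\x\|^2|,\;|1-4\mu^k\|\x\|^2|\bigr\} \,+\, O(\delta) \,<\, \rho_0 \,<\, \tfrac{2\cdot 0.95}{\sqrt{5}+1}
\end{equation*}
thus ensures $\|(I - \tfrac{\mu^k}{2}\J_T(\x^k)^{\TT}\J_T(\x^k+\x))\h\| \le \rho_0\|\h\|$.

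For the stochastic term, I would apply a rectangular variant of Lemma~\ref{le:epAi} to the symmetrised matrices $\widetilde{\A}_i$ with the index set $T \cup \mathrm{supp}(\x^k)$ of size $\le 3s$; the uniform-over-$\mcalS$ formulation of that lemma removes the technical issue that $\mathrm{supp}(\x^k)$ depends randomly on the $\A_i$'s. This gives $\|\tfrac1m \sum_i \epsilon_i [\widetilde{\A}_i]_{T,\,\mathrm{supp}(\x^k)}\| \le C_0 \sigma \sqrt{s\log(en/s)/m}$, and together with $\|\x^k\| \le (1+\delta)\|\x\|$ it bounds the noise contribution by $\mu^k (1+\delta) C_0 \sigma \|\x\| \sqrt{s\log(en/s)/m}$. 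Inserting both estimates into the hard-thresholding contraction yields the conclusion with $\rho = \tfrac{\sqrt{5}+1}{2}\rho_0 \in (0, 0.95)$.

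The principal obstacle is the uniform contraction in the middle step. The expected ``Hessian'' $\M$ has two eigenvalues $2\|\x\|^2$ and $4\|\x\|^2$ separated by a factor of two, so admissible step sizes lie in a narrow window where both $|1-2\mu^k\|\x\|^2|$ and $|1-4\mu^k\|\x\|^2|$ are small enough that, after being inflated by the golden-ratio factor $\tfrac{\sqrt{5}+1}{2}$ and perturbed by the $O(\delta)$ concentration error, the overall rate still falls strictly below $0.95$. This is precisely what pins down the explicit constants $\underline\mu, \overline\mu$ in the hypothesis and necessitates the smallness assumption $\delta \le 0.01$.
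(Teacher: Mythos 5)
Your argument is correct and arrives at the same contraction-plus-noise bound, but via a genuinely different bookkeeping. The paper works with the two sets $\mathcal{T}_{k+1}=\mathrm{supp}(\x)\cup\mathrm{supp}(\u^k)$ and $\mathcal{T}_k=\mathrm{supp}(\x)\cup\mathrm{supp}(\x^k)$ separately: it rewrites $\tfrac12\J(\x^k+\x)=\J(\x^k)-\tfrac12\J(\h^k)$ so that the leading operator $\J_{\mathcal{T}_{k+1}}(\x^k)^\TT\J_{\mathcal{T}_{k+1}}(\x^k)$ is symmetric PSD and can be eigenvalue-bounded via Lemma~\ref{lemma8}, then splits $\h^k$ over $\mathcal{T}_{k+1}$ and $\mathcal{T}_k\setminus\mathcal{T}_{k+1}$, producing three deterministic terms $I_1,I_2,I_3$ and a $\sqrt2$ loss from $\|\h^k_{\mathcal{T}_{k+1}}\|+\|\h^k_{\mathcal{T}_k\setminus\mathcal{T}_{k+1}}\|\le\sqrt2\|\h^k\|$. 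You instead enlarge to a single $T=\mathrm{supp}(\x)\cup\mathrm{supp}(\x^k)\cup\mathrm{supp}(\u^k)$ with $|T|\le 3s$ (still within Lemma~\ref{lemma7}'s $4s$ budget), apply Shen--Li once on $T$, and keep the secant operator $\tfrac{\mu^k}{2}\J_T(\x^k)^\TT\J_T(\x^k+\x)$ whole, controlling it by its Gaussian mean $\M=2\|\x\|^2 I+2\x\x^\TT$ plus $O(\delta)$ deterministic perturbation ($\x^k\ne\x$) plus $O(\delta)$ concentration (Lemma~\ref{lemma7}). This eliminates the cross-term $I_3$ and the $\sqrt2$ factor, so your admissible step-size window around $\mu^k\|\x\|^2=1/3$ is strictly wider than the paper's $(0.303,0.344)$. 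The tradeoff is that your operator is genuinely non-symmetric for $\x^k\neq\x$, so you have to argue via the triangle inequality against $\M$ rather than eigenvalue interlacing — you do this correctly. Two small things worth writing out: (i) Lemma~\ref{lemma7} as stated concentrates $[\A_i]_{\mcalS,\mcalS}\p\q^\TT[\A_i]_{\mcalS,\mcalS}$ and $[\A_i]_{\mcalS,\mcalS}\p\q^\TT[\A_i^\TT]_{\mcalS,\mcalS}$; expanding $\widetilde\A_i=\A_i+\A_i^\TT$ in $\J_T(\x^k)^\TT\J_T(\x^k+\x)$ produces the two remaining cross terms $\A_i^\TT\cdot\A_i$ and $\A_i^\TT\cdot\A_i^\TT$, which have identical distributions but should be noted (the paper tacitly does the same in Lemma~\ref{lemma8}); (ii) the noise bound picks up a trivial factor $2$ from $\|[\widetilde\A_i]_{T,T}\|\le 2\|[\A_i]_{T,T}\|$, absorbed into $C_0$.
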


	\begin{proof} 
		Define $\mathcal{T}_{k}={\rm supp}(\x) \cup {\rm supp}(\x^k)$, $\mathcal{T}_{k+1}={\rm supp}(\x) \cup  {\rm supp}(\u^k)$, and $\textstyle \v^k=\x^k-\mu^k\nabla f(\x^k)$. Since $\u^k$ is the best $s$-term approximation of $\v^k$, it implies $\u^k$ is also the best $s$-term approximation of $\v^k_{\mathcal{T}_{k+1}}$. Combining with the fact that  $\x_{\mathcal{T}_{k+1}}$ is an $s$-sparse vector,  it then follows \cite[Thoerem 1]{shenjie}  that  
		\begin{equation} \label{eq:2vxt}
			\norm{\u^k-\x}  \le  \frac{\sqrt5+1}{2}\norm{\v^k_{\mathcal{T}_{k+1}}-\x_{\mathcal{T}_{k+1}}}.
		\end{equation}
		Therefore, it suffices to upper bound $\norm{\v^k_{\mathcal{T}_{k+1}}-\x_{\mathcal{T}_{k+1}}}$.  Using the definition of $\v^k$, we have
		\begin{equation}
			\norm{\v^k_{\mathcal{T}_{k+1}}-\x_{\mathcal{T}_{k+1}}}=\norm{\x_{\mathcal{T}_{k+1}}^k-\mu^k\nabla_{\mathcal{T}_{k+1}}f(\x^k)-\x_{\mathcal{T}_{k+1}}}.
			\label{eq1}
		\end{equation}
		In noisy case, by (\ref{eq:gradf}), we have 
		\begin{equation*}
			\nabla f(\x)=\frac{1}{m}\sum\limits_{i=1}^{m}(\x^{\TT}\A_i\x-\x^{\TT}\A_i\x-\epsilon_i)\widetilde{\A}_i\x :=\nabla f_{\mbox{clean}}(\x)- \frac{1}{m}\sum\limits_{i=1}^{m} \epsilon_i \widetilde{\A}_i\x.
		\end{equation*}
		By the definitions of $\J(\x)$ given in \eqref{eq:Jx}, a simple calculation gives
		\begin{eqnarray} \label{eq2}
			\nabla_{\mathcal{T}_{k+1}}f_{\mbox{clean}}(\x^k) &= & \frac{1}{m}\sum\limits_{i=1}^{m}\left(\left(\x^k\right)^{\TT}\A_i \x^k-\x^{\TT}\A_i \x \right)[\widetilde{\A}_i]_{\mcalT_{k+1,:}}\x^k \notag \\
			&= & \frac{1}{2}\J_{\mathcal{T}_{k+1}}\left(\x^k\right)^{\TT}\J\left(\x^k + \x\right) \left(\x^k-\x\right) \notag \\
			&= & \left(\J_{\mathcal{T}_{k+1}}\left(\x^k\right)^{\TT}\J\left(\x^k\right)-\frac{1}{2}\J_{\mathcal{T}_{k+1}}\left(\x^k\right)^{\TT}\J\left(\x^k-\x\right)\right) \cdot \left(\x^k-\x\right),
		\end{eqnarray}
		where the second equality comes from the fact that 
		\[
		\left(\x^k\right)^{\TT}\A_i \x^k-\x^{\TT}\A_i \x=\frac12\cdot (\x^k+\x)^\TT \widetilde{\A}_i (\x^k-\x),
		\]
		and the third equality comes from the linearity of $\J(\x^k)$. For simplicity, we denote $\h^k=\x^k-\x$.  Substituting $\nabla_{\mathcal{T}_{k+1}}f(\x^k)$ into (\ref{eq1}), one has
		\begin{eqnarray*}
			&&\norm{\v^k_{\mathcal{T}_{k+1}}-\x_{\mathcal{T}_{k+1}}} \\
			&= & \Big\|\left(\I-\mu^k\J_{\mathcal{T}_{k+1}}\left(\x^k\right)^{\TT}\J_{\mathcal{T}_{k+1}}\left(\x^k\right)\right)\h^k_{\mcalT_{k+1}} +\frac{\mu^k}{2}\J_{\mathcal{T}_{k+1}}\left(\x^k\right)^{\TT}\J_{\mathcal{T}_k}(\h^k)\\
			&&\cdot\h^k_{\mcalT_k}+\mu^k\J_{\mathcal{T}_{k+1}}(\x^k)^{\TT}\J_{\mathcal{T}_k \backslash \mathcal{T}_{k+1}}(\x^k)\h^k_{\mcalT_k\backslash \mcalT_{k+1}}+ \frac{\mu^k}{m}\sum\limits_{i=1}^{m} \epsilon_i [\widetilde{\A}_i]_{\mathcal{T}_{k+1},:}\x^k\Big\|\\
			&\le&  \underbrace{\norm{\left(\I-\mu^k\J_{\mathcal{T}_{k+1}}\left(\x^k\right)^{\TT}\J_{\mathcal{T}_{k+1}}\left(\x^k\right)\right)\h^k_{\mcalT_{k+1}}}}_{I_1}  + \frac{\mu^k}{2} \underbrace{\norm{\J_{\mathcal{T}_{k+1}}\left(\x^k\right)^{\TT}\J_{\mathcal{T}_k}\left(\h^k\right)\h^k_{\mcalT_k}}}_{I_2}\\
			&&+\mu_k \underbrace{\norm{\J_{\mathcal{T}_{k+1}}(\x^k)^{\TT}\J_{\mathcal{T}_k \backslash \mathcal{T}_{k+1}}(\x^k)\h^k_{\mcalT_k\backslash \mcalT_{k+1}}}}_{I_3}+\mu^k \underbrace{\cdot C_0 \sigma \norm{\x^k}  \sqrt{ \frac{s \log(en/s)}m}}_{I_4}.
		\end{eqnarray*}
		where the inequality comes from the triangle inequality and Lemma \ref{le:epAi} with $\mcalS:=\mathcal{T}_{k+1} \cup \mbox{supp}(\x^k)$. 
		We will estimate $I_1, I_2$ and $I_3$ sequentially.  
		For $I_1$, by the assumption $\norm{\x^k-\x} \le \delta \norm{\x}$, we have
		\begin{equation}
			\label{boundxk}
			\left(1-\delta \right)\norm{\x} \le \norm{\x^k} \le \left(1+\delta\right)\norm{\x}.
		\end{equation} 	
		Applying Lemma \ref{lemma8} with  $\mcalS = \mcalT_{k+1}$, we obtain 
		\begin{eqnarray*} \label{eq:bound xk}
			\norm{\I-\mu^k\J_{\mathcal{T}_{k+1}}(\x^k)^{\TT}\J_{\mathcal{T}_{k+1}}(\x^k)}  & \le  & \max\{ 1-\mu^k(2-\delta) \norm{\x^k}^2,  \mu^k(4+\delta) \norm{\x^k}^2 -1 \}.
		\end{eqnarray*}
		%provided 
		%\begin{equation} \label{eq:mu2}
		% \mu^k \in \left( \frac{1}{3(1-\delta)^2 \norm{\x}^2},  \frac{2}{(4+\delta)(1+\delta)^2 \norm{\x}^2} \right).
		% \end{equation}
	This implies 
	\begin{equation*}
		I_1  \le \norm{\I-\mu^k\J_{\mathcal{T}_{k+1}}(\x^k)^{\TT}\J_{\mathcal{T}_{k+1}}(\x^k)} \| \h^k_{\mcalT_{k+1}}\| \le \max\{1-\mu^k\zeta_1, \mu^k\zeta_2-1 \} \| \h^k_{\mcalT_{k+1}}\|,
	\end{equation*}
	where $\zeta_1:=(2-\delta)(1-\delta)^2\norm{\x}^2$ and  $\zeta_2:=(4+\delta)(1+\delta)^2\norm{\x}^2$. 
	
	For the second term $I_2$, applying Lemma \ref{lemma8} with   $\mcalS=\mcalT_k$ and by (\ref{boundxk}), we have 
	\begin{equation*}
		\norm{ \J_{\mathcal{T}_k}(\h^k)} \le  \sqrt{\left(4+\delta\right)} \norm{\h^k} \le \sqrt{\left(4+\delta\right)} \delta \norm{\x}
	\end{equation*}
	and 
	\[
	\norm{ \J_{\mathcal{T}_{k+1}}\left(\x^k\right) } \le \sqrt{\left(4+\delta\right)} \norm{\x^k} \le \sqrt{\left(4+\delta\right)} (1+\delta) \norm{\x}.
	\]
	This gives
	\begin{equation*}
		I_2 =\norm{\J_{\mathcal{T}_{k+1}}(\x^k)^{\TT}\J_{\mcalT_k}(\h^k)\h^k_{\mcalT_k}}\le 2\zeta_3\norm{\h^k},
	\end{equation*}
	where $\zeta_3:=(2+\delta/2)(1+\delta)\delta\norm{\x}^2$. 
	
	For the third term $I_3$, it follows from \eqref{eq:sec} and (\ref{boundxk}) with $\p=\x^k$, $\mcalS = \mcalT_{k+1}$ and $\mcalT = \mcalT_{k}$ that
	\begin{equation*}
		\norm{\J_{\mathcal{T}_{k+1}}(\x^k)^{\TT}\J_{\mathcal{T}_k \backslash{\mathcal{T}_{k+1}}}(\x^k)} \le (1+\delta)\norm{\x^k}^2\le  (1+\delta)^3\norm{\x}^2,
	\end{equation*}
	which implies
	\begin{equation*}
		I_3=\norm{\J_{\mathcal{T}_{k+1}}(\x^k)^{\TT}\J_{\mathcal{T}_k \backslash \mathcal{T}_{k+1}}(\x^k)\h^k_{\mcalT_k\backslash \mcalT_{k+1}}} \le \zeta_4 \norm{\h^k_{\mcalT_k\backslash \mcalT_{k+1}}},
	\end{equation*}
	where $\zeta_4:=(1+\delta)^3\norm{\x}^2$.
	
	Finally, for the last term $I_4$, according to (\ref{boundxk}), we obtain
	\begin{equation*}
		I_4 \le C_0 \sigma (1+\delta)\norm{\x}  \sqrt{ \frac{s \log(en/s)}m}
	\end{equation*} 
	Combining $I_1$, $I_2$, $I_3$, $I_4$ together with \eqref{eq:2vxt},  we obtain
	\begin{eqnarray*}
		&&\norm{\u^k-\x} \\
		& \le &  \frac{\sqrt5+1}{2} \norm{\v^k_{\mathcal{T}_{k+1}}-\x_{\mathcal{T}_{k+1}}} \\
		&\le &    \frac{\sqrt5+1}{2} \cdot \Big( \max\{1-\mu^k\zeta_1, \mu^k\zeta_2-1 \} \| \h^k_{\mcalT_{k+1}}\| +\mu^k \zeta_3 \norm{\h^k} + \mu^k  \zeta_4 \norm{\h^k_{\mcalT_k\backslash \mcalT_{k+1}}} 
		\Big)  \\
		&&+ \frac{\sqrt5+1}{2} \cdot\mu^k \cdot C_0 \sigma (1+\delta)\norm{\x}  \sqrt{ \frac{s \log(en/s)}m}\\
		& \le &  \frac{\sqrt5+1}{2} \cdot \left( \sqrt2 \max\left\{ \max\{1-\mu^k\zeta_1, \mu^k\zeta_2-1 \}, \mu^k \zeta_4 \right\} + \mu^k \zeta_3 \right) \norm{\h^k} \\
		&&+ \frac{\sqrt5+1}{2} \cdot\mu^k \cdot C_0 \sigma (1+\delta)\norm{\x}  \sqrt{ \frac{s \log(en/s)}m}\\
		&:=& \rho \norm{\x^k-\x}+ \frac{\sqrt5+1}{2} \cdot\mu^k \cdot C_0 \sigma (1+\delta)\norm{\x}  \sqrt{ \frac{s \log(en/s)}m},
	\end{eqnarray*}
	where the third inequality follows from $\| \h^k_{\mcalT_{k+1}}\| + \| \h^k_{\mcalT_k\backslash \mcalT_{k+1}}\|  \le \sqrt 2  \norm{\h^k}$.  It is easy to check that when $0<\delta \le 0.01$,  there exist constants $\underline\mu, \overline\mu \in (0,1)$ such that when $\mu^k \in (\frac{\underline\mu}{\norm{\x}^2}, \frac{\overline\mu}{\norm{\x}^2})$, it holds $\rho\in (0,0.95)$. For instance, taking $\delta=0.01$, one obtains the feasible range for step size $\mu^k$ with
	\[
	\underline\mu =0.303, \quad \overline\mu=0.344.
	\]
	This completes the proof.
\end{proof}

With this in place, we are ready to prove Theorem \ref{Theorem3}.
\begin{proof}[Proof of Theorem \ref{Theorem3}]
	Due to the rotation invariance, without loss of generality we assume that $\norm{\x^0-\x} \le \norm{\x^0+\x}$. 
	Thus, the distance between the initial vector $\x^0$  and the target vector $\x$ is ${\rm dist}(\x^0,\x) = \norm{\x^0-\x}$. 
	Define $\mcalS_{k+1}=\mbox{supp}(\x^{k+1})$, $\mcalS^* = \mbox{supp}(\x)$ and $\mathcal{T}_k=\mbox{supp}(\x^k) \cup \mbox{supp}(\x)$. 
	To prove the convergence of the algorithm, observe that
	\begin{equation} \label{eq:upxkx}
		\begin{aligned}
			\norm{\x^{k+1}-\x}
			&=\left[\norm{\x^{k+1}_{\mcalS_{k+1}}-\x_{\mcalS_{k+1}}}^2+\norm{\x^{k+1}_{\mcalS^{c}_{k+1}}-\x_{\mcalS^c_{k+1}}}^2\right]^{\frac{1}{2}}\\ 
			&=\left[\norm{\x^{k+1}_{\mcalS_{k+1}}-\x_{\mcalS_{k+1}}}^2+\norm{\x_{\mcalS^{c}_{k+1}}}^2\right]^{\frac{1}{2}}.
		\end{aligned}
	\end{equation}
	For convenience, set
	\[
	\omega_1=\norm{\x^{k+1}_{\mcalS_{k+1}}-\x_{\mcalS_{k+1}}}, \qquad \omega_2=\norm{\x_{\mcalS^{c}_{k+1}}}.
	\]
	We then proceed to bound the above two terms. 
	Note that  $\textstyle\x_{\mcalS_{k+1}^{c}}$ is a subvector of $\x-\u^k$ due to the fact that $\mbox{supp}(\u^k)=\mcalS_{k+1}$. Here, $\u^k= \mathcal{H}_s(\x^k-\mu^k\nabla f(\x^k))$ is defined in Algorithm \ref{algorithm2:Iterative Algorithm}.  Under the condition of Theorem \ref{Theorem3}, it follows from Lemma \ref{lemma7} that  events \eqref{eq:concentration1} and \eqref{eq:concentration2} hold.
	Applying Lemma \ref{lemma9}, we obtain
	\begin{equation} \label{eq:ome2}
		\omega_2=\| \x_{\mcalS^{c}_{k+1}}\|  \le \norm{\x-\u^k} \le \rho \norm{\x^k-\x} + \frac{\sqrt5+1}{2} \cdot\mu^k \cdot C_0 \sigma (1+\delta)\norm{\x}  \sqrt{ \frac{s \log(en/s)}m},
	\end{equation}
	where $\rho \in (0,0.95)$. We next proceed to bound the term $\omega_1$.
	The update rule \eqref{eq:updarule} together with \eqref{eq:gradf} and \eqref{eq:defpk} gives
	\begin{eqnarray*}
		\omega_1 & = & \norm{\x^k_{\mcalS_{k+1}}-\x_{\mcalS_{k+1}}-\left[\J_{\mcalS_{k+1}}(\x^k)^{\TT}\J_{\mcalS_{k+1}}(\x^k)\right]^{-1}\left(\J_{\mcalS_{k+1}}(\x^k)^{\TT} \F(\x^k) \right.\right.\\
			&&\left.\left.
			-\J_{\mcalS_{k+1}}(\x^k)^{\TT}\J_{\mcalS_{k+1}^c}(\x^k)\x^k_{\mcalS^c_{k+1}}\right)}  \\
		&= & \norm{\left[\J_{\mcalS_{k+1}}(\x^k)^{\TT}\J_{\mcalS_{k+1}}(\x^k)\right]^{-1}\left( \J_{\mcalS_{k+1}}(\x^k)^{\TT}\J(\x^k)  \h^k - \J_{\mcalS_{k+1}}(\x^k)^{\TT} \F(\x^k) \right.\right.\\
			&& \left.\left. + \J_{\mcalS_{k+1}}(\x^k)^{\TT}\J_{\mcalS_{k+1}^c}(\x^k)\x_{\mcalS^c_{k+1}}\right) },
	\end{eqnarray*}
	where the second equality comes from the fact that 
	\[
	\J_{\mcalS_{k+1}}(\x^k)^{\TT}\J_{\mcalS_{k+1}}(\x^k) \h^k_{\mcalS_{k+1}}=\J_{\mcalS_{k+1}}(\x^k)^{\TT}\J(\x^k)  \h^k - \J_{\mcalS_{k+1}}(\x^k)^{\TT}\J_{\mcalS_{k+1}^c}(\x^k) \h^k_{\mcalS_{k+1}^c}.
	\]
	Here, $\h^k=\x^k-\x$.
	Note that $\mbox{supp}(\h^k) \subset \mathcal{T}_k$. 
	
	The fundamental theorem of calculus together with the fact $\F(\x)=0$ with $y_i=\x^T\A_i\x$, Newton-Leibniz formula $F(\x^k)-F(\x) = \int_0^1 \nabla F(\x(t))(\x^k-\x) dt$ and Lemma \ref{lemma8} gives
	\begin{equation}
		\begin{aligned}
			\label{eq:wme2} 
			\omega_1 &\le  \frac{1}{(2-\delta)\norm{\x^k}^2}\cdot \left\| \int_0^1 \J_{\mcalS_{k+1}}(\x^k)^{\TT} \big( \J_{\mathcal{T}_k}(\x^k) -\J_{\mathcal{T}_k}(\x(t)) \big)\h_{\mathcal{T}_k}^k dt \right.\\
			&\left.+ \J_{\mcalS_{k+1}}(\x^k)^{\TT}\J_{\mcalS_{k+1}^c}(\x^k)\x_{\mcalS^c_{k+1}}
			+\frac{\mu^k}{m}\sum\limits_{i=1}^{m}\epsilon_i[\widetilde{\A}_i]_{\mcalS_{k+1},:}\x^k \right\|\\
			&\le  \frac{1}{(2-\delta)\norm{\x^k}^2}\cdot \left( \int_0^1 \norm{\J_{\mcalS_{k+1}}(\x^k)  } \norm{\J_{\mathcal{T}_k}(\x^k) -\J_{\mathcal{T}_k}(\x(t))} \norm{\h^k} dt \right. \\
			& \left.+ \norm{\J_{\mcalS_{k+1}}(\x^k)^{\TT}\J_{\mcalS_{k+1}^c}(\x^k) }  \norm{\x_{\mcalS^c_{k+1}}  }+\norm{\frac{\mu^k}{m}\sum\limits_{i=1}^{m}\epsilon_i[\widetilde{\A}_i]_{\mcalS_{k+1},:}\x^k}\right)  \\
			&\le  \frac{1}{(2-\delta)\norm{\x^k}^2} \cdot \left( (4+\delta)\norm{\x^k} \norm{\h^k}^2 \int_0^1 (1-t)dt + (1+\delta)\norm{\x^k}^2 \norm{\x_{\mcalS^c_{k+1}} }\right.\\
			& \left.+\norm{ \frac{\mu^k}{m}\sum\limits_{i=1}^{m} \epsilon_i [\widetilde{\A}_i]_{\mcalS_{k+1},:} \x^k} \right)  \\
			&\le  \frac{4+\delta}{2(2-\delta)(1-\delta)\norm{\x}}  \norm{\x^k-\x}^2 +\frac{(1+\delta)\rho}{2-\delta} \norm{\x^k-\x}+\frac{C(\delta)\mu^k \sigma}{\norm{\x}}  \sqrt{ \frac{s \log(en/s)}m}, 
		\end{aligned}
	\end{equation}
	where we denote $\x(t)=\x+t(\x^k-\x)$. Here, the third inequality comes from Lemma \ref{lemma8},   the linearity of $\J(\x^k)$, and the fact that 
	\[
	\x^k-\x(t)=(1-t)(\x^k-\x),
	\]
	and the last inequality arises from \eqref{eq:ome2}, Lemma \ref{le:epAi} with $\mcalS:=\mathcal{S}_{k+1} \cup \mbox{supp}(\x^k)$ and the condition $\norm{\x^k-\x} \le \delta \norm{\x}$. Here, $C(\delta)>0$ is a constant depending on $\delta$.
	Putting everything together, we obtain 
	\begin{equation*}
		\norm{\x^{k+1}-\x}  \le \rho' \norm{\x^k-\x}+\frac{C \sigma}{\norm{\x}}  \sqrt{ \frac{s \log(en/s)}m},
	\end{equation*}
	where  $\rho' \in (0,1)$ by taking $\delta$ to be sufficiently small and $\mu^k \in (\frac{\underline\mu}{\norm{\x}^2}, \frac{\overline\mu}{\norm{\x}^2})$ for some universal constants $\underline\mu, \overline\mu \in (0,1)$. This completes the proof of the first part of Theorem \ref{Theorem3}.

	For the rest of Theorem \ref{Theorem3}, it is easy to see that when $\sigma = 0$, the aforementioned derivation remains valid. Note that \eqref{eq:upxkx} remains valid in noise-free case, namely, 
	\begin{equation}
		\norm{\x^{k+1}-\x}=\left[\norm{\x^{k+1}_{\mcalS_{k+1}}-\x_{\mcalS_{k+1}}}^2+\norm{\x_{\mcalS^c_{k+1}}}^2\right]^{\frac{1}{2}}.
	\end{equation}
	Similar to \eqref{eq:ome2}, the second term can be upper bounded by $\norm{\u^k-\x}$,
	\begin{equation*}
		\omega_2=\| \x_{\mcalS^{c}_{k+1}}\|  \le \norm{\x-\u^k} \le \rho \norm{\x^k-\x}.
	\end{equation*}
	For the first term, using the same approach as in \eqref{eq:wme2}, we have
	\begin{eqnarray}
		\omega_1 
		&\le  &\frac{1}{(2-\delta)\norm{\x^k}^2}\cdot \left( \int_0^1 \norm{\J_{\mcalS_{k+1}}(\x^k)  } \norm{\J_{\mathcal{T}_k}(\x^k) -\J_{\mathcal{T}_k}(\x(t))} \norm{\h^k} dt \right.\notag \\
		&& \left.+ \norm{\J_{\mcalS_{k+1}}(\x^k)^{\TT}\J_{\mcalS_{k+1}^c}(\x^k) }  \norm{\x_{\mcalS^c_{k+1}}  }\right)  \label{eq: ome_1'} \\
		&\le&  \frac{4+\delta}{2(2-\delta)(1-\delta)\norm{\x}}  \norm{\x^k-\x}^2 +\frac{(1+\delta)\rho}{2-\delta} \norm{\x^k-\x}, \notag
	\end{eqnarray}
	Putting the estimators $\omega_1$ and $\omega_2$ into \eqref{eq:upxkx}, we obtain
	\begin{equation} \label{eq:upmu}
		\norm{\x^{k+1}-\x} \le  \nu \norm{\x^k-\x}
	\end{equation}
	for some $\nu \in (\rho,1)$.
	This completes the proof of  the first half of the remaining part of Theorem \ref{Theorem3}. 
	
	Next, we proceed to finish the last part of the theorem's proof.
	According to \eqref{eq:upmu}, it holds
	\[
	\norm{\x^k-\x} \le \nu^k \norm{\x^0-\x} \le \nu^k \delta \norm{\x}.
	\]
	Let $k_1>0$ be the smallest integer such that it holds
	\begin{equation}
		\nu^k \delta  \norm{\x} \le x_{\min}.
		\label{eq: quadratic}
	\end{equation}
	We claim that $\mcalS^* \subseteq \mcalS_k$ for all $k\ge k_1$, where $\mcalS_k =\mbox{supp}(\x^k)$. Indeed, if there exists a $k_0>k_1$ such that $\mcalS^* \not\subseteq \mcalS_{k_0} $, then there would exist $ i \in \mcalS^*\backslash \mcalS_{k_0} \neq \emptyset$. This implies $\textstyle \norm{\x^{k_0}-\x} \ge |x_{i}| \ge x_{\min}$, which contradicts \eqref{eq: quadratic} that  $\textstyle \norm{\x^{k_0}-\x} \le \nu^{k_0}  \delta  \norm{\x} <  \nu^{k_1}  \delta  \norm{\x} \le  x_{\min}$. This gives the claim. 
	With this in place, for all $k\ge k_1$,   we have $\x_{\mcalS^c_{k+1}}=0$. It then follows from \eqref{eq:upxkx} and \eqref{eq: ome_1'} that 
	\[
	\norm{\x^{k+1}-\x}= \norm{\x^{k+1}_{\mcalS_{k+1}}-\x_{\mcalS_{k+1}}} \le  \frac{4+\delta}{2(2-\delta)(1-\delta)\norm{\x}}  \norm{\x^k-\x}^2 = \nu' \norm{\x^k-\x}^2 
	\]
	for all $k\ge k_1$, where $\nu':=\frac{4+\delta}{2(2-\delta)(1-\delta)\norm{\x}} $.
	Finally, by \eqref{eq: quadratic}, we observe 
	\[
	k_1 = \left\lfloor\frac{\log{(\delta \norm{\x}/x_{\min})}}{\log{(\nu^{-1})}} \right\rfloor + 1 \le c_1\log{( \norm{\x}/x_{\min})}+c_2,
	\]
	we complete the proof.
\end{proof}

\bibliographystyle{amsplain}

\begin{thebibliography}{10}
	
	\bibitem{akccakaya2015sparse}
	{\sc M.~Akccakaya and V.~Tarokh}, {\em Sparse signal recovery from a mixture of
		linear and magnitude-only measurements}, IEEE Signal Process. Lett., 22
	(2015), pp.~1220--1223.
	
	\bibitem{basu2000uniqueness}
	{\sc S.~Basu and Y.~Bresler}, {\em Uniqueness of tomography with unknown view
		angles},  IEEE Trans. Inform. Theory, 9 (2000), pp.~1094--1106.
	
	\bibitem{bendory2024sample}
	{\sc T.~Bendory and D.~Edidin}, {\em The sample complexity of sparse
		multireference alignment and single-particle cryo-electron microscopy}, SIAM J. Math. Data Sci., 6 (2024), pp.~254--282.
	
	\bibitem{bendory2023autocorrelation}
	{\sc T.~Bendory, Y.~Khoo, J.~Kileel, O.~Mickelin, and A.~Singer}, {\em
		Autocorrelation analysis for cryo-em with sparsity constraints: improved
		sample complexity and projection-based algorithms}, Proc. Natl. Acad. Sci., 120 (2023), p.~e2216507120.
	
	\bibitem{blumensath2008iterative}
	{\sc T.~Blumensath and M.~E. Davies}, {\em Iterative thresholding for sparse
		approximations}, J. Fourier Anal. Appl., 14 (2008),
	pp.~629--654.
	
	\bibitem{cai2021solving}
	{\sc J.-F. Cai, M.~Huang, D.~Li, and Y.~Wang}, {\em Solving phase retrieval
		with random initial guess is nearly as good as by spectral initialization},
	Appl. Comput. Harmon. Anal., 58 (2022), pp.~60--84.
	
	\bibitem{caiLiu}
	{\sc J.-F. Cai, H.~Liu, and Y.~Wang}, {\em Fast rank-one alternating
		minimization algorithm for phase retrieval}, J. Sci. Comput.,
	79 (2019), pp.~128--147.
	
	\bibitem{cai2023fast}
	{\sc J.-F. Cai, Y.~Long, R.~Wen, and J.~Ying}, {\em A fast and provable
		algorithm for sparse phase retrieval}, in Proc. Int. Conf. Learn. Represent.,
	(2023).
	
	\bibitem{cai2016optimal}
	{\sc T.~T. Cai, X.~Li, and Z.~Ma}, {\em Optimal rates of convergence for noisy
		sparse phase retrieval via thresholded wirtinger flow}, Ann. Stat., 44 (2016), pp.~2221--2251.
	
	\bibitem{Tcai}
	{\sc T.~T. CAI and A.~ZHANG}, {\em {ROP}: Matrix recovery via rank-one
		projections}, Ann. Stat., 43 (2015), pp.~102--138.
	
	\bibitem{Phaseliftn}
	{\sc E.~J. Cand{\`e}s and X.~Li}, {\em Solving quadratic equations via
		phaselift when there are about as many equations as unknowns}, 
	Found. Comput. Math., 14 (2014), pp.~1017--1026.
	
	\bibitem{candes2015phase}
	{\sc E.~J. Candes, X.~Li, and M.~Soltanolkotabi}, {\em Phase retrieval via
		wirtinger flow: Theory and algorithms}, IEEE Trans. Inf. Theory, 61 (2015), pp.~1985--2007.
	
	\bibitem{phaselift}
	{\sc E.~J. Candes, T.~Strohmer, and V.~Voroninski}, {\em Phaselift: Exact and
		stable signal recovery from magnitude measurements via convex programming},
	Commun. Pure Appl. Math., 66 (2013), pp.~1241--1274.
	
	\bibitem{chen2022error}
	{\sc J.~Chen and M.~K. Ng}, {\em Error bound of empirical $l_2 $ risk
		minimization for noisy standard and generalized phase retrieval problems},
	arXiv preprint arXiv:2205.13827,  (2022).
	
	\bibitem{chen2025solving}
	{\sc J.~Chen, M.~K. Ng, and Z.~Liu}, {\em Solving quadratic systems with
		full-rank matrices using sparse or generative priors}, IEEE Trans. Signal Process.,  (2025).
	
	\bibitem{TWF}
	{\sc Y.~Chen and E.~J. Cand{\`e}s}, {\em Solving random quadratic systems of
		equations is nearly as easy as solving linear systems}, Commun. Pure Appl. Math., 70 (2017), pp.~822--883.
	
	\bibitem{ChenFan}
	{\sc Y.~Chen, Y.~Chi, J.~Fan, and C.~Ma}, {\em Gradient descent with random
		initialization: Fast global convergence for nonconvex phase retrieval},
	Math. Program., 176 (2019), pp.~5--37.
	
	\bibitem{Dai}
	{\sc L.~Dai, X.~Lu, and J.~You}, {\em Grahtp: A provable newton-like algorithm
		for sparse phase retrieval}, arXiv preprint arXiv:2410.04034,  (2024).
	
	\bibitem{Dopico}
	{\sc F.~M. Dopico}, {\em A note on sin $\Theta$ theorems for singular subspace
		variations}, {BIT}, 40 (2000), pp.~395--403.
	
	\bibitem{Duchi}
	{\sc J.~C. Duchi and F.~Ruan}, {\em Solving (most) of a set of quadratic
		equalities: Composite optimization for robust phase retrieval}, Inf. Inference, 8 (2019), pp.~471--529.
	
	\bibitem{foucart2011hard}
	{\sc S.~Foucart}, {\em Hard thresholding pursuit: an algorithm for compressive
		sensing}, SIAM J. Numer. Anal., 49 (2011), pp.~2543--2563.
	
	\bibitem{gao2017phaseless}
	{\sc B.~Gao and Z.~Xu}, {\em Phaseless recovery using the Gauss--Newton
		method}, IEEE Trans. Signal Process., 65 (2017), pp.~5885--5896.
	
	\bibitem{gross2010quantum}
	{\sc D.~Gross, Y.-K. Liu, S.~T. Flammia, S.~Becker, and J.~Eisert}, {\em
		Quantum state tomography via compressed sensing}, Phys. Rev. Lett.,
	105 (2010), p.~150401.
	
	
	\bibitem{Fiber}
	{\sc J.~Harris}, {\em Algebraic Geometry: A First Course}, Springer Science $\&$ Business Media, 133 (2013).
	
	
	
	
	\bibitem{huangsiam}
	{\sc M.~Huang and Y.~Wang}, {\em Linear convergence of randomized kaczmarz
		method for solving complex-valued phaseless equations}, SIAM J. Imaging Sci., 15 (2022), pp.~989--1016.
	
	\bibitem{huang2019solving}
	{\sc S.~Huang, S.~Gupta, and I.~Dokmani{\'c}}, {\em Solving complex quadratic
		equations with full-rank random gaussian matrices}, in Proc. IEEE Int. Conf. Acoust. Speech Signal Process., (2019),
	pp.~5596--5600.
	
	\bibitem{huang2020solving}
	{\sc S.~Huang, S.~Gupta, and I.~Dokmani{\'c}}, {\em Solving complex quadratic
		systems with full-rank random matrices}, IEEE Trans. Signal Process., 68 (2020), pp.~4782--4796.
	
	
	\bibitem{jagatap2019sample}
	{\sc G.~Jagatap and C.~Hegde}, {\em Sample-efficient algorithms for recovering
		structured signals from magnitude-only measurements}, IEEE Trans. Inf. Theory, 65 (2019), pp.~4434--4456.
	
	
	\bibitem{jaganathan2016phase}
	{\sc K.~Jaganathan, Y.~C. Eldar, and B.~Hassibi}, {\em Phase retrieval: An
		overview of recent developments}, Optical Compressive Imaging,  (2016),
	pp.~279--312.
	
	
	
	\bibitem{macong}
	{\sc C.~Ma, K.~Wang, Y.~Chi, and Y.~Chen}, {\em Implicit regularization in
		nonconvex statistical estimation: Gradient descent converges linearly for
		phase retrieval and matrix completion}, Found. Comput. Math., 20 (2020), pp.~451--632.
	
	\bibitem{AltMin}
	{\sc P.~Netrapalli, P.~Jain, and S.~Sanghavi}, {\em Phase retrieval using
		alternating minimization}, IEEE Trans. Signal Process., 63 (2015),
	pp.~4814--4826.
	
	
	
	
	
	
	\bibitem{Rigollet}
	{\sc P.~Rigollet and A.~Tsybakov}, {\em Exponential screening and optimal rates of sparse estimation}, Ann. Statist., 39 (2011), pp.~731--771.
	
	
	
	\bibitem{shechtman2015phase}
	{\sc Y.~Shechtman, Y.~C. Eldar, O.~Cohen, H.~N. Chapman, J.~Miao, and
		M.~Segev}, {\em Phase retrieval with application to optical imaging: a
		contemporary overview}, IEEE Signal Process. Mag., 32 (2015),
	pp.~87--109.
	
	\bibitem{shechtman2011sparsity}
	{\sc Y.~Shechtman, Y.~C. Eldar, A.~Szameit, and M.~Segev}, {\em Sparsity based
		sub-wavelength imaging with partially incoherent light via quadratic
		compressed sensing}, Opt. Express, 19 (2011), pp.~14807--14822.
	
	\bibitem{shenjie}
	{\sc J.~Shen and P.~Li}, {\em A tight bound of hard thresholding}, J. Mach. Learn. Res., 18 (2018), pp.~1--42.
	
	\bibitem{soltanolkotabi2019structured}
	{\sc M.~Soltanolkotabi}, {\em Structured signal recovery from quadratic
		measurements: Breaking sample complexity barriers via nonconvex
		optimization}, IEEE Trans. Inf. Theory, 65 (2019),
	pp.~2374--2400.
	
	\bibitem{turstregion}
	{\sc J.~Sun, Q.~Qu, and J.~Wright}, {\em A geometric analysis of phase
		retrieval}, Found. Comput. Math., 18 (2018),
	pp.~1131--1198.
	
	\bibitem{szameit2012sparsity}
	{\sc A.~Szameit, Y.~Shechtman, E.~Osherovich, E.~Bullkich, P.~Sidorenko,
		H.~Dana, S.~Steiner, E.~B. Kley, S.~Gazit, T.~Cohen-Hyams, et~al.}, {\em
		Sparsity-based single-shot subwavelength coherent diffractive imaging},
	Nat. Mater., 11 (2012), pp.~455--459.
	
	\bibitem{tan2019phase}
	{\sc Y.~S. Tan and R.~Vershynin}, {\em Phase retrieval via randomized kaczmarz:
		theoretical guarantees}, Inf. Inference, 8
	(2019), pp.~97--123.
	
	\bibitem{vershynin2018high}
	{\sc R.~Vershynin}, {\em High-dimensional probability: An introduction with
		applications in data science}, Camb.
	Ser. Stat. Probab. Math. 47, Cambridge University Press, Cambridge, UK, (2018).
	
	\bibitem{waldspurger2018phase}
	{\sc I.~Waldspurger}, {\em Phase retrieval with random gaussian sensing vectors
		by alternating projections}, IEEE Trans. Inf. Theory, 64
	(2018), pp.~3301--3312.
	
	\bibitem{Waldspurger2015}
	{\sc I.~Waldspurger, A.~d’Aspremont, and S.~Mallat}, {\em Phase recovery,
		maxcut and complex semidefinite programming}, Math. Program., 149
	(2015), pp.~47--81.
	
	\bibitem{TAF}
	{\sc G.~Wang, G.~B. Giannakis, and Y.~C. Eldar}, {\em Solving systems of random
		quadratic equations via truncated amplitude flow}, IEEE Trans. Inf. Theory, 64 (2018), pp.~773--794.
	
	\bibitem{wang2017sparse}
	{\sc G.~Wang, L.~Zhang, G.~B. Giannakis, M.~Akccakaya, and J.~Chen}, {\em
		Sparse phase retrieval via truncated amplitude flow}, IEEE Trans. Signal Process., 66 (2017), pp.~479--491.
	
	\bibitem{wang2014phase}
	{\sc Y.~Wang and Z.~Xu}, {\em Phase retrieval for sparse signals}, Appl. Comput. Harmon. Anal., 37 (2014), pp.~531--544.
	
	\bibitem{wang2019generalized}
	{\sc Y.~Wang and Z.~Xu}, {\em Generalized phase retrieval: measurement number,
		matrix recovery and beyond}, Appl. Comput. Harmon. Anal., 47
	(2019), pp.~423--446.
	
	
	\bibitem{xu2024subspace}
	{\sc M.~Xu, D.~Dong, and J.~Wang}, {\em Subspace phase retrieval}, IEEE Trans. Inf. Theory, 70 (2024).
	
	
	
	
	\bibitem{Ye}
	{\sc F.~Ye and C. H.~Zhang}, {\em Rate minimaxity of the Lasso and Dantzig selector for the $\ell_1$ loss in $\ell_r$ balls}, J. Mach. Learn. Res., 11 (2010), pp. 3519–3540.
	
	
	
	\bibitem{xyuan}
	{\sc X.~Yuan, P.~Li, and T.~Zhang}, {\em Exact recovery of hard thresholding
		pursuit}, Adv. Neural Inf. Process. Syst., 29 (2016).
	
	\bibitem{zehni20203d}
	{\sc M.~Zehni, S.~Huang, I.~Dokmani{\'c}, and Z.~Zhao}, {\em {3D} unknown view
		tomography via rotation invariants}, in Proc. IEEE Int. Conf. Acoust. Speech Signal Process., (2020), pp.~1449--1453.
	
	\bibitem{ZYL}
	{\sc H.~Zhang, Y.~Chi, and Y.~Liang}, {\em Provable non-convex phase retrieval
		with outliers: Median truncatedwirtinger flow}, in Proc. Int. Conf. Mach. Learn., (2016), pp.~1022--1031.
	
	\bibitem{RWF}
	{\sc H.~Zhang, Y.~Liang, and Y.~Chi}, {\em A nonconvex approach for phase
		retrieval: Reshaped wirtinger flow and incremental algorithms}, J. Mach. Learn. Res., 18 (2017), pp.~5164--5198.
	
	\bibitem{Weyl}
	{\sc R.~ Horn and C.~ Johnson}, {\em Topics in matrix analysis}, Cambridge university press, (1994).


\end{thebibliography}

\end{document}